\documentclass[11pt]{article}
\usepackage{geometry}
\usepackage{xr-hyper}
\usepackage[colorlinks=true, allcolors=blue]{hyperref}
\usepackage{amsmath,amssymb,amsfonts,amsthm,mathtools,physics}
\usepackage{setspace,algorithm}
\usepackage[capitalize,noabbrev,nosort]{cleveref}
\crefname{assumption}{Assumption}{Assumptions}

\usepackage{natbib}
\usepackage{graphicx}
\usepackage{comment,xcolor,soul}
\usepackage{placeins}
\usepackage{enumerate}
\usepackage{enumitem}
\usepackage{booktabs}
\usepackage{adjustbox}
\usepackage{pdflscape}

\newtheorem{theorem}{Theorem}
\newtheorem{assumption}{Assumption}
\newtheorem{corollary}{Corollary}
\newtheorem{lemma}{Lemma}
\newtheorem{remark}{Remark}

\newcommand{\ind}{\perp\!\!\!\!\perp}

\renewcommand{\P}{\mathbb{P}}

\newcommand{\dto}{\leadsto}

\newcommand{\G}{\mathbb{G}}
\DeclareMathOperator{\Cov}{Cov}
\DeclareMathOperator{\logit}{logit}
\newcommand{\Dcar}{D_\mathrm{CAR}}
\newcommand{\Dcart}{D_{\mathrm{CAR},t}}
\newcommand{\trans}{^{\top}}
\DeclareMathOperator{\expit}{expit}

\DeclareMathOperator*{\argmax}{arg\,max} %

\usepackage{accents}
\newcommand\munderbar[1]{\underaccent{\bar}{#1}}

\usepackage{tikz}

\newcommand{\mathperiod}{\,\text{.}}
\newcommand{\mathcomma}{\,\text{,}}
\newcommand{\mathsemicolon}{\,\text{;}}

\crefalias{enumi}{assumption}   %
\crefalias{enumii}{assumption}  %

\allowdisplaybreaks

\title{Nonparametric Estimation of  Optimal Stochastic Just-In-Time Adaptive Interventions for Distal Outcomes}
\author{Jack M.\@ Wolf, Nandita Mitra, \& Ashkan Ertefaie}
\date{
	Division of Biostatistics, University of Pennsylvania\\[2ex]
	\today
}

\makeatletter
\newcommand*{\addFileDependency}[1]{%
	\typeout{(#1)}
	\@addtofilelist{#1}
	\IfFileExists{#1}{}{\typeout{No file #1.}}
}
\makeatother

\begin{document}
	
	\maketitle
	\onehalfspacing 
	
	\begin{abstract}
		Mobile and wearable technologies enable the delivery of just-in-time adaptive interventions (JITAIs)—interventions that adapt treatment delivery to an individual’s rapidly changing internal state and context in real-time, real-world settings. Estimating optimal JITAIs, however, remains challenging because these studies often involve dozens of decision points per individual, and existing methods can produce unstable and irregular estimators with substantial bias and slow convergence rates. 
		Advanced reinforcement learning approaches may be difficult to interpret and often target proximal, discounted outcomes rather than the distal end-of-study outcomes that define long-term success in many behavioral and clinical studies.  
		To address these challenges, we develop a nonparametrically efficient estimator of the regimen–response curve for distal outcomes under a class of stochastic policies and introduce a data-adaptive tilting procedure to stabilize estimation in settings with many decision points. 
		We show that the estimated regimen–response curve converges weakly to a Gaussian process, enabling simultaneous confidence bands, and we derive asymptotic theory for the optimizer of the curve, thereby enabling inference for the learned optimal stochastic policy. 
		These developments provide a unified framework for estimation, inference, and optimization of stochastic JITAIs for distal outcomes.
	\end{abstract}
	
	{
		\noindent\textbf{Keywords:} \textit{Causal Inference; Dynamic Treatment Regime; Highly Adaptive Lasso; Micro-Randomized Trial; Stochastic Intervention; Undersmoothing; Just-In-Time Adaptive Intervention}
	}
	
	\newpage
	\section{Introduction}
	\label{sec:introduction}

	Modern mobile and wearable technologies provide unprecedented opportunities to deliver just-in-time adaptive interventions (JITAIs; \citealp{nahum-shani_just--time_2017})---intervention designs that adapt to an individual’s rapidly changing internal state in real-time, real-world settings.
	These interventions are especially attractive in behavioral settings, where treatment effectiveness may depend strongly on momentary vulnerability, receptivity, and burden.
	Recent advances in micro-randomized trials (MRTs; \citealp{klasnja_microrandomized_2015}) have made it possible to collect intensive longitudinal data that can empirically inform the construction of JITAIs by repeatedly randomizing treatment assignment across many decision points \citep{liao_off-policy_2021,qian_microrandomized_2022}.
	A motivating example is Mobile Assistance for Regulating Smoking (MARS; \citealp{nahum-shani_mobile_2021}), a ten-day MRT conducted among people attempting to quit smoking that was designed to inform the development of a mobile health intervention for promoting real-time, real-world engagement in evidence-based self-regulatory strategies.
	There, participants could be randomized up to six times per day to receive no prompt, a prompt recommending a low-effort self-regulatory strategy, or a prompt recommending a higher-effort strategy. 
	This type of high-frequency randomization yields the intensive longitudinal data needed to develop empirically grounded JITAIs.
	
	Despite this promise, much of the existing literature analyzing MRT data has focused on proximal outcomes measured shortly after each decision point, such as short-term engagement in a target behavior or immediate response to treatment \citep{boruvka_assessing_2018, qian_estimating_2021}.
	However, in many behavioral health applications, outcomes measured farther in the future are also of substantive interest---particularly when the relevant processes involve habit formation or delayed treatment effects. 
	In such settings, proximal responses may not fully capture long-term success, even when they remain important intermediate targets.

	More broadly, using MRT data to construct JITAIs is a special case of learning an optimal treatment regime from longitudinal data.
	Classical work in longitudinal causal inference formalized counterfactual outcomes under time-varying interventions and established identification strategies for regime-specific mean outcomes from observed data \citep{robins_new_1986,robins_marginal_2000,murphy_marginal_2001}.
	Subsequent developments based on marginal structural models (MSMs) and related dynamic regime formulations extended these ideas to estimation and comparison of treatment rules over classes of candidate regimes \citep{joffe_model_2004, orellana_dynamic_2010}.
	These approaches provide a natural starting point for constructing JITAIs from MRT data, but the large number of decision points in many MRTs creates methodological challenges that are not well handled by standard longitudinal methods: inverse-probability weights can become highly variable, estimators may become irregular, and finite-sample bias can be substantial \citep{benkeser_doubly_2017}. 
	
	Beyond MSMs and related approaches, the policy-learning literature seeks to directly estimate an optimal rule rather than first estimating the expected outcome under each candidate rule, including outcome- and residual-weighted learning \citep{zhao_reinforcement_2009,zhou_residual_2017} and observational policy-learning methods \citep{athey_policy_2021}.
	While these approaches are appealing alternatives to the previous estimation-reliant methods, they were not developed for intensive longitudinal settings in which individuals repeatedly face treatment decisions over time.
	Reinforcement-learning methods are explicitly designed for sequential decision problems over long horizons, but they are commonly formulated to optimize cumulative or discounted rewards constructed from proximal outcomes observed after each decision point \citep{ertefaie_constructing_2018, luckett_estimating_2020}. 
	Such objectives are not well suited to our setting for two important reasons. 
	First, from scientific and clinical perspectives, distal outcomes, such as substance use at six months, are often more relevant than proximal outcomes, such as next-day substance use. 
	Second, choosing an appropriate discount rate, $\gamma$, which governs the relative importance of near-term versus long-term outcomes, can be challenging in mobile health settings because of limited empirical or practical guidance for this choice \citep{liao_off-policy_2021}.
	
	Motivated by these limitations, we propose new methods to construct an optimal JITAI with respect to a clinically meaningful distal outcome, without the need to pre-specify candidate decision rules.
	We focus on a class of stochastic policies parameterized by a finite-dimensional vector. 
	From a practical perspective, stochastic policies provide a useful alternative to dichotomized deterministic rules by allowing treatment delivery to remain flexible and by permitting decision-makers to incorporate additional considerations not explicitly modeled, such as side effects or insurance constraints.
	In behavioral settings, stochastic policies may allow occasional intervention in states where a deterministic rule would not intervene, helping to sustain participant engagement, reinforce healthy behaviors, and prevent disengagement over time.
	Moreover, stochastic policies help mitigate finite-sample positivity problems that arise when many decision points make deterministic or near-deterministic treatment paths poorly supported in the observed data \citep{van_der_laan_causal_2007,munoz_population_2012,haneuse_estimation_2013, kennedy_nonparametric_2019, mcclean_longitudinal_2025}. 
	However, with many decision points, they may still lead to unstable estimation. 
	To address this, we introduce a tilting approach that forms a weighted combination of the estimated optimal stochastic policy and the behavioral policy, that is, the policy generating the data. 
	
	This paper makes four main contributions.
	First, we develop a nonparametrically efficient estimator for the regimen--response curve, which characterizes the expected distal outcome under a class of stochastic policies.
	Our estimator is constructed to avoid estimating the large collection of augmentation terms that would otherwise arise in efficient estimation, thereby extending \cite{ertefaie_nonparametric_ipw_2023} to time-varying settings with stochastic policy learning. 
	This construction also avoids explicit derivation of the efficient influence function and mitigates the irregularity issues that can arise with doubly robust estimators \citep{benkeser_doubly_2017}.
	Second, we introduce a data-adaptive tilting procedure that addresses finite-sample instability by balancing bias and variance, and we establish conditions under which tilting is asymptotically negligible.
	Third, we show that the estimated regimen-response curve converges weakly to a Gaussian process, which enables the creation of simultaneous confidence bands for the curve.
	Finally, we derive the asymptotic properties of the optimizer of the regimen-response curve, corresponding to the optimal stochastic policy. 
	
	\section{Notation and Preliminaries}
	
	\subsection{Observed Data and Potential Outcomes}
	
	Suppose that we observe $n$ i.i.d.\@ observations $O=(S_{1},A_{1},\ldots,S_{T},A_{T},Y)\sim \P_0\in \mathcal{M}$, where $\mathcal{M}$ is some nonparametric model and $T$ denotes the number of decision points.
	At time $t \in \{1,\ldots,T\}$, $S_t\in \mathbb{R}^{d}$ denotes time-varying covariates, $A_t\in\{0,1\}$ is a binary treatment indicator, and $Y$ is the distal (end-of-study) outcome.
	We use overbar notation to denote complete histories: $\bar{S}_t\coloneq (S_1,\ldots,S_t)$ and $\bar{A}_t\coloneq(A_1,\ldots,A_t)$; additionally, let $\bar{A}\coloneq\bar{A}_T$ and $\bar{S}\coloneq\bar{S}_T$ be the full covariate and treatment histories.
	Let $ H_t\coloneq(\bar{S}_t,\bar{A}_{t-1})$ denote all information observed prior to assigning $A_t$.
	
	Let $Y^{\bar{a}}$ denote the potential outcome that would have been observed if, possibly contrary to reality, treatment were assigned as $\bar{a}$.
	Similarly, let $S_t^{\bar{a}_{t-1}}$ represent the covariates at time $t$ that would have been observed if treatment was assigned as $\bar{a}_{t-1}$ up until time $t-1$.
	Implicit in this notation, we assume that these covariates are not affected by future treatment decisions $a_t,\ldots,a_T$; i.e., $S_t^{\bar{a}_{t-1}}=S_t^{\bar{a}_T}$ for any $\bar{a}_T\in\{0,1\}^T$.
	We define full data as 
	$X \coloneq (S_1,\,\{S_t^{\bar a_{t-1}}: \bar a_{t-1}\in\{0,1\}^{t-1},\ t=2,\ldots,T\},\,
	\{Y^{\bar a}: \bar a\in\{0,1\}^T\}) \sim \P_X \in \mathcal{M}^F$ 
	where $\mathcal{M}^F$ is the nonparametric model for the full data $X$. 
	
	Denote the empirical distribution function as $\P_n$, and let $\P f = \int f(o)\dd\P(o)$ for any measurable function $f$ and distribution $\P$. 
	For any random variable, $X$, we let the calligraphic $\mathcal X$ denote its support.

	\subsection{Stochastic Adaptive Interventions and Our Causal Estimand}
	
	A stochastic adaptive intervention assigns $A_t$, at each time $t$, by drawing from a multinomial distribution in which the probabilities are functions of the current history $ H_t$.
	To operationalize policy optimization and avoid the curse of dimensionality, we focus on policies that only depend on the current state $S_t$.  
	Henceforth, we consider the parametric class $\{q_t^\theta:\theta\in\Theta\}$ for some compact index set $\Theta\subset \mathbb{R}^p$, in which, for $a\in\{0,1\}$
	\begin{equation*}
		\label{equation:q-theta}
		q_t^\theta(a,s)\coloneq 1 - \lvert \expit\{b(s)\trans\theta\}- a\rvert\mathcomma
	\end{equation*}
	for some fixed basis function $b: \mathbb{R}^d \to \mathbb{R}^p$.
	This gives the familiar logistic model: $q_t^\theta(1,s)=\expit\{b(s)\trans\theta\}$ and $q_t^\theta(0,s)=1-\expit\{b(s)\trans\theta\}$.
	Other smooth parametric forms are also admissible provided that they satisfy suitable regularity conditions.
	We restrict attention to genuinely stochastic policies whose treatment probabilities are bounded away from $0$ and $1$; \cref{sec:index-set} provides a practical construction of $\Theta$ from investigator-specified probability bounds.
	
	For each \(\theta\in\Theta\), let \(Y^\theta\) denote the potential distal outcome that would have been observed had the policy \(\{q_t^\theta\}_{t=1}^T\) been followed throughout. Our primary causal estimand is the regimen-response curve
	\[
	\psi_0(\theta)=\mathbb P_0(Y^\theta),
	\qquad \theta\in\Theta.
	\]
	This curve has two roles. 
	For a fixed \(\theta\), it gives the mean distal outcome under the corresponding stochastic intervention. 
	As a function of \(\theta\), it provides the basis for policy learning: an optimal stochastic policy can be defined as any maximizer of \(\psi_0(\theta)\) over \(\Theta\), or of a penalized version of this criterion when additional regularization is imposed. Thus, policy learning is reduced to estimation and maximization of a smooth finite-dimensional regimen-response curve. 
	Identification of \(\psi_0(\theta)\) from the observed data relies on the standard longitudinal causal assumptions stated next.
	
	\begin{assumption}
		\label{assumption:identification}
		Suppose that the following causal assumptions hold:
		\begin{enumerate}[label=(\alph*),ref=\theassumption(\alph*)]
			\item Consistency: For all $t = 2,\ldots,T$, if $\bar{A}_{t-1}=\bar{a}_{t-1}$, then $S_t=S_t^{\bar{a}_{t-1}}$; if $\bar{A}=\bar{a}$, then $Y=Y^{\bar{a}}$.    
			\item Ignorability/Sequential Randomization: $\{Y^{\bar{a}}, S_k^{\bar{a}_{k-1}} : k = t,\ldots,T\}  \ind A_t \mid  H_t$ for all $t=1,\ldots,T$.
			\item \label{assumption:identification:positivity} Strong Positivity: There exists some $\epsilon_\pi>0$ such that $\pi_{t,0}(A_t =  a \mid H_t) > \epsilon_\pi$ almost surely for all $a\in\{0,1\}$ and $t\in\{1,\ldots,T\}$.
		\end{enumerate}
	\end{assumption}
	
	\cref{assumption:identification} consists of standard conditions required to identify counterfactual outcomes under longitudinal interventions \citep{robins_new_1986,robins_marginal_2000,murphy_marginal_2001,orellana_dynamic_2010}. 
	Consistency links the observed data to the corresponding potential outcomes under the realized treatment history.
	This assumption may be violated, even in MRT settings, if one individual's treatment assignments affect the outcomes of another individual.
	Sequential randomization asserts that, conditional on the observed history $H_t$, treatment assignment at $A_t$ is effectively randomized and does not depend on future potential outcomes; this is satisfied by design in MRT settings.
	Positivity ensures that each treatment option has a nonzero probability at every decision point and guarantees absolute continuity of the counterfactual law under the target policy with respect to the observed data law.
	We note that only weak positivity is required for identification; however, later theoretical results will leverage strong positivity.
	Although strong positivity holds by design in MRTs, the large number of decision points can still induce finite-sample instability in inverse probability weights, motivating the use of stochastic interventions and the policy-tilting approach introduced in \cref{sec:tilting}.
	We note that protocols that deterministically restrict treatment at some decision points are handled in \cref{sec:ineligible}.
	Together, these assumptions ensure that the mean counterfactual outcome under any stochastic policy is nonparametrically identifiable using a change-of-measure argument.
	
	\section{Proposed Approach}
	
	We first introduce an $\alpha$-tilting device that stabilizes longitudinal products of inverse probability weights by shrinking early decision rules toward the observed treatment mechanism (\cref{sec:tilting}).
	We then formalize identification and the semiparametric structure of the resulting (tilted) regimen-response curve, highlighting the key nuisance quantities that arise in efficient estimation (\cref{sec:ipw-mapping}).
	Building on this structure, we propose an undersmoothed inverse probability weighted estimator that attains nonparametric efficiency while avoiding explicit estimation of the large collection of time- and policy-indexed outcome regressions (\cref{sec:uipw}).
	We conclude with practical implementation components: data-adaptive selection of the tilting level $\alpha_n$, investigator-driven elicitation of the feasible policy set $\Theta$, and a modification to accommodate protocol-driven randomization ineligibility (\cref{sec:adaptive-tilting,sec:index-set,sec:ineligible}).
	
	\subsection{Tilted Policies}
	\label{sec:tilting}

	Common inverse probability weighted (IPW) or doubly robust estimators of the regimen-response curve typically involve products of time-specific weights: $\prod_{t=1}^T q_t^{\theta,\alpha}/\pi_{t,n}$.
	When $T$ is large and the policy of interest substantially differs from the observed treatment mechanism, these products can be numerically unstable, resulting in bias and large variances in finite samples even when strong positivity holds by design.
	To mitigate the effect of such finite-sample positivity violations, we propose a policy-tilting device that shrinks earlier weights towards $1$ by interpolating between the target policy and the treatment assignment process observed in the data.
	For $\alpha\in[0,\infty)$, define the $\alpha$-tilted policy
	\begin{equation*}
		\label{equation:q-theta-alpha}
		q_t^{\theta,\alpha}(A_t, H_t) \coloneq (t/T)^\alpha q_t^\theta(A_t,S_t) + \{1-(t/T)^\alpha\}\pi_{t,0}(A_t\mid H_t)\mathperiod
	\end{equation*}
	For $\alpha=0$, the tilted policy reduces to the original policy of interest, $q_t^{\theta,0}=q_t^\theta$, and the corresponding estimand coincides with the scientific target.
	For $\alpha>0$, tilting defines a distinct stochastic intervention and hence a different counterfactual mean; in \cref{sec:adaptive-tilting} we select $\alpha$ to balance this estimand drift against the finite-sample variance reduction gained by stabilizing weights, and later results show that sufficiently small $\alpha$ yields asymptotically negligible drift.
	We note that although the scientific policy $q_t^\theta$ is restricted to depend only on the current state $S_t$, the tilted rule $q_t^{\theta,\alpha}$ depends on the full history $H_t$ through the observed assignment mechanism $\pi_{t,0}(\cdot\mid H_t)$.
	
	To see the stabilization effect, we can write the time-$t$ weight contribution for a generic observation as 
	$$W_{t,0}^{\theta,\alpha}(O)\coloneq \frac{q_t^{\theta,\alpha}(A_t, H_t)}{\pi_{t,0}(A_t\mid H_t)} = (t/T)^\alpha\frac{q_t^\theta(A_t,S_t)}{\pi_{t,0}(A_t\mid H_t)} + \{1-(t/T)^\alpha\}\mathcomma$$
	which is a convex combination of the usual IPW factor $q_t^\theta/\pi_{t,0}$ and $1$.
	Because $(t/T)^\alpha\to0$ as $\alpha\to\infty$ for each $t<T$, we have $W_{t,0}^{\theta,\alpha}(O)\to1$ for all early decision points $t<T$.
	Hence, the product weight $\prod_{t=1}^T W_{t,0}^{\theta,\alpha}(O)$ shrinks toward a product with fewer volatile factors. 
	
	Let $Y^{\theta,\alpha}$ denote the potential outcome under regimen $\{q_t^{\theta,\alpha}\}_{t=1}^T$ and 
	denote the $\alpha$-tilted regimen-response curve by the mapping $(\theta,\alpha)\mapsto \P_0(Y^{\theta,\alpha})$.
	By the previous logic, we have $\P_0(Y^{\theta,\alpha=0})=\P_0(Y^{\theta})$.
	
	\begin{remark}
		In the motivating context of MRTs, strong positivity holds by design, and the un-tilted estimand $\P_0(Y^{\theta,\alpha=0})$ is well defined and identified.
		We introduce $\alpha>0$ solely as a finite-sample stabilization device to mitigate numerical instability in products of inverse probability weights when the number of decision points is large, and our asymptotic results explicitly consider behavior as $\alpha\to0$.
		Importantly, policy tilting does not repair violations of strong positivity: if a target policy assigns a positive probability to treatment values outside the support of the observed treatment mechanism, the corresponding counterfactual mean remains unidentified for any finite $\alpha$.
		Formally, taking $\alpha=\infty$ collapses the policy to the observed treatment mechanism for all $t<T$, thereby avoiding support violations at earlier decision points.
		Even in this case, strong positivity is required at $T$.
	\end{remark}
	
	\subsection{Inverse Probability Weighting Mapping}
	\label{sec:ipw-mapping}
	
	Under the full data $X$, the mean counterfactual outcome under the tilted policy $q_t^{\theta,\alpha}$ is defined as $\Psi^F_{\theta,\alpha}(P_X) = P_X (Y^{\theta,\alpha})$ where $\Psi^F: \mathcal{M}^F \rightarrow \mathbb{R}$. 
	For a fixed policy parameter $\theta\in\Theta$ and tilting parameter $\alpha\ge 0$, define  the corresponding full data canonical gradient $D^F_{\theta,\alpha}(X,\Psi^F) = Y^{\theta,\alpha} - \Psi^F_{\theta,\alpha}(P_X)$.  
	A longitudinal inverse probability weighting mapping of $D^F_{\theta,\alpha}(X,\Psi^F)$ to the observed data is given by $U(O;\theta,\alpha,\Psi)=W^{\theta,\alpha}(P)(O)Y-\Psi(P)$ where
	\begin{equation*}
		W^{\theta,\alpha}(P)(O) = \prod_{t=1}^T
		\frac{(t/T)^\alpha q_t^\theta(A_t,S_t)+\{1-(t/T)^\alpha\}P(A_t\mid H_t)}
		{P(A_t\mid H_t)}\mathperiod    
	\end{equation*}
	Under  Assumption \ref{assumption:identification}, the target parameter $\Psi^F_{\theta,\alpha}(\P_X)$ is identified by 
	$$\Psi_{\theta,\alpha}(P)\coloneq P\{W^{\theta,\alpha}(P)(O)Y\}\mathcomma$$ 
	and $P\{U(O;\theta,\alpha,\Psi) \mid X \}=  D^F_{\theta,\alpha}(X,\Psi^F)$. 
	Let $\mathcal T_{\mathrm{CAR}}$ denote the nuisance tangent space corresponding to the sequential treatment mechanism. 
	Because the treatment mechanism factorizes over time, this space decomposes as $\mathcal T_{\mathrm{CAR}} = \mathcal T_{\mathrm{CAR},1} \oplus \cdots \oplus \mathcal T_{\mathrm{CAR},T}$ where $T_{\mathrm{CAR},t}$ is the nuisance tangent space generated by the treatment assignment mechanism at time $t$, for $t=1,2,\ldots,T$.
	The canonical gradient of $\Psi(P)$ is given by $D^\star(P)(O;\theta,\alpha)=U_\pi(O;\theta,\alpha,\psi_0)-D_{\mathrm{CAR}}(P)$,
	where $D_{\mathrm{CAR}}(P)$ is the projection of $U_\pi$ onto $\mathcal T_{\mathrm{CAR}}$. 
	Owing to the above decomposition,
	$D_{\mathrm{CAR}}(P)=\sum_{t=1}^T D_{\mathrm{CAR},t}(O;\eta_0,\theta,\alpha)$ for nuisance parameters $\eta_0$. 
	As shown in \cref{lemma:dcar-expectation}, the  nuisance parameter $\eta_0$ includes the regression functions 
	$\P_0\{(\prod_{j>t}W^{\theta,\alpha}_{j,0})Y \mid  A_t,H_t\}$.
	This representation yields the efficient influence function for the regimen-response curve in the longitudinal stochastic intervention model. 
	
	In observational settings under a nonparametric model, both the propensity scores and the regression functions generally must be estimated. 
	When these nuisance functions are estimated nonparametrically, inconsistency in any of them can lead to an irregular estimator with substantial bias. 
	This difficulty becomes especially severe when the number of decision points is large \citep{benkeser_doubly_2017,pham_nonparametric_2025}. 
	This is particularly challenging for the regression functions, which must be estimated for each $\theta\in\Theta$ whenever $t<T$.
	In MRTs, the propensity scores are known by design, so the irregularity issue does not arise. 
	However, inconsistent estimation of the regression functions can still lead to inefficiency.
	This motivates our proposed undersmoothed inverse probability weighted estimator, which only requires estimation of $T$ propensity scores but still is $\sqrt{n}$-consistent and nonparametric efficient.
	
	\subsection{Undersmoothed Inverse Probability Weighting Estimator}
	\label{sec:uipw}
	
	The efficient influence function characterization in \cref{sec:ipw-mapping} suggests that nonparametric efficient estimation of the regimen-response curve generally involves a large family of nuisance functions indexed by both time and policy, $\{\mu_{t,0}^{\theta,\alpha}: t<T, \theta\in\Theta\}$, which becomes impractical when $T$ is large or when $\Theta$ is not discrete.  
	We therefore construct an undersmoothed inverse probability weighted (UIPW) estimator that achieves nonparametric efficiency without explicitly estimating these policy-indexed outcome regressions, thereby offering a practically feasible approach for intensive longitudinal settings.
	The key idea is to estimate  the time-specific propensity scores flexibly via the highly adaptive lasso (HAL; \citealp{benkeser_highly_2016}) with deliberate undersmoothing so that the resulting fits approximately solve a rich collection of empirical score equations.
	This renders the bias asymptotically negligible while preserving the simplicity of a weighted-mean estimator.
	We proceed by summarizing the estimator and then describe an implementable procedure to select the undersmoothing level.
	
	The HAL is a nonparametric regression technique that estimates functions via a sparse linear combination of indicator basis functions indexed by observed covariate values.
	The estimator minimizes an empirical risk while constraining the $L_1$-norm of its coefficients by $\lambda$, which controls the total sectional variational norm of the fitted function.
	Under mild conditions, HAL estimators achieve fast convergence rates (roughly $n^{-1/3}$) for a wide class of c\`adl\`ag functions without requiring smoothness assumptions.
	Full technical details, including the precise definition of the basis and norm, are provided in \cref{sec:hal-details}.
	
	When the regimen response curve is estimated using IPW, the asymptotic expansion admits a remainder of the form
	\begin{equation}
		\label{eqn:ipw-remainder-generic}
		\sum_{t=1}^T \P_n\big[f_{t}(O;\theta,\alpha,\bar\pi_{t,n},\mu_{t,0}^{\theta,\alpha})\{A_t-\pi_{t,n}(1\mid H_t)\}\big]\mathcomma  
	\end{equation}
	where the function $f_{t}$ involves the estimated weights from previous decision points $i<t$ and the true outcome regression function looking forward following a particular stochastic intervention at $j\ge t$.
	Each term in the sum resembles $\P_n \Dcart$ with the estimated propensity score plugged in place of the true propensity score for all $s\le t$. 
	If the treatment model is estimated using a correctly specified parametric model, this term is $O_p(n^{-1/2})$ and is absorbed into the estimator's influence function under the semiparametric model.
	However, if nonparametric methods are used, this term is generally not even $O_p(n^{-1/2})$, preventing valid inference.
	
	\cite{ertefaie_nonparametric_ipw_2023} showed that undersmoothing the HAL---selecting a sufficiently large $L_1$ bound that is typically greater than the cross-validated bound---ensures that the fitted model is rich enough to approximately solve these empirical score equations, rendering the term asymptotically negligible.
	Intuitively, this occurs because the HAL approximately solves score functions of the form $\P_n[\varphi(O)\{A_{t,n}-\pi_{t,n}(1\mid H_t)\}]$ for the included basis functions $\varphi$.
	If the model is sufficiently undersmoothed, a rich enough collection of basis functions is retained.
	Hence, if $f_{t,n}$ can be approximated by these basis functions within an $n^{-1/4}$ neighborhood, the HAL will approximately solve the bias term, rendering it asymptotically negligible. 
	\cref{sec:undersmooth} contains full details regarding undersmoothing and formal statements of the previous conditions.
	
	While the undersmoothing condition provides a theoretical target for eliminating the leading remainder term, it leaves open the practical choice of tuning parameters, which must balance bias control against weight instability.
	Selecting too large of a penalty can induce oversmoothing and induce bias, whereas too small of a penalty can inflate finite-sample variance, cause numeric instability, and prevent convergence at the $n^{-1/4}$ rate.
	
	One approach to selection is to directly target a cross-validated estimate of the leading IPW remainder (\cref{eqn:ipw-remainder-generic}). 
	However, this would require estimating the large set of nuisance outcome regression functions $\{\mu_{t,0}^{\theta,\alpha}:t<T,\theta\in\Theta\}$, and doing so fully nonparametrically is generally impractical.  
	To retain this remainder-targeting motivation while avoiding this computational burden, we consider a pragmatic 
	selector that plugs in a low-dimensional working approximation, $\mu_{t,n}^{\theta,\alpha}$.
	The bias is then approximated by 
	\begin{equation*} 
		\label{eqn:dcar-bias-cv-approx}
		R_{t,n,v}(\lambda;\theta)\coloneq \P_{n,v}^1  f_t(O;\theta,\alpha,\bar\pi^{(v)}_{t-1,n,\lambda_{t,n,\text{CV}}},\mu_{t,n}^{(v)},\pi_{t,n,\lambda_{t,v,\text{CV}}}^{(v)}) \{A_t-\pi_{t,n,\lambda}^{(v)}(A_t\mid H_t)\}\mathcomma    
	\end{equation*}
	where $v=1,\ldots,V$ indexes folds, $\P_{n,v}^1$ denotes the empirical measure on the testing data from the $v$th fold.
	We then consider two selectors that minimize different norms of this bias approximation:
	\begin{equation}
		\label{eqn:linfty-dcar-criterion-body}
		\lambda_{t,n,L_\infty}\coloneq \arg\min_{\lambda} V^{-1} \sum_{v=1}^V \max_{\theta\in\Theta_\text{test}} \lvert R_{t,n,v}(\lambda;\theta)\rvert\mathcomma
	\end{equation}
	and
	\begin{equation}
		\label{eqn:l2-dcar-criterion-body}
		\lambda_{t,n,L_2}\coloneq \arg\min_{\lambda} V^{-1} \lvert\Theta_\text{test}\rvert^{-1} \sum_{v=1}^V \sum_{\theta\in\Theta_\text{test}} R_{t,n,v} (\lambda;\theta)^2\mathcomma
	\end{equation}
	where $\Theta_\text{test}\subset\Theta$ is a holdout set of $\theta$ not used to train the working model $\mu_{t,n}$.
	We refer to both $\lambda_{t,n,L_\infty}$ and $\lambda_{t,n,L_2}$ as $\Dcar$-based selectors because they minimize the norm of a term that resembles $\Dcar$. 
	The $L_\infty$ criterion is motivated by a desire to uniformly control \cref{eqn:ipw-remainder-generic} over $\Theta$ to support curve-level inference and the construction of simultaneous confidence bands. 
	The $L_2$ criterion averages the remainder over $\Theta$ and is typically less sensitive to isolated, high-variance points, often yielding improved finite-sample stability. 
	
	Additionally, we consider a score-based selector that depends only on the propensity score.
	That is, we pick $\lambda$ by minimizing a cross-validated aggregated standardized score residual over the set of active HAL basis functions.
	For each $t$ and $v$, let 
	$\mathcal J_t^{(v)}(\lambda)$ denote the set of HAL basis functions retained (i.e., with nonzero basis coefficients) in the fitted $\pi_{t,n,\lambda}^{(v)}$.
	We select $\lambda$ by
	\begin{equation}
		\label{eqn:score-criterion-body}
		\lambda_{t,n,\text{score}} = \arg\min_{\lambda} \frac1V \sum_{v=1}^V \frac{1}{\lVert \beta^{(v)}_{t,n,\lambda}\rVert_1} \sum_{\varphi\in\mathcal J_t^{(v)}(\lambda)}\Bigg\lvert \P_{n,v}^{1} \Bigg\{\varphi(H_t) \frac{A_t-\pi_{t,n,\lambda}^{(v)}(1\mid H_t)}{\pi_{t,n,\lambda}^{(v)}(1\mid H_t)} \Bigg\} \Bigg\rvert\mathcomma
	\end{equation}
	where $\lVert \beta_{t,n,\lambda}\rVert_1$ is the fitted HAL coefficient $L_1$ norm.
	This criterion is a computationally tractable proxy to targeting the remainder term that avoids estimating $\mu_{t,n}^{\theta,\alpha}$.
	It leverages the fact that the propensity scores enter the remainder terms through inverse-weight factors; hence, minimizing aggregated standardized score residuals balances remainder control against weight instability.
	
	Finally, following \cite{pham_nonparametric_2025}, we additionally impose a mild cap on oversmoothing to control the rate at which the basis grows and ensure that the HAL converges at the $n^{-1/4}$ rate.
	Full details of the $\Dcar$ and score based undersmoothing selectors are provided in \cref{sec:undersmooth-practice}.
	
	Under this framework, let $\{\pi_{t,n}\}_{t=1}^T$ be the undersmoothed propensity score estimates and let 
	\begin{equation*}
		W^{\theta,\alpha}_n=\prod_{t=1}^T\frac{(t/T)^\alpha q_t^\theta(A_t,S_t)+\{1-(t/T)^\alpha\}\pi_{t,n}(A_t\mid H_t)}{\pi_{t,n}(A_t\mid H_t)}.    
	\end{equation*}
	The UIPW regimen-response curve estimator is then $\psi_n(\theta;\alpha) \coloneq \P_n\{ (\prod_{t=1}^T W^{\theta,\alpha}_{t,n})Y\}$.
	Let $\psi_n(\theta)\coloneq\psi_n(\theta;\alpha=0)$ denote the untilted estimator.
	
	Estimation of $\psi_0(\theta;\alpha)$ enables the construction of policies by searching over $\Theta$ to identify a $\theta$ that maximizes the expected distal outcome.
	However, we note that global  maximization over the unrestricted policy class $\mathbb R^p$ typically favors near-deterministic interventions that push treatment probabilities toward $0$ or $1$ in regions of the state space where treatment is beneficial or harmful.
	Consequently, under a bounded-probability feasible set $\Theta$, the optimizer can behave like a constrained projection: along many directions in the parameter space the estimated mean function is approximately monotone, so the maximizer is attained at the feasible point ``closest'' to the unconstrained maximizer, which generally lies on $\partial\Theta$.
	Because boundary solutions can be numerically unstable and complicate optimizer-based inference, we maximize a penalized value function \citep{gao_asymptotic_2025, luckett_estimating_2020}:
	\begin{equation*}
		V_n(\theta;\alpha) \coloneq \psi_n(\theta;\alpha)-\mathcal P(\theta)\mathperiod
	\end{equation*}
	In particular, we will focus on an inverse-distance barrier-weighted ridge penalty: 
	\begin{equation*}
		\mathcal P(\theta)=  \frac{\gamma}{\inf_{\theta' \in \partial\Theta} d(\theta,\theta')}\|\tilde\theta\|_2^2\mathcomma
	\end{equation*}
	for some fixed ridge penalty $\gamma>0$ and distance function $d$.
	Here, $\theta$ is partitioned as $(\theta_0,\tilde\theta)$ when $b(s)$ includes an intercept in the first coordinate (and $\tilde\theta=\theta$ otherwise) so as to avoid penalizing the intercept, and $\partial\Theta$ denotes the boundary of $\Theta$.
	This penalty diverges to $+\infty$ as $\theta$ approaches $\partial\Theta$, so the penalized value function assigns value $-\infty$ on the boundary, thereby ensuring an interior maximizer.
	We note that this penalty works in parallel with the index set $\Theta$: careful elicitation of $\Theta$ ensures a truly stochastic set of admissible policies, $\mathcal P$ prevents a boundary maximizer within this set, and supports inference.
	
	The corresponding population value function is $V_0(\theta;\alpha)\coloneq \psi_0(\theta;\alpha)-\mathcal P(\theta)$.
	We denote the optimal policy by $\theta_0^\star(\alpha)\coloneq\argmax_{\theta\in\Theta} V_0(\theta;\alpha)$, and consider the estimator $\theta_n^\star(\alpha)\coloneq\argmax_{\theta\in\Theta}V_n(\theta;\alpha)$.
	Let $\theta_0^\star=\theta_0^\star(\alpha=0)$ and $\theta_n^\star=\theta_n^\star(\alpha=0)$ denote the untilted optimizers. 
	
	\subsection{Data-Adaptive Tilting}
	\label{sec:adaptive-tilting}
	
	The optimal amount of tilting will depend on a variety of conditions including the sample size, number of decisions, and distance from the optimal policy to the observed policy.
	If $T$ is small, $n$ is large, and the optimal policy $q_t^{\theta_0^\star(0)}(A_t,S_t)$ approximates the observational policy $\pi_{t,0}(A_t\mid H_t)$, the weights will be relatively stable near $\theta_0^\star(0)$ without any tilting.
	Instead, if the optimal policy is far from the observed policy, the untilted weights may be volatile at $\theta_0^\star(0)$.

	As $\alpha$ varies, tilting trades off two competing goals.
	For $\alpha$ near zero, the tilted policy remains close to the untilted scientific intervention, so the corresponding target value $V_0(\cdot;\alpha)$ is expected to be close to $V_0(\cdot;0)$.
	This limits estimand drift and helps keep the learned optimizer $\theta_n^\star(\alpha)$ close to the policy that is optimal for the untilted target.
	On the other hand, increasing $\alpha$ shrinks early weight factors toward one, reducing the variability and numerical instability induced by longitudinal products of inverse probability weights and thereby stabilizes $\psi_n(\cdot;\alpha)$ and, in turn, $V_n(\cdot;\alpha)$.
	
	We operationalize this tradeoff by selecting $\alpha$ to minimize the corresponding mean squared error (MSE) of the value function at the optimal policy. By setting the derivative of the asymptotic MSE to zero, we obtain
	\[2\P_0\bigl\{\bar V_n(\alpha) - \bar V_0(\alpha=0)\bigr\} \P_0\left\{ \pdv\alpha \bar V_n(\alpha)\right\} + 2n^{-1}\bar\sigma_0(\alpha)\pdv\alpha \bar\sigma_0(\alpha)=0\mathcomma  \]
	where $\bar V_\cdot(\alpha)\coloneq V_\cdot\{\theta_\cdot^\star(\alpha);\alpha\}$, is the $\theta$-profiled value function for $\cdot\in\{0,n\}$, and $\bar\sigma^2_0(\alpha)\coloneq \P_0[\phi_\psi\{O;\eta_0, \theta_0^\star(\alpha),\alpha\}^2]$.
	We know that the optimal (up to a constant) bias-variance trade-off implies  $\P_0\{\bar V_n(\alpha) - \bar V_0(\alpha=0)\} \approx n^{-1/2}\bar\sigma_0(\alpha)$ \citep{van_der_laan_cv-tmle_2018}. 
	Hence, the optimal $\alpha$ will also solve one of the following equations
	$\P_0\left\{\pdv\alpha \bar V_n(\alpha)\right\}\pm c n^{-1/2} \pdv\alpha  \bar\sigma_0(\alpha) =0$, 
	where $c$ is a positive constant. 
	We approximate these expressions via $\pdv\alpha \bar V_n(\alpha)\pm c n^{-1/2}\pdv\alpha  \bar\sigma_n(\alpha)$ where $\bar\sigma_n(\alpha)$ is a consistent estimator for $\bar\sigma_0(\alpha)$.
	The choice of which solution yields the minimizer depends on the sign of $\pdv\alpha \bar V_n(\alpha)$; because $\bar V_0(\alpha)$ is generally decreasing in $\alpha$, 
	we propose selecting
	\begin{equation}
		\label{eqn:alpha-n-def}
		\alpha_n \coloneq \argmax_{\alpha\ge0} \bar{V}_n(\alpha)-\kappa n^{-1/2}\bar\sigma_n(\alpha)\mathcomma
	\end{equation}
	for some fixed $\kappa>0$. 
	This is equivalent to selecting the value at which $\bar V_n(\alpha)$ attains the highest lower confidence interval bound at level $\Phi(\kappa)-\Phi(-\kappa)$ where $\Phi$ is the standard normal distribution function.
	In practice, the lower-bound criterion can occasionally return very small $\alpha_n$ corresponding to negligible stabilization relative to $\alpha=0$.
	To avoid introducing an unnecessarily tilted estimand in such cases---and to align with asymptotic comparison to the untilted target developed later---we apply a simple clipping rule that sets $\alpha_n$ to zero whenever it falls below a vanishing threshold:
	\begin{equation*}
		\label{eqn:alpha-n-clipped}
		\tilde\alpha_n = \begin{cases}0 & \alpha_n < n^{-1/4} \\ \alpha_n & \alpha_n\ge n^{-1/4}\end{cases}\mathperiod
	\end{equation*}
	Specifically, in \cref{sec:theory} we show that, under the condition that the population profiled value $\bar V_0(\alpha)$ is maximized at $\alpha=0$, using $\tilde\alpha_n$ yields the asymptotic limit as the untilted estimator.
	
	\begin{remark}
		\label{remark:v-bias}
		The lower-bound rule $\bar V_n(\alpha)-\kappa n^{-1/2}\bar\sigma_n(\alpha)$ is designed to select a tilt level that yields a high and stable estimated value.
		In asymptotic results developed later, we adopt the working assumption that the population profiled value $\bar{V}_0(\alpha)\coloneq V_0\{\theta_0^\star(\alpha);\alpha\}$ is maximized at $\alpha=0$.
		Under this condition, $\bar V_n(\alpha)$ can be viewed heuristically as a proxy for the bias incurred by tilting away from the untilted estimand.
		This condition is not automatic and can fail for two conceptually different reasons:
		\begin{enumerate}
			\item \textbf{Limited policy class.} If the observational policy is near-optimal but the class $\{q^\theta:\theta\in\Theta\}$ cannot approximate it well---either because $\Theta$ is too restrictive or the basis does not include enough covariates---then increasing $\alpha$ moves the estimand towards a policy that may be better than anything attainable at $\alpha=0$ within the class.
			\item \textbf{Tilt-penalty interaction.} Even when the best untilted policy lies in the class, the assumption may fail due to an interaction between tilting and penalization. 
			As $\alpha$ increases, $\psi_0(\theta;\alpha)$ becomes less sensitive to $\theta$ because the observed component, invariant to $\theta$, receives more weight. 
			As a result, many $\theta$ can yield similar tilted outcomes while differing substantially in penalty.
			The optimizer may then move toward smaller-normed coefficients to reduce the penalty, thereby increasing $\bar V_0(\alpha)$ even though there is no improvement in $\psi_0\{\theta_0^\star(\alpha);\alpha\}$.
		\end{enumerate}
		When either phenomenon occurs, $\bar V_n(\alpha)$ should be interpreted primarily as an empirical criterion for stable policy learning rather than as a monotone proxy for drift relative to $\alpha=0$.
		In MRT settings with constant treatment probabilities, these issues are largely mitigated when the policy class includes an intercept and the penalty excludes it. 
		There, the observational policy is representable by an intercept-only policy and incurs no penalty.
	\end{remark}

	\section{Theoretical Properties}
	\label{sec:theory}
	
	We proceed from efficiency characterization to inference for both the regimen-response curve and the induced optimizer.
	We first establish identification of the tilted counterfactual mean as an observed-data functional expressed through longitudinal inverse probability weights (\cref{lemma:identification-expectation}).
	We then derive the canonical gradient of $\psi_0(\theta;\alpha)$ under the nonparametric model (\cref{lemma:dcar-expectation}).
	This canonical gradient implies the nonparametric efficiency bound and the nuisance structure associated with efficient estimation.
	
	Building on this characterization, we show that the proposed UIPW estimator is regular and asymptotically linear, and we establish both pointwise and uniform (over $\Theta$) limiting results for the estimated regimen-response curve (\cref{theorem:uipw-asym}).
	We then show that, for appropriately vanishing tilting sequence $\alpha_n$, the resulting estimator is asymptotically equivalent to the untilted estimator targeting $\psi_0(\theta;0)$ (\cref{corollary:uipw-asymp-alpha}).
	We then present analogous results for the learned optimizer $\theta_n^\star(\alpha)$ and its counterpart under vanishing tilting (\cref{theorem:opt-theta-asym} and \cref{corollary:opt-theta-asym-alpha}).
	Finally, we study the behavior of the data-adaptive tilting selector and provide sufficient conditions for it to attain the requisite rate to support the previous corollaries (\cref{lemma:alpha-clip-op}).
	
	\subsection{Identification and Canonical Gradient}
	
	This subsection formalizes the observed-data parameter underlying our proposed estimator and characterizes its nonparametric efficiency bound. 
	Under \cref{assumption:identification}, the tilted counterfactual mean $\P_0(Y^{\theta,\alpha})$ is identifiable from the observed data law. 
	While identification can be expressed via either outcome regression or inverse probability weighting, we focus on an IPW change-of-measure representation because it yields a convenient parameter mapping for deriving the canonical gradient and motivates the efficient IPW construction used throughout. 
	
	\begin{lemma}
		\label{lemma:identification-expectation}
		Under \cref{assumption:identification} the causal parameter $\P_0(Y^{\theta,\alpha})$ is identified by
		$\psi_{0}(\theta;\alpha)\coloneq\P_0(W_0^{\theta,\alpha}Y)$, where $W_0^{\theta,\alpha}=\prod_{t=1}^T W_{t,0}^{\theta,\alpha}$ is the product of time-specific weights of the form 
		$$W_{t,0}^{\theta,\alpha}=\frac{(t/T)^\alpha q_t^\theta(A_t, H_t)+\{1-(t/T)^\alpha\}\pi_{t,0}(A_t\mid H_t)}{\pi_{t,0}(A_t\mid H_t)}
		\mathperiod$$
	\end{lemma}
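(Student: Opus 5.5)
The plan is the standard sequential-randomization (g-formula) argument, adapted to a stochastic, history-dependent policy. Formally, $Y^{\theta,\alpha}$ is generated by drawing $A_t^\star \sim q_t^{\theta,\alpha}(\cdot, H_t^\star)$ sequentially, where $H_t^\star=(\bar S_t^{\bar A^\star_{t-1}},\bar A^\star_{t-1})$, and setting $Y^{\theta,\alpha}=Y^{\bar A^\star}$. The draws use independent external randomization, so that $A_t^\star$ is independent of the full data $X$ given $H_t^\star$. Conditioning on the full data and summing over treatment paths gives
\[
\P_0(Y^{\theta,\alpha})=\sum_{\bar a\in\{0,1\}^T}\P_0\Bigl\{Y^{\bar a}\prod_{t=1}^T q_t^{\theta,\alpha}(a_t,\bar S_t^{\bar a_{t-1}},\bar a_{t-1})\Bigr\}.
\]
One subtlety is that $q_t^{\theta,\alpha}$ involves $\pi_{t,0}$. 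Since $\pi_{t,0}$ is a fixed function of the history, it is simply evaluated at the counterfactual history, and the expression is well defined.

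The main step is to show that, for each fixed $\bar a$,
\[
\P_0\Bigl\{Y^{\bar a}\prod_{t} q_t^{\theta,\alpha}(a_t,\cdot)\Bigr\}=\P_0\Bigl\{\mathbf 1(\bar A=\bar a)\,Y\prod_t \frac{q_t^{\theta,\alpha}(a_t,H_t)}{\pi_{t,0}(a_t\mid H_t)}\Bigr\}.
\]
I would prove this by backward induction, peeling off one time point at a time from $t=T$ down to $t=1$.

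At step $t$, define $Z$ as the product of $Y^{\bar a}$ and the $q$-factors for $j\le t$, which are functions of $\bar S_t^{\bar a_{t-1}}$. The factors for earlier times $j<t$, namely $\mathbf 1(A_j=a_j)/\pi_{j,0}$, have already been introduced. On the event $\bar A_{t-1}=\bar a_{t-1}$, consistency replaces counterfactual histories by observed ones. I then condition on $H_t$ and use sequential randomization in the form $\{Y^{\bar a},S_k^{\bar a_{k-1}}:k\ge t\}\ind A_t\mid H_t$. This gives
\[
\P_0\{\mathbf 1(A_t=a_t)\,Z/\pi_{t,0}(a_t\mid H_t)\mid H_t\}=\P_0(Z\mid H_t).
\]
Positivity guarantees that the division is valid. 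Iterating this step removes all the counterfactual superscripts and yields the displayed identity. Summing over $\bar a$ then collapses the indicator, because $\sum_{\bar a}\mathbf 1(\bar A=\bar a)g(\bar a)=g(\bar A)$. This gives $\P_0(W_0^{\theta,\alpha}Y)$ with $W_{t,0}^{\theta,\alpha}=q_t^{\theta,\alpha}(A_t,H_t)/\pi_{t,0}(A_t\mid H_t)$, which matches the stated form once the definition of $q_t^{\theta,\alpha}$ is expanded.

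I expect the main obstacle to be the careful bookkeeping in the induction. The ignorability condition must cover the future counterfactual covariates that enter the later $q$-factors. It must also cover the factor $\pi_{t,0}$ inside $q_t^{\theta,\alpha}$, which depends on the whole history rather than on $S_t$ alone; because this factor is $H_t$-measurable, it passes through the conditioning. In addition, the external randomization of the policy draws should be stated explicitly to justify the initial sum-over-paths representation. Boundedness of the weights, which follows from strong positivity and from $q\le 1$, ensures that all the expectations are finite.
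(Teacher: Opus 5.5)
Your argument is correct, but it takes a different route from the paper. The paper defines the intervened process by drawing $A_t$ from $q_t^{\theta,\alpha}$ and the next state from the \emph{observed} conditional law. It then asserts that, under \cref{assumption:identification}, the law of $(\bar S^{\theta,\alpha},\bar A^{\theta,\alpha},Y^{\theta,\alpha})$ is the g-formula law $\P_0^{\theta,\alpha}$, obtained by replacing $\pi_{t,0}$ with $q_t^{\theta,\alpha}$ in the density factorization. Finally it reads off $W_0^{\theta,\alpha}=\mathrm{d}\P_0^{\theta,\alpha}/\mathrm{d}\P_0$ as the ratio of the two factorizations, with positivity supplying absolute continuity.

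You instead work at the potential-outcome level. You decompose over treatment paths using external randomization. Then, for each fixed $\bar a$, you convert $Y^{\bar a}\prod_t q_t^{\theta,\alpha}$, evaluated along $\bar a$, into its weighted observed-data counterpart by iterated conditioning with consistency and sequential randomization, without writing any densities. Your route gives rigor exactly where the paper is terse: its one-line claim that Assumption~1 ``fully characterizes'' the intervened law by swapping the treatment mechanism is precisely what your induction proves. The paper's route gains brevity and the likelihood-ratio reading of $W_0^{\theta,\alpha}$, which is the viewpoint reused in the pathwise-derivative computation of \cref{lemma:dcar-expectation}.

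One bookkeeping fix. As written, your $Z$ contains only the $q$-factors for $j\le t$. It must in fact contain the whole product $Y^{\bar a}\prod_{j=1}^T q_j^{\theta,\alpha}$ along $\bar a$, because the factors with $j>t$ depend on $S_k^{\bar a_{k-1}}$ for $k>t$. That dependence is exactly why those variables must lie in the ignorability set, as your last paragraph says.

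Also, the phrase ``the factors for $j<t$ have already been introduced'' describes a forward induction from $t=1$, not the backward peeling you announce. Either direction works. The key identity $\P_0\{\mathbf 1(A_t=a_t)Z/\pi_{t,0}(a_t\mid H_t)\mid H_t\}=\P_0(Z\mid H_t)$ holds on $\{\bar A_{t-1}=\bar a_{t-1}\}$ for any $Z$ that is a function of $H_t$ and $\{Y^{\bar a},S_k^{\bar a_{k-1}}:k\ge t\}$.
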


	From this identification, we can obtain the canonical gradient, which characterizes the nonparametric efficiency bound for estimating $\psi_0(\theta;\alpha)$.
	Let 
	$$\mu_{t,0}^{\theta,\alpha}(A_t,H_t)\coloneq \P_0\Bigl[ \Bigl\{\prod_{j>t}W_{j,0}^{\theta,\alpha}(A_j,S_j)\Bigr\}Y\mid A_t,H_t\Bigr]$$ 
	be the conditional expected outcome if an individual with covariate history $\{A_t,H_t\}$ were to follow regimen $q_j^{\theta,\alpha}$ at decision points $j\in\{t+1,\ldots,T\}$.
	These regressions admit a standard backward recursion; we record it here for completeness:
	$\mu_T^{\theta,\alpha}(A_T,H_T)=\P_0(Y|A_T,H_T)$, and
	\begin{align*}
		\mu_t^{\theta,\alpha}(A_t,H_t) = \P_0\left\{\sum_{a\in\{0,1\}} q^{\theta,\alpha}_{t+1}(A_{t+1}=a, S_{t+1})\mu^{\theta,\alpha}_{t+1}(A_{t+1}=a,H_{t+1}) \mid A_t,H_t \right\}\mathcomma
	\end{align*}
	for $t=1,\ldots,T-1$.
	Let $\eta_0\coloneq\{\pi_{1,0},\ldots,\pi_{T,0},\mu_{1,0}^{\theta,\alpha},\ldots,\mu_{T,0}^{\theta,\alpha}\}$ denote the set of nuisance functions corresponding to both sequential treatment assignment and outcome regressions.
	
	\begin{lemma}
		\label{lemma:dcar-expectation}
		The canonical gradient of $\psi_{0}(\theta;\alpha)$ is 
		$$\phi_\psi(O;\eta_0,\theta,\alpha)= W_0^{\theta,\alpha}Y - \sum_{t=1}^T \Dcart(O;\eta_0,\theta,\alpha)-\psi_0(\theta;\alpha)\mathcomma$$
		where 
		\begin{equation}
			\begin{split}
				\label{eqn:dcar-def}
				&\Dcart(O;\eta_0,\theta,\alpha) \coloneq \\
				&\Bigl\{\prod_{i<t} W_{i,0}^{\theta,\alpha}\Bigr\}
				(t/T)^\alpha\, \sum_{a\in\{0,1\}}\mu_{t,0}^{\theta,\alpha}(A_t=a,H_t)\, q_t^\theta(a,S_t)
				\frac{I(A_t=a)-\pi_{t,0}(a\mid H_t)}{\pi_{t,0}(a\mid H_t)}\mathcomma
			\end{split}
		\end{equation}
		is the projection of $\Psi(\P_0)$ onto the nuisance tangent space corresponding to the $t$th treatment mechanism, $\mathcal{T}_{\mathrm{CAR},t}$.
	\end{lemma}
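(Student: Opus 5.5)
The plan follows the classical Robins--Rotnitzky recipe in \cref{sec:ipw-mapping}: start from the IPW mapping $U=W_0^{\theta,\alpha}Y-\psi_0(\theta;\alpha)$, which has conditional mean $D^F_{\theta,\alpha}$ given $X$ by \cref{lemma:identification-expectation}. The canonical gradient in the nonparametric observed-data model is then $U-\Pi(U\mid\mathcal T_{\mathrm{CAR}})$. Since $\mathcal T_{\mathrm{CAR}}=\bigoplus_t\mathcal T_{\mathrm{CAR},t}$ with $\mathcal T_{\mathrm{CAR},t}=\{g(A_t,H_t):\P_0[g\mid H_t]=0\}$, the projection splits into $\sum_t \Pi(U\mid \mathcal T_{\mathrm{CAR},t})$. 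We also use the standard formula
\[
\Pi(U\mid \mathcal T_{\mathrm{CAR},t})=\P_0(U\mid A_t,H_t)-\P_0(U\mid H_t).
\]

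\textbf{Computing one projection.} Fix $t$ and write $W_0^{\theta,\alpha}=\{\prod_{i<t}W_{i,0}\}\,W_{t,0}\,\{\prod_{j>t}W_{j,0}\}$. The first factor is $H_t$-measurable and $W_{t,0}$ is $(A_t,H_t)$-measurable, so by the tower property and the definition of $\mu_{t,0}^{\theta,\alpha}$,
\[
\P_0(W_0Y\mid A_t,H_t)=\Bigl\{\prod_{i<t}W_{i,0}\Bigr\}W_{t,0}\,\mu_{t,0}^{\theta,\alpha}(A_t,H_t).
\]
This is the only place the recursion for $\mu$ is needed. Then
\[
\P_0(W_0Y\mid H_t)=\Bigl\{\prod_{i<t}W_{i,0}\Bigr\}\sum_a q_t^{\theta,\alpha}(a,H_t)\,\mu_{t,0}(a,H_t),
\]
because $\sum_a \pi_{t,0}(a\mid H_t)\,W_{t,0}(a)=\sum_a q_t^{\theta,\alpha}(a)$. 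The constant $\psi_0$ contributes nothing to either conditional expectation.

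\textbf{Algebraic simplification.} Substitute $W_{t,0}=(t/T)^\alpha q_t^\theta/\pi_{t,0}+1-(t/T)^\alpha$ and $q_t^{\theta,\alpha}=(t/T)^\alpha q_t^\theta+\{1-(t/T)^\alpha\}\pi_{t,0}$. The parts multiplied by $1-(t/T)^\alpha$ combine to
\[
\{1-(t/T)^\alpha\}\Bigl[\mu_{t,0}(A_t,H_t)-\sum_a\pi_{t,0}(a)\,\mu_{t,0}(a,H_t)\Bigr].
\]
This remaining piece is the one I expect to be the main obstacle. It is a genuine element of $\mathcal T_{\mathrm{CAR},t}$ and does not vanish in general, yet it is absent from \eqref{eqn:dcar-def}.

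The first thing I would try is to exploit the fact that $q^{\theta,\alpha}_t$ itself depends on $\pi_{t,0}$. Consequently $\psi_0(\theta;\alpha)$ depends on the treatment mechanism not only through the IPW change of measure but also through the target intervention. The canonical gradient should therefore be obtained by pathwise differentiation of $P\mapsto P(W^{\theta,\alpha}(P)Y)$, with $P$ appearing in both numerator and denominator, rather than by the pure CAR projection. Perturbing $\pi_t$ along a score $g\in\mathcal T_{\mathrm{CAR},t}$ changes $W_{t}$ only through its $q_t^\theta/\pi_t$ part, since the constant term $1-(t/T)^\alpha$ is $\pi$-free. The derivative along $g$ is then $-\P_0[\{\prod_{i<t}W_{i,0}\}(t/T)^\alpha (q_t^\theta/\pi_{t,0})\,\mu_{t,0}\,g]$, whose Riesz representer in $\mathcal T_{\mathrm{CAR},t}$ is exactly the centered version of $\{\prod_{i<t}W_{i,0}\}(t/T)^\alpha\mu_{t,0}\,q_t^\theta/\pi_{t,0}$. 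I would carry this pathwise computation out explicitly. Checking that only the $(t/T)^\alpha q_t^\theta/\pi_{t,0}$ part survives reproduces \eqref{eqn:dcar-def}.

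\textbf{Assembling the gradient.} Rewrite the centered term as
\[
\sum_a \mu_{t,0}(a,H_t)\,q_t^\theta(a,S_t)\,\frac{I(A_t=a)-\pi_{t,0}(a\mid H_t)}{\pi_{t,0}(a\mid H_t)},
\]
which follows since $\sum_a q_t^\theta(a,S_t)\,\mu_{t,0}(a,H_t)$ is the $H_t$-mean of $q_t^\theta\mu_{t,0}/\pi_{t,0}$. The variation from the remaining factors of $P$ (the law of $S_t$ and $Y$) gives $W_0Y-\psi_0$. Summing over $t$ then gives $\phi_\psi$, and in a saturated model the gradient is unique, hence canonical.
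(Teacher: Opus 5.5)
Your final argument is the paper's proof: split $\partial_\epsilon \P_\epsilon\{W^{\theta,\alpha}(\P_\epsilon)Y\}$ by the product rule, note that only the $(t/T)^\alpha q_t^\theta/\pi_t$ part of $W_t$ depends on $\pi_t$, condition on $(A_t,H_t)$ to produce $\mu_{t,0}^{\theta,\alpha}$, and center using $\P_0(g\mid H_t)=0$. The one slip is the parenthetical ``the law of $S_t$ and $Y$'': the measure term $\P_0\{W_0^{\theta,\alpha}Yh\}$, represented by $W_0^{\theta,\alpha}Y-\psi_0(\theta;\alpha)$, must range over all scores $h$ including treatment scores, so your ``derivative along $g$'' is only the weight part. (Read literally, your two pieces are each incomplete, but the omissions cancel in the final formula.)

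Your preliminary projection computation is also correct. It shows that $\Dcart$ equals the projection of $U$ onto $\mathcal T_{\mathrm{CAR},t}$ only at $\alpha=0$. For $\alpha>0$, the extra piece you found is cancelled exactly by the Riesz representer in $\mathcal T_{\mathrm{CAR},t}$ of the target's own dependence on $\pi_{t,0}$ through $q_t^{\theta,\alpha}$. So the statement's ``projection'' wording is loose, and, like the paper's proof, your argument establishes the gradient formula but not that clause.
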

	
	\cref{lemma:dcar-expectation} shows that the canonical gradient decomposes into a na\"ive term, $W_0^{\theta,\alpha}Y-\psi_0(\theta;\alpha)$ minus an augmentation term $\sum_{t=1}^T \Dcart(O;\eta_0,\theta,\alpha)$ that projects the IPW mapping onto the nuisance tangent space associated with the sequential treatment mechanism.
	Each $\Dcart$ term can be read as a weighted, scaled IPW residual at time $t$, providing a connection to the HAL score equations. 
	This canonical gradient determines the nonparametric efficiency bound for $\psi_0(\theta;\alpha)$ and will serve as the reference influence function in our asymptotic linearity results.
	
	\subsection{Efficient Inference for the Regimen Response Curve}
	
	We state a set of regularity conditions used to control both the pointwise asymptotic linear expansion of $\psi_n(\theta;\alpha)$ for fixed $(\theta,\alpha)$ and the stronger uniform convergence needed for functional inference over $\Theta$. 
	\cref{assumption:bounded} is a convenient sufficient condition that ensures the relevant influence functions and remainder terms are uniformly integrable and that the empirical process arguments used for $\ell^\infty(\Theta)$ convergence apply without additional tail conditions. 
	\cref{assumption:q-smooth} enforces stochasticity and smoothness of the policy class; \cref{assumption:q-smooth:derivs} is only needed for downstream results on optimizer inference and can be relaxed if one is only interested in curve-level inference.
	
	\begin{assumption}
		\label{assumption:bounded}
		Suppose that the support $\mathcal H_t$ is bounded for all $t=1,\ldots,T$ and that $\mathcal Y$ is bounded.
	\end{assumption}
	
	\begin{assumption}
		\label{assumption:q-smooth}
		Suppose that the parametric stochastic intervention $q_t^\theta(A_t,S_t)$ satisfies the following.
		\begin{enumerate}[label=(\alph*),ref=\theassumption(\alph*)]
			\item There exist constants $0<\munderbar q < \bar q < 1$ such that $\munderbar q \le q_t^\theta(a,s) \le \bar q$ for all $(a,s,t,\theta)\in\{0,1\}\times \bigcup_{t=1}^T\mathcal S_t \times \{1,\ldots,T\}\times\Theta$.
			\item For each $a \in \{0,1\}$, $q_t^\theta(a,\cdot)$ is c\`adl\`ag with sectional variational norm uniformly bounded over $\Theta$.
			\item \label{assumption:q-smooth:derivs}
			For each $a\in\{0,1\}$ and almost every $S_t$, the map $\theta\mapsto q_t^\theta(a,S_t)$ is thrice continuously differentiable.
			Furthermore, there exist constants $C_1,C_2,C_3$ such that
			$$\sup_{\theta\in\Theta}\|\nabla_\theta q_t^\theta(a,S_t)\|\le C_1\mathcomma \quad
			\sup_{\theta\in\Theta}\|\nabla_\theta^2 q_t^\theta(a,S_t)\|\le C_2\mathcomma \quad \text{and} \quad\sup_{\theta\in\Theta}\|\nabla_\theta^3 q_t^\theta(a,S_t)\|\le C_3\mathcomma$$
			almost surely.
		\end{enumerate}
	\end{assumption}

	These conditions are readily verifiable for the logistic model paired with a compact index set $\Theta$.
	However, deterministic interventions, given by $q_t^\theta(1,s)=I\{b(s)\trans\theta>0\}$, do not satisfy these conditions.
	We note that the indicator function can be retrieved from the logistic model by letting $\lVert\theta\rVert \to\infty$; this is in part why we do not consider the general index set $\Theta=\mathbb{R}^p$ and require a compact subset.
	
	We now combine the canonical gradient representation from \cref{lemma:dcar-expectation} with the undersmoothing condition to obtain an asymptotic linear expansion for $\psi_n(\theta;\alpha)$.
	
	\begin{theorem}
		\label{theorem:uipw-asym}
		Suppose that \cref{assumption:identification:positivity,,assumption:bounded} hold and that $\pi_{t,n}$ is sufficiently undersmoothed such that \cref{eqn:hal-undersmooth-condition} holds.
		Consider a fixed $(\theta,\alpha)\in\Theta\times [0,\infty)$. 
		Suppose that \cref{assumption:cadlag,,assumption:hal-space} hold locally at $(\theta,\alpha)$.
		Then, 
		$$\psi_n(\theta; \alpha)-\psi_0(\theta; \alpha) = (\P_n-\P_0)\phi_\psi(O;\eta_0,\theta,\alpha)+R_n(\theta,\alpha)\mathcomma$$
		where $R_n(\theta,\alpha)=o_p(n^{-1/2})$. 
		Consequently,
		$$\sqrt{n}\{\psi_n(\theta,\alpha)-\psi_0(\theta,\alpha)\} \dto N\big[0, \P_0\{\phi_\psi^2(O;\eta_0,\theta,\alpha)\}\big]\mathcomma$$
		where $X_n\dto X$ denotes weak convergence of $X_n$ to $X$.
		Moreover, if \cref{assumption:q-smooth} holds and \cref{assumption:cadlag,,assumption:hal-space} hold uniformly for all $\theta\in \Theta$, 
		$$\sqrt{n}\{\psi_{n}(\theta;\alpha)-\psi_0(\theta;\alpha)\}\dto \G(\theta;\alpha)$$
		in $\ell^\infty(\Theta)$,
		where
		$\G$ denotes a tight Gaussian process indexed by $\Theta$ with covariance
		$$\Cov\{\G(\theta_1; \alpha),\G(\theta_2;\alpha)\}=\Cov\{ \phi_\psi(O,\eta_0,\theta_1,\alpha),\phi_\psi(O;\eta_0,\theta_2,\alpha)\}\mathcomma$$
		and $\ell^\infty(T)$ denotes the Banach space of all bounded functions $f:T\to\mathbb{R}$ equipped with the uniform norm.
	\end{theorem}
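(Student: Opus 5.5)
We will prove the pointwise expansion by a telescoping decomposition of the weight product, isolating one propensity-score error at a time. For fixed $(\theta,\alpha)$, write
\[
\psi_n(\theta;\alpha)-\psi_0(\theta;\alpha)=(\P_n-\P_0)(W_0^{\theta,\alpha}Y)+\P_n\{(W_n^{\theta,\alpha}-W_0^{\theta,\alpha})Y\}.
\]
The product difference telescopes as
\[
W_n^{\theta,\alpha}-W_0^{\theta,\alpha}=\sum_{t=1}^T\Bigl(\prod_{i<t}W_{i,n}\Bigr)(W_{t,n}-W_{t,0})\Bigl(\prod_{j>t}W_{j,0}\Bigr).
\]
Since $W_{t,\cdot}=(t/T)^\alpha q_t^\theta/\pi_{t,\cdot}+1-(t/T)^\alpha$, we have $W_{t,n}-W_{t,0}=(t/T)^\alpha q_t^\theta(\pi_{t,0}-\pi_{t,n})/(\pi_{t,n}\pi_{t,0})$. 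In the $t$th term we replace $\prod_{j>t}W_{j,0}Y$ by its conditional mean $\mu_{t,0}^{\theta,\alpha}(A_t,H_t)$; the difference is a mean-zero (given $A_t,H_t$) term whose empirical average is $o_p(n^{-1/2})$. This follows from a Donsker/equicontinuity argument, since HAL fits lie in a bounded sectional-variation class (\cref{assumption:cadlag,assumption:hal-space}) and the multiplying factors converge in $L_2$.

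Summing over $a\in\{0,1\}$ and using $\pi_{t,\cdot}(0\mid H_t)=1-\pi_{t,\cdot}(1\mid H_t)$ rewrites each term as $\P_n[f_t(O;\theta,\alpha,\bar\pi_{t,n},\mu_{t,0}^{\theta,\alpha})\{A_t-\pi_{t,n}(1\mid H_t)\}]$ plus a correction, where the correction comes from expressing $\pi_{t,0}-\pi_{t,n}$ through $A_t-\pi_{t,n}$. This yields exactly the generic remainder \cref{eqn:ipw-remainder-generic}. We then add and subtract to obtain
\[
\P_n f_{t,n}(A_t-\pi_{t,n})=(\P_n-\P_0)f_{t,0}(A_t-\pi_{t,0})+\P_0\{f_{t,n}(A_t-\pi_{t,n})\}\text{-type bias}+(\P_n-\P_0)\{f_{t,n}(A_t-\pi_{t,n})-f_{t,0}(A_t-\pi_{t,0})\}.
\]
By construction $(\P_n-\P_0)f_{t,0}(A_t-\pi_{t,0})=-(\P_n-\P_0)\Dcart(O;\eta_0,\theta,\alpha)$ as in \cref{lemma:dcar-expectation}, with sign and scaling matched. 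The third piece is $o_p(n^{-1/2})$ by asymptotic equicontinuity, using $\|\pi_{t,n}-\pi_{t,0}\|_2=o_p(1)$, positivity \cref{assumption:identification:positivity} (which keeps $\pi_{t,n}$ bounded away from zero with probability tending to one), and boundedness (\cref{assumption:bounded}).

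The key step, and the main obstacle, is showing that the leftover empirical score $\P_n[f_{t,n}\{A_t-\pi_{t,n}(1\mid H_t)\}]$ is $o_p(n^{-1/2})$. This is where the undersmoothing condition \cref{eqn:hal-undersmooth-condition} enters. Following \cite{ertefaie_nonparametric_ipw_2023}, we will approximate $f_{t,n}$, which is c\`adl\`ag with bounded sectional variation by \cref{assumption:cadlag}, by a linear combination $\tilde f$ of the active HAL basis functions with $\|f_{t,n}-\tilde f\|_2=O_p(n^{-1/4})$. Undersmoothing makes the scores of the active basis functions $o_p(n^{-1/2})$ times the $L_1$ norm, so $\P_n\tilde f(A_t-\pi_{t,n})=o_p(n^{-1/2})$. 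The approximation error $\P_n(f_{t,n}-\tilde f)(A_t-\pi_{t,n})$ splits into an empirical-process piece that is $o_p(n^{-1/2})$ and a population piece bounded via Cauchy–Schwarz by $O_p(n^{-1/4})\|\pi_{t,n}-\pi_{t,0}\|_2=o_p(n^{-1/2})$, using the HAL rate $\|\pi_{t,n}-\pi_{t,0}\|_2=o_p(n^{-1/4})$. Collecting the terms gives $\psi_n-\psi_0=(\P_n-\P_0)\phi_\psi+R_n$ with $R_n=o_p(n^{-1/2})$, and the central limit theorem yields the normal limit with variance $\P_0\phi_\psi^2$.

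For the uniform statement, we will repeat the argument with all remainders controlled uniformly in $\theta$. Under \cref{assumption:q-smooth}, $\theta\mapsto q_t^\theta$ is Lipschitz with bounded derivatives and bounded away from $0$ and $1$. Hence $\{\phi_\psi(\cdot;\eta_0,\theta,\alpha):\theta\in\Theta\}$ and the classes $\{f_{t,n}(\cdot;\theta)\}$ are Lipschitz-in-parameter classes over a compact $\Theta$ multiplied by bounded-variation functions, and are therefore $\P_0$-Donsker with bracketing entropy controlled. The undersmoothing bound is then required to hold with $\sup_\theta$, which is precisely the uniform version of \cref{assumption:cadlag,assumption:hal-space}. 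This gives $\sup_\theta|R_n(\theta,\alpha)|=o_p(n^{-1/2})$, and weak convergence of $\sqrt n(\P_n-\P_0)\phi_\psi$ in $\ell^\infty(\Theta)$ to a tight Gaussian process with the stated covariance follows from the Donsker theorem.
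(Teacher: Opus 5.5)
Your ingredients match the paper's: telescoping the weight product, conditioning to $\mu_{t,0}^{\theta,\alpha}$, killing $\P_n[f_{t,n}\{A_t-\pi_{t,n}(1\mid H_t)\}]$ by projecting onto the active HAL basis, and equicontinuity for the rest. However, the step that is supposed to produce the $-\sum_t\Dcart(O;\eta_0,\theta,\alpha)$ part of $\phi_\psi$ does not work as written.

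First, summing over $a$ gives exactly $\Dcart(O;\eta_0,\theta,\alpha)=f_{t,0}^{\theta,\alpha}(H_t)\{A_t-\pi_{t,0}(1\mid H_t)\}$. So $(\P_n-\P_0)f_{t,0}(A_t-\pi_{t,0})=+(\P_n-\P_0)\Dcart$, not $-$.

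Second, and more seriously, you extract this $O_p(n^{-1/2})$ term from a decomposition of $\P_n[f_{t,n}\{A_t-\pi_{t,n}(1\mid H_t)\}]$, and then prove that this very quantity is $o_p(n^{-1/2})$. Together, those two facts only show that the population bias $\P_0[f_{t,n}\{A_t-\pi_{t,n}(1\mid H_t)\}]$ equals $-(\P_n-\P_0)\Dcart+o_p(n^{-1/2})$. That population bias never appears in your expansion, which is written entirely under $\P_n$.

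Meanwhile, the ``correction'' you leave uncomputed is exactly where the influence-function term lives, and it is not negligible. If ``this yields exactly the generic remainder'' means dropping it, you would end up with the known-propensity influence function $W_0^{\theta,\alpha}Y-\psi_0(\theta;\alpha)$, which is wrong.

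To repair your $\P_n$ route, use the identity
\[
I(A_t=a)\,\frac{\pi_{t,0}(a\mid H_t)-\pi_{t,n}(a\mid H_t)}{\pi_{t,0}(a\mid H_t)\,\pi_{t,n}(a\mid H_t)}=\frac{I(A_t=a)-\pi_{t,n}(a\mid H_t)}{\pi_{t,n}(a\mid H_t)}-\frac{I(A_t=a)-\pi_{t,0}(a\mid H_t)}{\pi_{t,0}(a\mid H_t)},
\]
which shows that the $t$th term equals $\P_n\Dcart(\mu_0^{\theta,\alpha},\pi_n;\theta,\alpha)-\P_n\tilde D_{t,n}$. Here $\tilde D_{t,n}$ is $\Dcart(O;\eta_0,\theta,\alpha)$ with $\prod_{i<t}W_{i,0}^{\theta,\alpha}$ replaced by $\prod_{i<t}W_{i,n}^{\theta,\alpha}$.

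Because that product is $H_t$-measurable, $\P_0\tilde D_{t,n}=0$ for fixed $\pi_n$. Also $\|\tilde D_{t,n}-\Dcart\|_{2,\P_0}\lesssim\sum_{i<t}\|\pi_{i,n}-\pi_{i,0}\|_{2,\P_0}$. So \cref{corollary:maximal-inequality-shortcut} gives $\P_n\tilde D_{t,n}=(\P_n-\P_0)\Dcart(O;\eta_0,\theta,\alpha)+o_p(n^{-1/2})$. Then discard your three-term decomposition; your undersmoothing step applied to $\P_n\Dcart(\mu_0^{\theta,\alpha},\pi_n;\theta,\alpha)$ finishes the proof.

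Alternatively, do what the paper does. Write
\[
\P_n\{(W_n^{\theta,\alpha}-W_0^{\theta,\alpha})Y\}=\P_0\{(W_n^{\theta,\alpha}-W_0^{\theta,\alpha})Y\}+(\P_n-\P_0)\{(W_n^{\theta,\alpha}-W_0^{\theta,\alpha})Y\}.
\]
Under $\P_0$, the conditioning on $(A_t,H_t)$ and the sum-over-$a$ rewrite are exact identities, yielding $\sum_t\P_0\Dcart(\mu_0^{\theta,\alpha},\pi_n;\theta,\alpha)$ with no correction. Your three-term identity, rearranged as $\P_0\Dcart(\mu_0,\pi_n)=\P_n\Dcart(\mu_0,\pi_n)-(\P_n-\P_0)\Dcart(\mu_0,\pi_0)+o_p(n^{-1/2})$, is then where the minus sign legitimately comes from.

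A smaller point: the $O_p(n^{-1/4})$ approximation of $f_{t,n}$ by the active basis does not follow from \cref{assumption:cadlag}. It comes from \cref{assumption:hal-space}, stated for $f_{t,0}^{\theta,\alpha}$, combined with $\|f_{t,n}-f_{t,0}\|_{2,\P_0}=o_p(n^{-1/4})$.
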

	
	\cref{theorem:uipw-asym} states that, under regularity conditions and sufficient undersmoothing, $\psi_n(\theta;\alpha)$ is regular and asymptotically linear with influence function equal to the canonical gradient $\phi_\psi$, and therefore attains the nonparametric efficiency bound for each fixed $(\theta,\alpha)$.
	The uniform weak convergence result strengthens this to support functional inference over $\Theta$, providing the basis for simultaneous inference for the regimen-response curve.
	
	\begin{remark}
		Although these results show that one can obtain an asymptotically efficient estimator without estimating the outcome regression functions $\mu_{t,0}^{\theta,\alpha}$, the plug-in variance estimator $\P_n\{\phi(O;\eta_n,\theta,\alpha)^2\}$ requires a consistent estimator for these functions for every $t$ and $\theta$.
		To avoid estimating the outcome regression functions, the variance can be estimated using the bootstrap, or conservatively by treating the propensity score and weights as known.
	\end{remark}
	
	The next corollary shows that, for a vanishing tilting sequence $\alpha_n$, the tilted estimator has the same first-order asymptotic linear representation---and hence limiting law---as the untilted estimator $\psi_n(\theta;0)$.
	
	\begin{corollary}
		\label{corollary:uipw-asymp-alpha}
		Suppose that \cref{assumption:identification:positivity,,assumption:bounded} hold, that $\pi_{t,n}$ is sufficiently undersmoothed.
		Fix $\theta\in\Theta$ and suppose that \cref{assumption:cadlag,,assumption:hal-space} hold for $\theta$ uniformly for all $\alpha\in[0,\alpha_\epsilon]$ for some $\alpha_\epsilon>0$.
		If $\alpha_n=o_p(n^{-1/2})$, then 
		$$\psi_n(\theta; \alpha_n)-\psi_0(\theta) = (\P_n-\P_0)\phi_\psi(O;\eta_0,\theta,\alpha=0)+o_p(n^{-1/2})\mathcomma$$ and
		$$\sqrt{n}\{\psi_n(\theta;\alpha_n)-\psi_0(\theta)\} \dto N\big[0, \P_0\{\phi_\psi^2(O;\eta_0,\theta,\alpha=0)\}\big]\mathperiod$$
		Likewise if \cref{assumption:q-smooth} holds and \cref{assumption:cadlag,,assumption:hal-space} hold uniformly for all $(\theta,\alpha)\in\Theta\times [0,\alpha_\epsilon]$, then 
		$$\sqrt{n}\{\psi_{n}(\theta;\alpha_n)-\psi_0(\theta)\}\dto \G(\theta;\alpha=0)\mathcomma$$ with $\G$ as defined in \cref{theorem:uipw-asym}.
	\end{corollary}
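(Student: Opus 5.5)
The proof reduces the corollary to \cref{theorem:uipw-asym}, applied uniformly over $\alpha\in[0,\alpha_\epsilon]$, together with a Lipschitz-type continuity argument in $\alpha$ at $\alpha=0$. Write
\[
\psi_n(\theta;\alpha_n)-\psi_0(\theta)=\{\psi_n(\theta;\alpha_n)-\psi_0(\theta;\alpha_n)\}+\{\psi_0(\theta;\alpha_n)-\psi_0(\theta;0)\}
\]
and handle each bracket separately.

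\textbf{Step 1 (uniform expansion in $\alpha$).} Repeat the proof of \cref{theorem:uipw-asym} with the index set $[0,\alpha_\epsilon]$ in place of a single $\alpha$. This is legitimate because \cref{assumption:cadlag,,assumption:hal-space} hold uniformly over that interval. The aim is
\[
\sup_{\alpha\in[0,\alpha_\epsilon]}|R_n(\theta,\alpha)|=o_p(n^{-1/2}).
\]
The key facts are two. First, $\alpha\mapsto(t/T)^\alpha$ is monotone and bounded in $[0,1]$. Second, each $W_{t}^{\theta,\alpha}$ is a convex combination of the IPW factor and $1$. Hence $\{\phi_\psi(\cdot;\eta_0,\theta,\alpha):\alpha\in[0,\alpha_\epsilon]\}$ is a uniformly bounded class, by \cref{assumption:bounded} and positivity, and it is Lipschitz in $\alpha$. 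It is therefore Donsker. The same holds for the HAL score remainder terms. On the event $\alpha_n\le\alpha_\epsilon$, whose probability tends to one, this gives
\[
\psi_n(\theta;\alpha_n)-\psi_0(\theta;\alpha_n)=(\P_n-\P_0)\phi_\psi(O;\eta_0,\theta,\alpha_n)+o_p(n^{-1/2}).
\]

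\textbf{Step 2 (replace $\alpha_n$ by $0$ in the empirical process).} Bound
\[
\|\phi_\psi(\cdot;\theta,\alpha)-\phi_\psi(\cdot;\theta,0)\|_\infty\le C\alpha
\]
by differentiating the products. We have $\partial_\alpha (t/T)^\alpha=(t/T)^\alpha\log(t/T)$, and $|\log(t/T)|\le\log T$. The weights are bounded by $\epsilon_\pi^{-T}$. For the $\mu_{t,0}^{\theta,\alpha}$ terms, use the backward recursion to show they are Lipschitz in $\alpha$. By asymptotic equicontinuity of the Donsker class from Step 1, together with $L_2(\P_0)$-continuity, we get
\[
(\P_n-\P_0)\{\phi_\psi(\cdot;\alpha_n)-\phi_\psi(\cdot;0)\}=o_p(n^{-1/2}).
\]
This step needs only $\alpha_n=o_p(1)$.

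\textbf{Step 3 (estimand drift).} By the same Lipschitz bound applied to $\psi_0(\theta;\alpha)=\P_0(W_0^{\theta,\alpha}Y)$, we have $|\psi_0(\theta;\alpha_n)-\psi_0(\theta;0)|\le C\alpha_n$. This is $o_p(n^{-1/2})$ exactly because $\alpha_n=o_p(n^{-1/2})$, and this is where the rate is used. Combining Steps 1--3 gives the linear expansion. The CLT then gives the pointwise normal limit with variance $\P_0\phi_\psi^2(\cdot;0)$.

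\textbf{Step 4 (uniform version).} For the $\ell^\infty(\Theta)$ statement, rerun Steps 1--3 uniformly over $(\theta,\alpha)\in\Theta\times[0,\alpha_\epsilon]$. \cref{assumption:q-smooth} makes the class Lipschitz in $\theta$ as well, so the Lipschitz constants $C$ can be taken uniform in $\theta$. The resulting class is Donsker on a compact index set. The supremum version of the expansion then follows, and the limit $\G(\theta;0)$ follows from \cref{theorem:uipw-asym} at $\alpha=0$ and Slutsky.

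I expect Step 1 to be the main obstacle. The undersmoothing remainder must be controlled uniformly in $\alpha$, which requires that the functions $f_t$ in \cref{eqn:ipw-remainder-generic} be approximable by retained HAL bases uniformly over $\alpha$. I would first try to obtain this directly from the uniform version of \cref{assumption:hal-space}. Failing that, I would use monotonicity of $(t/T)^\alpha$: each $f_t$ is a polynomial in the quantities $(t/T)^\alpha$, so one could reduce to finitely many $\alpha$-free functions multiplied by bounded coefficients.
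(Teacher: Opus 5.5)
Your proposal is correct and follows essentially the same route as the paper. Both use the decomposition into estimation error plus estimand drift, uniform-in-$\alpha$ remainder control over $[0,\alpha_\epsilon]$, stochastic equicontinuity to replace $\alpha_n$ by $0$ in the empirical process (needing only $\alpha_n=o_p(1)$), and an $O(\alpha)$ drift bound where the $o_p(n^{-1/2})$ rate is used; these are exactly parts (a)--(c) of \cref{lemma:uipw-asymp-uniform-alpha}, followed by Slutsky for the $\ell^\infty(\Theta)$ limit. Your direct sup-norm Lipschitz bound on the weight product for the drift, in place of the paper's Taylor expansion, is a minor simplification that relies only on positivity and boundedness of $Y$.
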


	\subsection{Efficient Policy Optimization}
	
	We now study the learned optimal policy parameter $\theta_n^\star(\alpha)$, defined as a maximizer of the penalized value function $V_n(\theta;\alpha)$ over $\Theta$. 
	Our argument follows a standard argmax/M-estimation strategy: under suitable curvature, the optimizer can be characterized through a first-order condition, and its limiting behavior is governed by the corresponding gradient process. 
	
	\begin{assumption}
		\label{assumption:v0-regularity}
		Suppose that 
		$\theta_0^\star(\alpha)$ is a unique interior point of $\Theta$ 
		and that $H_0(\alpha)\coloneq\nabla^2_\theta V_0\{\theta_0^\star(\alpha);\alpha\}$ is negative definite in a neighborhood of $\theta_0^\star(\alpha)$.
	\end{assumption}
	
	Assumption~\ref{assumption:v0-regularity} imposes standard regularity for argmax functionals: uniqueness and interiority ensure that $\theta_0^\star(\alpha)$ is characterized by the first-order condition $\nabla_\theta V_0\{\theta_0^\star(\alpha);\alpha\}=0$, while negative definiteness of $H_0(\alpha)$ provides local curvature needed to invert the linearization and obtain a $\sqrt{n}$-rate expansion.
	Under this assumption, expanding the first-order condition,
	\begin{equation*}
		\theta_n^\star(\alpha)-\theta_0^\star(\alpha)=-H_0(\alpha)^{-1}\nabla_\theta V_n\{\theta_0^\star(\alpha);\alpha\} + o_p(n^{-1/2})\mathperiod
	\end{equation*}
	Consequently, inference for $\theta_n^\star(\alpha)$ reduces to establishing regular asymptotic linearity of the gradient $\nabla_\theta V_n(\theta;\alpha)$ evaluated at $\theta=\theta_0^\star(\alpha)$.
	
	To that end, we show that, under one additional HAL condition regarding approximation of the relevant gradient multipliers (\cref{assumption:hal-space-deriv}), the gradient process is asymptotically linear with influence function equal to the canonical gradient of $\nabla_\theta V_0(\theta;\alpha)$ (\cref{lemma:deriv-gradient,,lemma:psi-deriv-asymp}).
	It follows that the optimizer inherits an asymptotic linear expansion determined by the inverse Hessian, $H_0(\alpha)^{-1}$, and the canonical gradient of $\nabla_\theta\psi_0(\theta;\alpha)$. This is formalized in the following theorem.
	
	\begin{theorem}
		\label{theorem:opt-theta-asym}
		Let $\alpha\ge0$ and suppose that \cref{assumption:identification:positivity,,assumption:bounded,,assumption:q-smooth,,assumption:v0-regularity} hold, that $\pi_{t,n}$ is sufficiently undersmoothed, and that \cref{assumption:cadlag,,assumption:hal-space,,assumption:hal-space-deriv} hold uniformly in a neighborhood of $\theta_0^\star(\alpha)$.
		Then,
		\begin{equation*}
			\theta_n^\star(\alpha)-\theta_0^\star(\alpha) = (\P_n-\P_0)\left[ -H_0^{-1}(\alpha)\phi_{\nabla \psi}\{O;\eta_0,\eta_0',\theta_0^\star(\alpha),\alpha\}\right] + o_p(n^{-1/2})\mathcomma
		\end{equation*}
		and
		\begin{equation*}
			\sqrt{n}\{\theta_n^\star(\alpha)-\theta_0^\star(\alpha)\} \dto N\left( 0, H_0^{-1}(\alpha)\P_0[\phi_{\nabla \psi}^{\otimes 2}\{O;\eta_0,\eta_0',\theta_0^\star(\alpha),\alpha\}] H_0^{-1}(\alpha)\right)\mathcomma
		\end{equation*}
		where $f^{\otimes2}=ff\trans$ for any measurable function $f$, and $\phi_{\nabla\psi}$ is the canonical gradient of $\nabla_\theta \psi_0(\theta;\alpha)$ as defined in \cref{lemma:deriv-gradient}.
	\end{theorem}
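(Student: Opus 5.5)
The proof follows the argmax/M-estimation route already sketched after \cref{assumption:v0-regularity}, carried out in three stages.

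\textbf{Stage 1: consistency of $\theta_n^\star(\alpha)$.} By \cref{theorem:uipw-asym} (uniform version), $\sup_{\theta\in\Theta}\lvert\psi_n(\theta;\alpha)-\psi_0(\theta;\alpha)\rvert=O_p(n^{-1/2})$. The penalty $\mathcal P$ is deterministic, so
\[
\sup_\theta\lvert V_n(\theta;\alpha)-V_0(\theta;\alpha)\rvert\to0
\]
in probability. On the boundary both $V_n$ and $V_0$ equal $-\infty$, and $\theta_0^\star(\alpha)$ is a unique, hence well-separated, interior maximizer by \cref{assumption:v0-regularity}. The standard argmax consistency theorem (e.g., van der Vaart, Thm.~5.7) then gives $\theta_n^\star(\alpha)\to\theta_0^\star(\alpha)$ in probability. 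To handle the barrier, restrict to a compact sublevel set $\{\theta: \mathcal P(\theta)\le M\}$, which contains both maximizers with high probability because $\psi_n$ is uniformly bounded under \cref{assumption:bounded,assumption:identification:positivity}. With consistency in hand, $\theta_n^\star(\alpha)$ is interior with probability tending to one and satisfies $\nabla_\theta V_n\{\theta_n^\star(\alpha);\alpha\}=0$.

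\textbf{Stage 2: linearize the first-order condition.} Expand around $\theta_0^\star(\alpha)$:
\[
0=\nabla_\theta V_n\{\theta_0^\star;\alpha\}+\nabla_\theta^2 V_n(\tilde\theta;\alpha)\{\theta_n^\star-\theta_0^\star\},
\]
with $\tilde\theta$ an intermediate point. Since $\nabla_\theta V_0(\theta_0^\star;\alpha)=0$ and $\nabla\mathcal P$ cancels, the leading term is $\nabla_\theta\psi_n(\theta_0^\star;\alpha)-\nabla_\theta\psi_0(\theta_0^\star;\alpha)$. By \cref{lemma:psi-deriv-asymp}, which uses \cref{assumption:hal-space-deriv} and the canonical gradient from \cref{lemma:deriv-gradient}, this equals $(\P_n-\P_0)\phi_{\nabla\psi}+o_p(n^{-1/2})$, and in particular is $O_p(n^{-1/2})$.

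\textbf{Stage 3 (main obstacle): uniform convergence of the Hessian.} We need
\[
\sup_{\lVert\theta-\theta_0^\star\rVert\le\delta}\lVert\nabla_\theta^2\psi_n(\theta;\alpha)-\nabla^2_\theta\psi_0(\theta;\alpha)\rVert=o_p(1).
\]
The natural tool is a Glivenko--Cantelli argument. Differentiating $W_n^{\theta,\alpha}Y$ twice in $\theta$ yields sums of products of the weight factors with $\nabla_\theta q_t^\theta$ and $\nabla^2_\theta q_t^\theta$. These are bounded by \cref{assumption:q-smooth:derivs}, because $\pi_{t,n}$ is bounded away from zero by positivity. They are also Lipschitz in $\theta$ via the third-derivative bound $C_3$, so for fixed $\pi_{t,n}$ the class is Glivenko--Cantelli. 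The difficulty is that $\pi_{t,n}$ is random. To deal with this, split
\[
\nabla^2\psi_n-\nabla^2\psi_0=(\P_n-\P_0)g_{n,\theta}+\P_0(g_{n,\theta}-g_{0,\theta}).
\]
The first term is controlled because HAL fits lie in a bounded sectional-variation class, which is Donsker, and products and Lipschitz transforms preserve this. The second term vanishes by $L_2$-consistency of $\pi_{t,n}$ at rate $n^{-1/4}$, using dominated convergence.

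Combining the three stages: $\nabla^2 V_n(\tilde\theta;\alpha)\to H_0(\alpha)$ in probability, and $H_0(\alpha)$ is invertible by \cref{assumption:v0-regularity}. This gives $\theta_n^\star-\theta_0^\star=O_p(n^{-1/2})$ and then the stated expansion
\[
\theta_n^\star-\theta_0^\star=-H_0^{-1}(\P_n-\P_0)\phi_{\nabla\psi}+o_p(n^{-1/2}).
\]
The asymptotic normality with sandwich covariance follows from the multivariate central limit theorem, since $\phi_{\nabla\psi}$ is bounded, and Slutsky's lemma.
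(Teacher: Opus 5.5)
Your proposal is correct and follows essentially the paper's route. You Taylor-expand the first-order condition around $\theta_0^\star(\alpha)$, control the gradient term with \cref{lemma:psi-deriv-asymp}, and control the Hessian by the empirical-process plus plug-in-bias decomposition of \cref{lemma:hessian-est-properties}, finishing with the CLT. Your Stage~1 consistency argument supplies a step the paper's proof uses without justification, namely $\lVert\tilde\theta_n-\theta_0^\star(\alpha)\rVert=o_p(1)$. That step only needs $\sup_{\theta}\lvert\psi_n(\theta;\alpha)-\psi_0(\theta;\alpha)\rvert=o_p(1)$, which follows from uniform IPW consistency, whereas the uniform $O_p(n^{-1/2})$ version of \cref{theorem:uipw-asym} you cite requires the HAL conditions on all of $\Theta$ rather than only near $\theta_0^\star(\alpha)$.
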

	
	Theorem~\ref{theorem:opt-theta-asym} shows that, under curvature and undersmoothing conditions, $\theta_n^\star(\alpha)$ is a regular asymptotically linear estimator of $\theta_0^\star(\alpha)$ with a $\sqrt{n}$ limit distribution. 
	Relative to curve-level inference, the optimizer expansion introduces the additional nuisance component 
	$\eta_0'\coloneq\{\nabla_\theta \mu_{t,0}^{\theta,\alpha} \mid_{\theta=\theta_0^\star(\alpha)}\}_{t=1}^T$,
	reflecting that inference for an argmax depends on the behavior of the gradient process and therefore on derivatives of the forward regressions with respect to $\theta$. 
	
	Analogous to \cref{corollary:uipw-asymp-alpha}, if $\alpha_n=o_p(n^{-1/2})$ then $\theta_n^\star(\alpha_n)$ admits the same first-order asymptotic linear representation as the untilted optimizer estimator $\theta_n^\star(0)$.
	\begin{corollary}
		\label{corollary:opt-theta-asym-alpha}
		Suppose that
		\cref{assumption:identification:positivity,,assumption:bounded,,assumption:q-smooth,,assumption:v0-regularity-uniform} hold,
		that $\pi_{t,n}$ is sufficiently undersmoothed such that \cref{eqn:hal-undersmooth-condition} holds, and that
		\cref{assumption:cadlag,,assumption:hal-space,,assumption:hal-space-deriv} hold at $\theta_0^\star(\alpha)$ uniformly over $\alpha\in[0,\alpha_\epsilon]$ for some $\alpha_\epsilon>0$.
		Then if $\alpha_n=o_p(n^{-1/2})$,
		\begin{equation*}
			\theta_n^\star(\alpha_n)-\theta_0^\star = (\P_n-\P_0)\left\{ -H_0^{-1}\phi_{\nabla \psi}(O;\eta_0,\eta_0',\theta_0^\star,\alpha=0)\right\} + o_p(n^{-1/2})\mathcomma
		\end{equation*}
		and
		\begin{equation*}
			\sqrt{n}\{\theta_n^\star(\alpha_n)-\theta_0^\star\} \dto N\left[ 0, H_0^{-1}\P_0\{\phi_{\nabla \psi}^{\otimes 2}(O;\eta_0,\eta_0',\theta_0^\star,\alpha=0)\} H_0^{-1}\right]\mathperiod
		\end{equation*}
	\end{corollary}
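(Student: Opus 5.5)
The plan is to split
\[
\theta_n^\star(\alpha_n)-\theta_0^\star=\{\theta_n^\star(\alpha_n)-\theta_n^\star(0)\}+\{\theta_n^\star(0)-\theta_0^\star\}.
\]
We then show that the first bracket is $o_p(n^{-1/2})$. The second bracket is handled by \cref{theorem:opt-theta-asym} at $\alpha=0$, which gives exactly the stated linear representation with $-H_0^{-1}\phi_{\nabla\psi}(O;\eta_0,\eta_0',\theta_0^\star,0)$. The weak convergence then follows from the central limit theorem and Slutsky's lemma. Since $\theta_n^\star(\alpha)$ is a random argmax, a cleaner route is to redo the M-estimation expansion directly at the random tilt $\alpha_n$. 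Specifically, use the first-order condition $\nabla_\theta V_n\{\theta_n^\star(\alpha_n);\alpha_n\}=0$, which holds because the barrier penalty forces an interior maximizer, together with a mean-value expansion around $\theta_0^\star$.

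\textbf{Step 1 (consistency).} We need $\sup_{\theta,\alpha\in[0,\alpha_\epsilon]}|V_n(\theta;\alpha)-V_0(\theta;\alpha)|=o_p(1)$. This follows from the uniform version of \cref{theorem:uipw-asym} over $\Theta\times[0,\alpha_\epsilon]$, which is available under the assumed uniform HAL conditions. We also need $\sup_\theta|V_0(\theta;\alpha_n)-V_0(\theta;0)|\to0$. This holds because $\alpha\mapsto (t/T)^\alpha$ is Lipschitz on $[0,\alpha_\epsilon]$ with constant at most $\max_t|\log(t/T)|\le\log T$, and the weights are bounded by strong positivity and \cref{assumption:bounded}. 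With the uniform uniqueness and curvature in \cref{assumption:v0-regularity-uniform}, the argmax theorem gives $\theta_n^\star(\alpha_n)\to\theta_0^\star$ in probability.

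\textbf{Step 2 (expansion of the gradient at $\alpha_n$).} Write
\[
0=\nabla_\theta V_n(\theta_0^\star;\alpha_n)+\nabla^2_\theta V_n(\tilde\theta;\alpha_n)\{\theta_n^\star(\alpha_n)-\theta_0^\star\}.
\]
The Hessian term converges to $H_0$: the third-derivative bounds in \cref{assumption:q-smooth:derivs} make $\nabla_\theta^2 V_n$ equicontinuous, a uniform law of large numbers covers the undersmoothed weights, and continuity in $\alpha$ finishes it. For the gradient, decompose
\[
\nabla_\theta V_n(\theta_0^\star;\alpha_n)=\{\nabla_\theta V_n(\theta_0^\star;\alpha_n)-\nabla_\theta V_0(\theta_0^\star;\alpha_n)\}+\{\nabla_\theta V_0(\theta_0^\star;\alpha_n)-\nabla_\theta V_0(\theta_0^\star;0)\}.
\]
The first-order condition gives $\nabla_\theta V_0(\theta_0^\star;0)=0$, and the penalty term cancels in the first brace. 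For the first brace, the uniform-in-$\alpha$ version of \cref{lemma:psi-deriv-asymp} gives $(\P_n-\P_0)\phi_{\nabla\psi}(\theta_0^\star,\alpha_n)+o_p(n^{-1/2})$. A stochastic equicontinuity argument then replaces $\alpha_n$ by $0$: the class $\{\phi_{\nabla\psi}(\cdot;\theta_0^\star,\alpha):\alpha\in[0,\alpha_\epsilon]\}$ is Donsker as a Lipschitz-in-$\alpha$ family of bounded functions, and it is $L_2$-continuous in $\alpha$.

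\textbf{Step 3 (main obstacle: the estimand drift).} We must show $\nabla_\theta\psi_0(\theta_0^\star;\alpha_n)-\nabla_\theta\psi_0(\theta_0^\star;0)=o_p(n^{-1/2})$. The approach is to differentiate $\nabla_\theta\P_0(W_0^{\theta,\alpha}Y)$ in $\alpha$. Each factor has the form $(t/T)^\alpha\{q_t^\theta/\pi_{t,0}-1\}+1$, so its $\alpha$-derivative is bounded by $\log T\cdot(\bar q/\epsilon_\pi+1)$. The same bound holds after applying $\nabla_\theta$, using $C_1$. Bounding the product of $T$ factors gives a uniform Lipschitz constant $L$, so the drift is at most $L\alpha_n=o_p(n^{-1/2})$. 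This is exactly where the rate $\alpha_n=o_p(n^{-1/2})$ is used, rather than mere consistency. Combining the pieces and inverting $H_0$ gives the stated expansion, and the asymptotic normality follows from the central limit theorem applied to the i.i.d. average of the bounded influence function.
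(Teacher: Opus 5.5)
Your argument is correct, but it is organized differently from the paper's. The paper writes $\theta_n^\star(\alpha_n)-\theta_0^\star(0)$ as an estimation error about the tilted target, $\theta_n^\star(\alpha_n)-\theta_0^\star(\alpha_n)$, plus an optimizer drift $\theta_0^\star(\alpha_n)-\theta_0^\star(0)$, mirroring the proof of \cref{corollary:uipw-asymp-alpha}. It controls these through \cref{lemma:opt-theta-uniform-alpha}, which supplies three things: the expansion of \cref{theorem:opt-theta-asym} uniformly over $\alpha\in[0,\alpha_\epsilon]$; stochastic equicontinuity of $H_0(\alpha)^{-1}\phi_{\nabla\psi}\{\cdot;\theta_0^\star(\alpha),\alpha\}$ at $\alpha=0$; and $\lVert\theta_0^\star(\alpha)-\theta_0^\star(0)\rVert\le C\alpha$. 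The last bound is obtained from the gradient drift $\sup_\theta\lVert\nabla_\theta V_0(\theta;\alpha)-\nabla_\theta V_0(\theta;0)\rVert\le L\alpha$ together with uniform curvature along the segment $\mathcal L(\alpha)$.

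You instead linearize the first-order condition at the random tilt directly around the untilted $\theta_0^\star$. The tilt then enters only through that same gradient drift, and you never have to convert it into a statement about the population optimizer. This gives you lighter requirements. You do not need $\theta_0^\star(\alpha)$ to exist uniquely for $\alpha>0$, nor curvature along $\mathcal L(\alpha)$, nor Lipschitz dependence of $\alpha\mapsto\theta_0^\star(\alpha)$ and $\alpha\mapsto H_0(\alpha)^{-1}$. You use only the $\alpha=0$ case of \cref{assumption:v0-regularity-uniform} plus joint continuity of the Hessian in $(\theta,\alpha)$, and your equicontinuity class is indexed by $\alpha$ alone at a fixed $\theta$. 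You also make explicit the consistency step that the paper's proof of \cref{theorem:opt-theta-asym} takes for granted. The paper's route, for its part, yields Lipschitz continuity of the tilted optimizer as a by-product.

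Two bookkeeping remarks. First, your Step 2 uses \cref{assumption:hal-space,assumption:hal-space-deriv} at $(\theta_0^\star(0),\alpha)$ rather than at $(\theta_0^\star(\alpha),\alpha)$ as the corollary literally states; this is comparable to the neighborhood condition the paper's own lemma imposes. Second, the sup-norm consistency in Step 1 needs only $L_2$-consistency of $\pi_{t,n}$ and a Glivenko--Cantelli argument, not a uniform version of \cref{theorem:uipw-asym} over all of $\Theta$. In fact, because $\alpha_n=o_p(n^{-1/2})$, you can apply your Step 3 Lipschitz bound to the estimated gradient as well (with $\pi_{t,n}$ bounded away from zero). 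That gives $\nabla_\theta\psi_n(\theta_0^\star;\alpha_n)-\nabla_\theta\psi_n(\theta_0^\star;0)=O_p(\alpha_n)$, so \cref{lemma:psi-deriv-asymp} at $\alpha=0$ alone would suffice, with no uniform-in-$\alpha$ machinery.
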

	
	\begin{remark}
		Similar to variance estimation for the regimen-response curve, the plug-in variance estimator $ H_n(\alpha)^{-1}\P_n[\phi^{\otimes2}_{\nabla\psi}\{O; \eta_n, \eta_n',\theta_n^\star(\alpha),\alpha)\}]  H_n(\alpha)^{-1}$ requires consistent estimation of all outcome regression functions embedded in $\eta_0$; however, this additionally requires consistent estimation of the derivatives of all outcome regression functions, as expressed through nuisance parameter $\eta_n'$.
		Again, variance estimation can instead be performed using the bootstrap or conservatively by treating the propensity score as fixed.
	\end{remark}
	
	\subsection{Data-Adaptive Tilting}

	The asymptotic results in \cref{corollary:uipw-asymp-alpha,,corollary:opt-theta-asym-alpha} require $\alpha_n=o_p(n^{-1/2})$ so that the tilted estimators share the same first-order limit as the untilted analogs.
	The following lemma provides sufficient conditions for the trimmed lower-confidence-bound selector $\tilde\alpha_n$ to attain the necessary rate.
	
	\begin{lemma}
		\label{lemma:alpha-clip-op}
		Let $\delta>0$.
		Suppose that \cref{assumption:identification:positivity,,assumption:bounded,,assumption:q-smooth} hold, that $\pi_{t,n}$ is sufficiently undersmoothed such that \cref{eqn:hal-undersmooth-condition} holds, that \cref{assumption:cadlag,,assumption:hal-space,,assumption:hal-space-deriv} hold uniformly at $\theta_0^\star(\alpha)$ uniformly over $\alpha\in[0,\delta]$, and that \cref{assumption:v0-regularity-uniform} holds.
		Moreover, suppose that
		\begin{enumerate}[label=(\alph*),ref=\thelemma(\alph*)]
			\item The population profiled value function satisfies $\alpha_0\coloneq\argmax_{\alpha\ge0} \bar{V}_0(\alpha)=0$, $\bar{V}_0(0)-\bar{V}_0(\alpha)\ge c\alpha$ for all $\alpha\in[0,\delta]$, and $\inf_{\alpha>\delta}\{\bar V_0(0) - \bar V_0(\alpha)\} \ge \epsilon_\delta >0$. 
			\item The profiled standard error estimator satisfies $\sup_{\alpha\in[0,\delta]} n^{-1/2}\bar\sigma_n(\alpha)=O_p(n^{-1/2})$
		\end{enumerate}
		Then, $\alpha_n = O_p(n^{-1/2})$ and $\tilde\alpha_n=o_p(n^{-1/2})$ for any $\kappa\ge0$.
	\end{lemma}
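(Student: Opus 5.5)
The plan is an argmax (M-estimation) argument for the criterion $M_n(\alpha)\coloneq\bar V_n(\alpha)-\kappa n^{-1/2}\bar\sigma_n(\alpha)$, compared against its population analogue $\bar V_0(\alpha)$, which by (a) is uniquely maximized at $\alpha=0$ with a linear separation near zero. The key intermediate is a uniform bound on the profiled estimation error,
\[
\sup_{\alpha\in[0,\delta]}\lvert\bar V_n(\alpha)-\bar V_0(\alpha)\rvert=O_p(n^{-1/2}),
\]
together with uniform consistency $\sup_{\alpha\ge0}\lvert\bar V_n(\alpha)-\bar V_0(\alpha)\rvert=o_p(1)$. Given these, I first establish consistency and then the rate.

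For the uniform bound on $[0,\delta]$, I write
\[
\bar V_n(\alpha)-\bar V_0(\alpha)=\bigl[V_n\{\theta_n^\star(\alpha);\alpha\}-V_n\{\theta_0^\star(\alpha);\alpha\}\bigr]+\bigl[V_n\{\theta_0^\star(\alpha);\alpha\}-V_0\{\theta_0^\star(\alpha);\alpha\}\bigr].
\]
The second bracket is $\psi_n-\psi_0$ at $\{\theta_0^\star(\alpha),\alpha\}$, since the penalty is deterministic. By the uniform (in $(\theta,\alpha)$) version of the expansion in \cref{theorem:uipw-asym}, it equals $(\P_n-\P_0)\phi_\psi+o_p(n^{-1/2})$ uniformly. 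The class $\{\phi_\psi(\cdot;\eta_0,\theta,\alpha)\}$ is Donsker under \cref{assumption:bounded,,assumption:q-smooth} and the c\`adl\`ag/HAL conditions, so this bracket is $O_p(n^{-1/2})$ uniformly. The first bracket is nonnegative by definition of $\theta_n^\star$. It is bounded above by $\sup_\theta\lvert V_n-V_0\rvert(\cdot;\alpha)$ times two, using $V_0\{\theta_n^\star(\alpha)\}\le V_0\{\theta_0^\star(\alpha)\}$, and hence is also $O_p(n^{-1/2})$. This gives the two-sided bound. Uniform consistency over $\alpha>\delta$ follows the same way, because $\psi_n-\psi_0$ is uniformly $o_p(1)$ over $\Theta\times[0,\infty)$. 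Here I use that the weights are convex combinations bounded under positivity, and that $\alpha\mapsto(t/T)^\alpha$ is monotone, which keeps the class manageable up to $\alpha=\infty$.

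Consistency comes next. By (b), the penalty term $\kappa n^{-1/2}\bar\sigma_n$ is $O_p(n^{-1/2})$ on $[0,\delta]$; for $\alpha>\delta$ it is nonnegative, so it can only lower $M_n$. On $\{\alpha_n>\delta\}$ we have $M_n(\alpha_n)\ge M_n(0)$, hence
\[
\bar V_0(\alpha_n)\ge\bar V_0(0)-o_p(1)-O_p(n^{-1/2}),
\]
which contradicts $\inf_{\alpha>\delta}\{\bar V_0(0)-\bar V_0(\alpha)\}\ge\epsilon_\delta$ with probability tending to one. So $P(\alpha_n\le\delta)\to1$.

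For the rate, on $\{\alpha_n\le\delta\}$ the inequality $M_n(\alpha_n)\ge M_n(0)$ gives
\[
c\,\alpha_n\le\bar V_0(0)-\bar V_0(\alpha_n)\le 2\sup_{[0,\delta]}\lvert\bar V_n-\bar V_0\rvert+2\kappa\sup_{[0,\delta]}n^{-1/2}\bar\sigma_n=O_p(n^{-1/2}).
\]
Thus $\alpha_n=O_p(n^{-1/2})$; the constant $\kappa$ enters only additively, so the conclusion holds for any $\kappa\ge0$. For the clipped selector, $\tilde\alpha_n\neq0$ only if $\alpha_n\ge n^{-1/4}$, i.e.\ $n^{1/2}\alpha_n\ge n^{1/4}\to\infty$. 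Since $n^{1/2}\alpha_n$ is tight, $P(\tilde\alpha_n\ne0)\to0$. Hence $n^{1/2}\tilde\alpha_n$ equals $0$ with probability tending to one, which gives $\tilde\alpha_n=o_p(n^{-1/2})$ (indeed, exactly $0$ eventually).

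The main obstacle is the uniform-in-$\alpha$ control of the profiled quantity, specifically the Donsker and uniform-remainder claim over $\alpha$ including the noncompact range $\alpha>\delta$. For $[0,\delta]$, I would invoke the uniform hypotheses of the lemma together with \cref{theorem:uipw-asym}. For $\alpha>\delta$, I would only need a Glivenko--Cantelli statement. I would obtain it by reparametrizing $u_t=(t/T)^\alpha\in[0,1]$, which compactifies the index set and makes the weights multilinear in $(u_1,\dots,u_T)$.
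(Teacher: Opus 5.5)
Your proposal is correct and follows essentially the paper's argument: bound the profiled error by $\sup_{\theta}\lvert\psi_n-\psi_0\rvert$ (globally $o_p(1)$, and $O_p(n^{-1/2})$ on $[0,\delta]$), use the well-separated maximum at $\alpha=0$ to get $P(\alpha_n\le\delta)\to1$, use the linear separation to get $c\,\alpha_n\le\bar V_0(0)-\bar V_0(\alpha_n)=O_p(n^{-1/2})$, and conclude that $\alpha_n<n^{-1/4}$ with probability tending to one, so $\tilde\alpha_n=0$. The differences are cosmetic: your detour through $\theta_0^\star(\alpha)$ plus the nonnegative optimization gap reaches the bound the paper gets directly from $\lvert\sup_\theta V_n-\sup_\theta V_0\rvert\le\sup_\theta\lvert V_n-V_0\rvert$, and your reparametrization $u_t=(t/T)^\alpha$ justifies the global uniform consistency that the paper simply asserts.
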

	
	The first condition of \cref{lemma:alpha-clip-op} requires that the untilted regimen response curve attains the highest possible value function (as discussed in \cref{remark:v-bias}) and that there is linear separation near $\alpha=0$. 
	The second condition ensures that $\sup_{\alpha\in[0,\delta]}\lvert\{\bar V_n(\alpha)-\kappa n^{-1/2}\bar \sigma_n(\alpha)\}-\bar V_0(\alpha)\rvert=O_p(n^{-1/2})$.
	We note that this is a weak assumption that generally holds under \cref{assumption:identification:positivity,,assumption:bounded}.
	Together, these conditions enable an argmax argument for $\alpha_n$ that accounts for the boundary maximizer.  
	
	\section{Numerical Simulations}
	
	\subsection{Simulation Setup}
	
	The simulation study was designed to generate longitudinal data with features that could plausibly arise in studies of mobile prompts intended to support cigarette quit attempts.
	In this setting, prompts may be useful when an individual is experiencing elevated negative affect, particularly when cigarettes are available, but may be burdensome when delivered at low-risk moments.
	We therefore generated repeated measurements of two time-varying state variables $S_{t,1}$ and $S_{t,2}$, intended to respectively represent negative affect and cigarette availability at time $t$.
	For participant $i=1,\ldots,n$ we generated the trajectory $O \coloneq (S_{1},A_{1}, \ldots, S_{T},A_{T},Y)$ where $S_{t}\coloneq (S_{t,1},S_{t,2})$ contains both state variables.
	The data generation process is described in \cref{sec:simulation-details}.
	
	We considered the one-dimensional policy class $q_t^\theta(A_t,S_t)=\expit(\theta S_{t,1})$ with $\Theta=[-0.73,0.73]$.
	This index set, constructed following \cref{sec:index-set}, ensures that $q_t^\theta(a,S_t)\in[0.05,0.95]$ for all $a\in\{0,1\}$, $S_t\in\mathcal S_t$. 
	Policy optimization was performed using a penalty weight of $\lambda=0.01$.
	
	We considered five IPW estimators, which differed only in how the treatment mechanism was handled.
	The oracle estimator used the true treatment probabilities $\pi_{t,0}$.
	All other estimators used pooled propensity score models fit to observations from all $T$ decision points, with decision time included as a covariate.
	The parametric estimator used a correctly specified logistic regression model with linear terms for $S_{t,1}$, $S_{t,2}$, $t$, and $S_{t,1}S_{t,2}$.
	The remaining three estimators used the HAL to estimate the pooled propensity score as a function of $S_{t,1}$, $S_{t,2}$, and $t$, and were cross-fit over two external folds.
	The CV HAL estimator used the standard HAL fit with the $L_1$-norm selected by five-fold cross-validation.
	The UIPW-$\Dcar$ and UIPW-Score estimators selected undersmoothed HAL fits by minimizing, respectively, aggregated $L_2$ $\Dcar$ (\cref{eqn:l2-dcar-criterion-body}) and score (\cref{eqn:score-criterion-body}) criteria.
	Because a pooled propensity model was used, both UIPW selectors targeted sums of the corresponding time-specific criteria across $t=1,\ldots,T$, matching the decomposition of the longitudinal IPW remainder.
	For each estimator, we considered both untilted ($\alpha=0$) and data-adaptive (\cref{sec:adaptive-tilting}) implementations.
	We implemented truncated product-weights for all estimators, forcing $\prod_{t=1}^TW_{t,n}^{\theta,\alpha}\le20$.
	
	We evaluated estimator performance under the following factorial design: $a\in\{0,0.25\}$ (where $\pi_{t,0}(A_t=1\mid H_t) = \expit(a S_{t,1})$, thereby considering MRT and observational designs), $n\in\{100,500,1000\}$, and $T\in\{10,25,50\}$.
	We report relevant metrics, averaged across $200$ Monte Carlo replications, as follows. 
	For the regimen response curve, we summarize the pointwise bias, mean squared error (MSE), and coverage as a function of $\theta$. 
	For the optimal policy, we summarize the square root mean squared error (RMSE)
	and coverage.%
	
	\subsection{Simulation Results}
	
	We summarize performance for both regimen-response curve estimation and policy learning under the MRT ($a=0$) and observational ($a=0.25$) assignment mechanisms.
	
	\cref{fig:psi-bias} presents the scaled bias of all estimators without tilting (i.e., $\alpha=0$) and shows that the bias was generally small in the MRT setting across all estimators.
	We note that in this setting, CV HAL often selected very small bases of $1$--$2$ functions, and therefore behaved similarly to a low-dimensional or near-parametric estimator.
	In contrast, in the observational setting, CV HAL shows more pronounced scaled bias as $n$ increases, especially when $T=10$.
	In contrast, the UIPW estimators generally controlled the bias.
	This pattern supports the role of undersmoothing in reducing the leading IPW bias term when the treatment mechanism must be learned nonparametrically.
	
	\begin{figure}
		\centering
		\includegraphics[width=\linewidth]{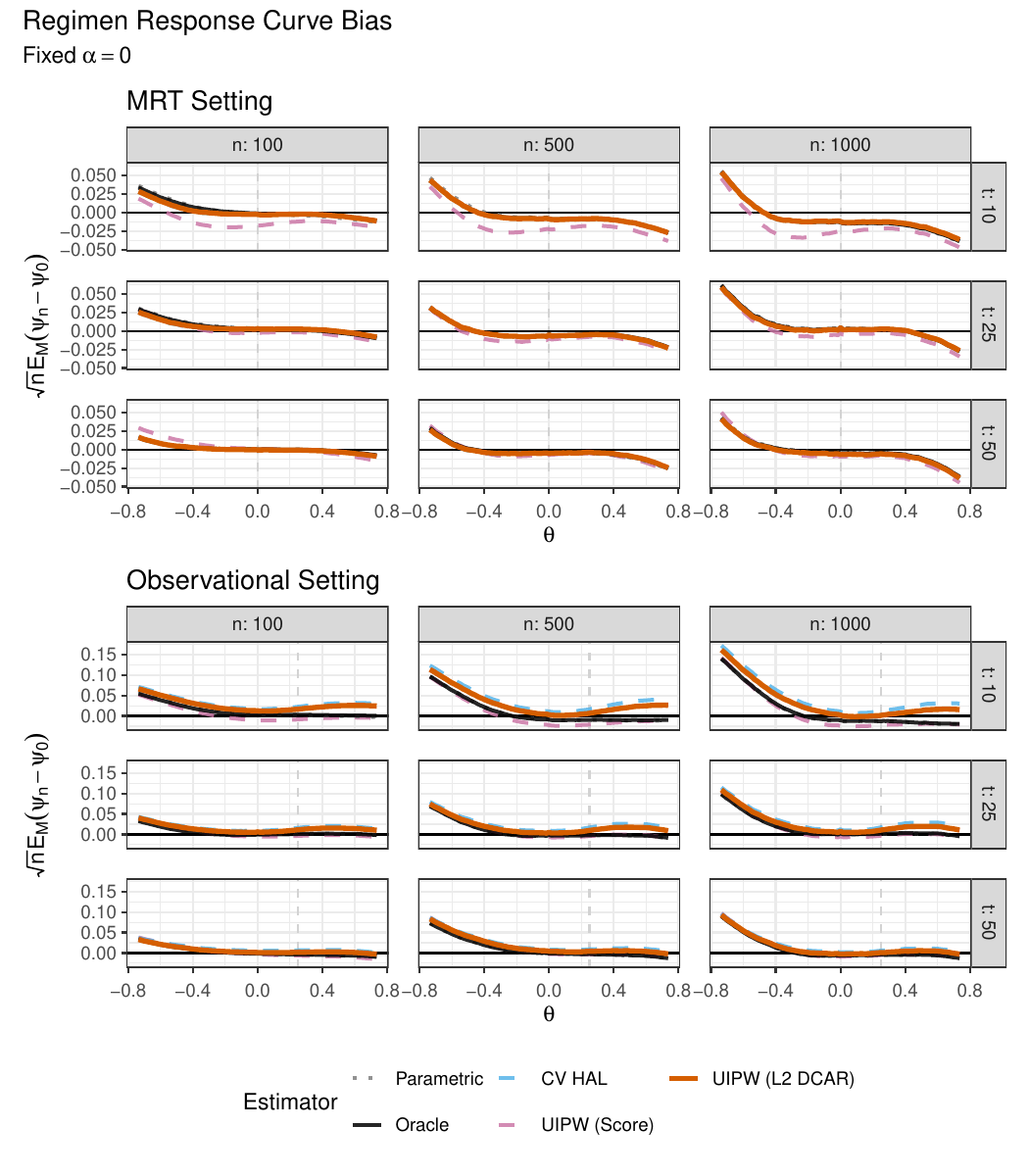}
		\caption{Scaled pointwise bias for the regimen response curve.}
		\label{fig:psi-bias}
	\end{figure}
	
	\cref{fig:psi-coverage} demonstrates broadly similar pointwise coverage patterns across all untilted estimators, including the oracle IPW estimator.
	Coverage was generally close to the nominal rate for moderate values of $\theta$, while undercoverage was observed when the target policy differed more from the observational policy. 
	The fact that undercoverage appears in similar regions for all estimators---including the oracle IPW estimator---suggests that the main driver was not estimator-specific bias control or the variance estimator, but numerical positivity issues associated with the longitudinal weights. 
	\cref{fig:psi-coverage-with-alpha} compares coverage for the oracle and UIPW estimators using data-adaptive $\alpha_n$ tilting.
	There, tilting resulted in decreased coverage for $\theta$ far from the observational policy; this was more pronounced in the MRT setting, where larger $\alpha_n$ were generally selected, resulting in larger bias away from the observational policy.
	
	\begin{figure}
		\centering
		\includegraphics[width=\linewidth]{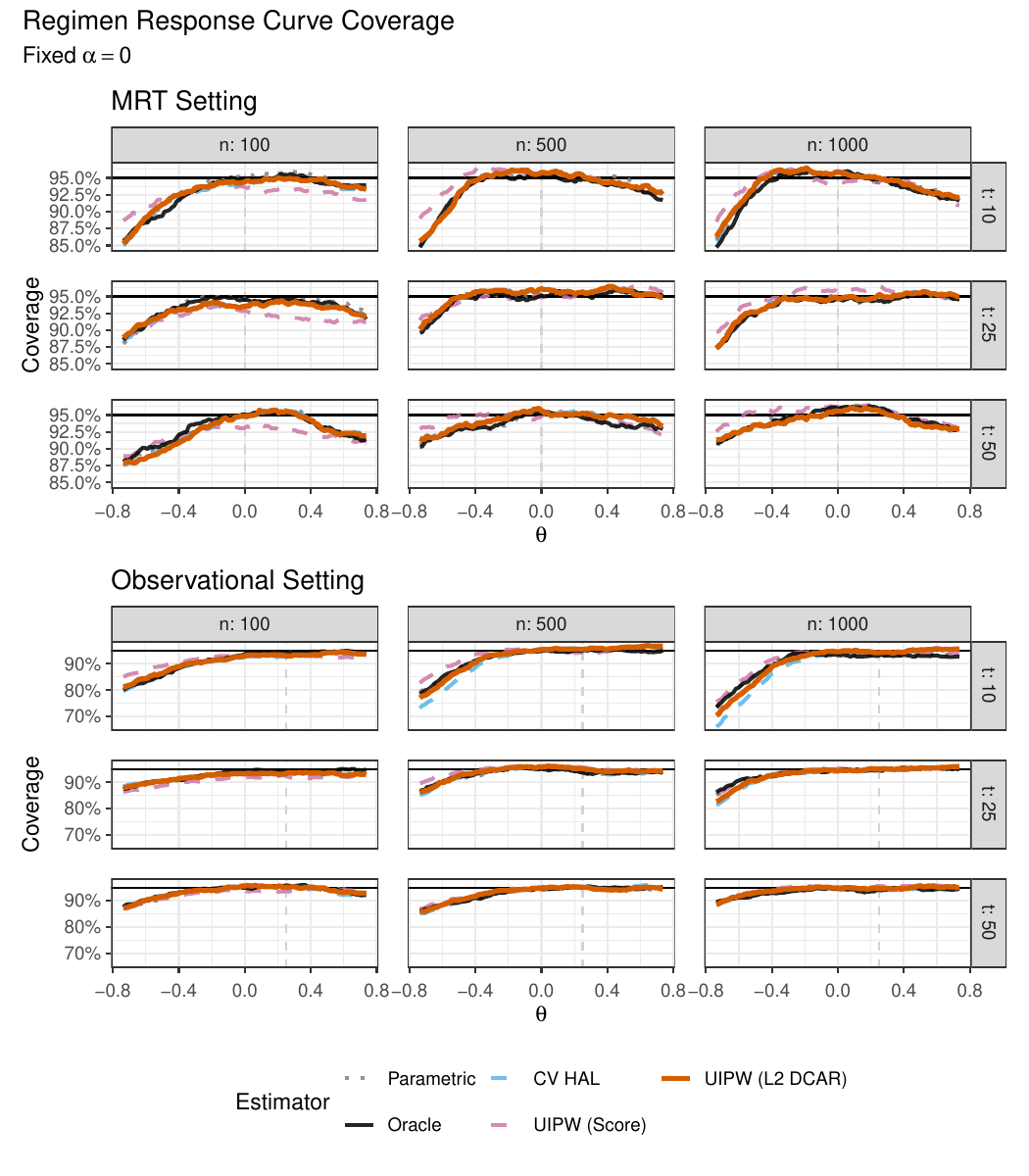}
		\caption{Pointwise coverage for the regimen response curve.}
		\label{fig:psi-coverage}
	\end{figure}
	
	\cref{fig:psi-rel-eff} presents the relative efficiency of each estimator to the oracle IPW estimator in the absence of tilting.
	In the MRT setting, most estimators behaved similarly to the oracle IPW estimator. 
	Score-based undersmoothing resulted in larger variances than all other estimators nearly uniformly across settings; this is likely the result of the score-based selector often selecting the maximum amount of undersmoothing allowed, resulting in potential numerical instability.
	In the observational setting, the relative efficiencies of all nonparametric estimators varied more strongly in $\theta$.
	This heterogeneity reflects that the outcome regression functions $\mu_{t,0}^{\theta,\alpha}$ that appear in the leading bias term, \cref{eqn:ipw-remainder-generic}, depend on $\theta$. 
	Hence, some regions of the policy class appear to induce more stable or easier-to-approximate empirical score functions than others; this is reflected in efficiency gains where these functions are adequately solved, and potential inefficiency otherwise.
	
	\begin{figure}
		\centering
		\includegraphics[width=\linewidth]{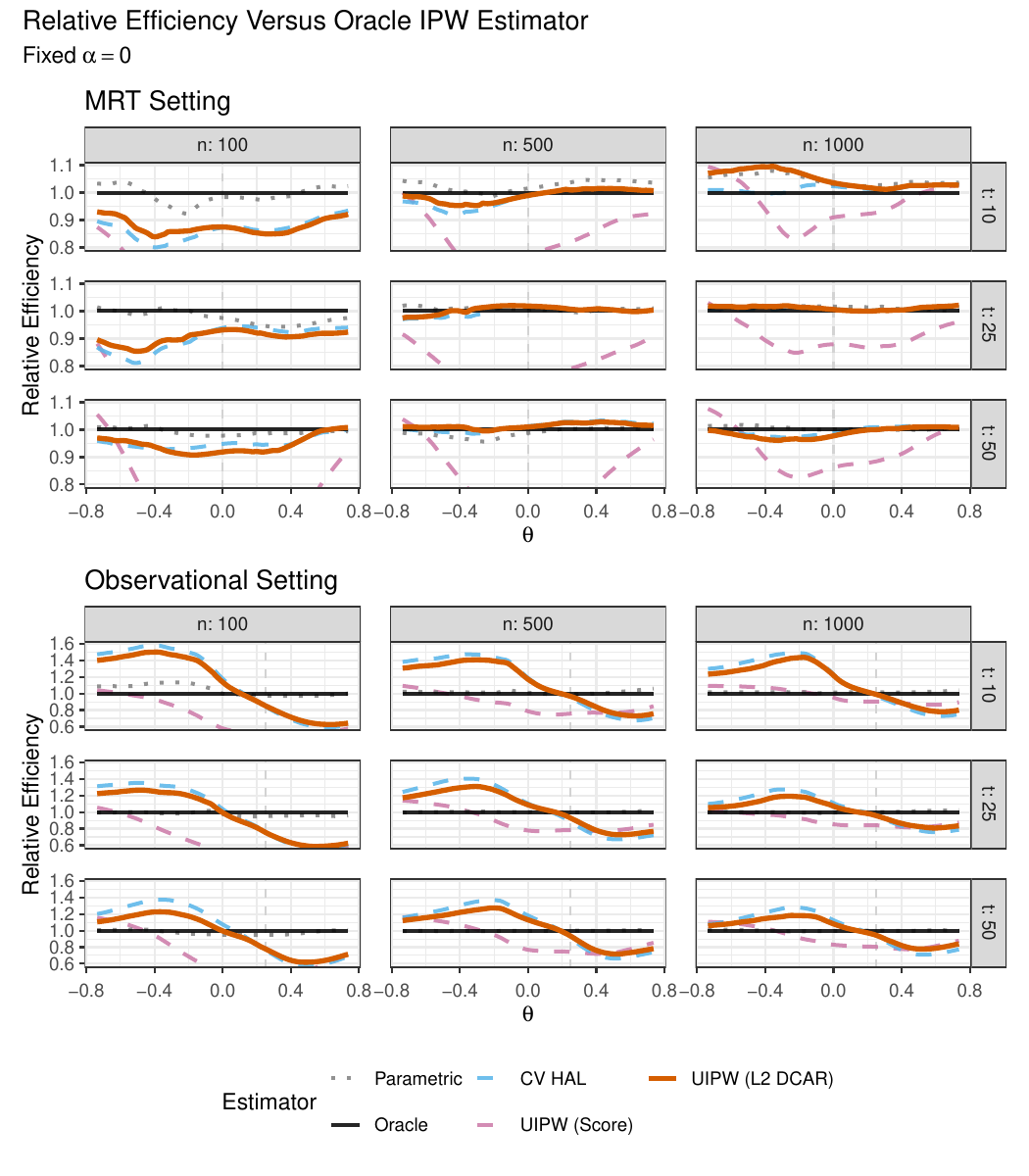}
		\caption{Pointwise relative efficiency for the regimen response curve versus the oracle inverse probability weighted estimator.
			Note: the vertical axis has been truncated below for visual clarity.}
		\label{fig:psi-rel-eff}
	\end{figure}
	
	\cref{fig:psi-rel-mse-tilt} shows that adaptive tilting reduced the MSE for nearly all estimators and settings.
	These gains were not restricted to estimators with flexibly estimated propensity scores---oracle and correctly specified parametric IPW estimators also benefited.
	These gains were often largest in larger-$T$ settings, consistent with the accumulation of weight variability across repeated decision points.
	
	\begin{figure}
		\centering
		\includegraphics[width=\linewidth]{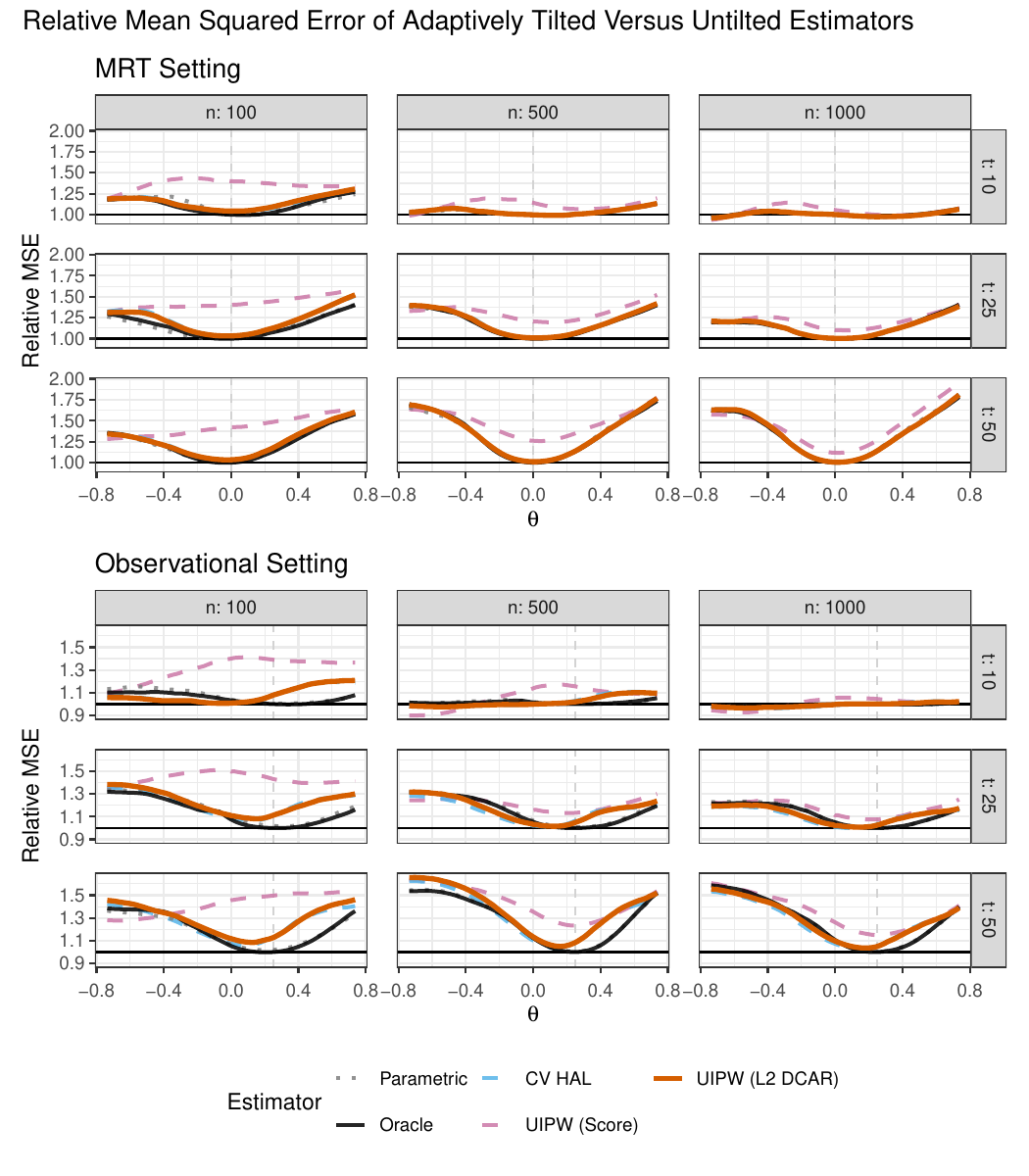}
		\caption{Pointwise relative mean squared error for the regimen response curve for each estimator with data-adaptive tilting versus its untilted analog.}
		\label{fig:psi-rel-mse-tilt}
	\end{figure}
	
	\cref{tbl:policy-rmse} presents the scaled RMSE for each policy selector induced by the corresponding regimen response curve estimator, with and without adaptive tilting.
	Adaptive tilting generally improved policy RMSE relative to the untilted comparators; the largest gaps between tilted and untilted estimators tended to occur in high-$T$ settings.
	Additionally, UIPW using the $L_2$ $\Dcar$ selector generally had similar RMSE and regret to the optimal selector within each setting.
	
	\cref{tbl:policy-coverage} displays the coverage of each policy estimator.
	Most estimators attained or surpassed the nominal coverage rate for moderate $n$.
	In some settings, all estimators demonstrated overcoverage.

	\begin{landscape}
		\begin{table}[htpb]
			\centering
			\caption{Scaled root mean squared error in estimating the optimal stochastic policy.}
			\label{tbl:policy-rmse}
			\setlength{\tabcolsep}{1.5pt}
			\fontsize{12.0pt}{14.0pt}\selectfont
\begin{tabular*}{\linewidth}{@{\extracolsep{\fill}}llrrrrrrrrr}
\toprule
 &  & \multicolumn{9}{c}{\(\sqrt{n}\,\sqrt{\mathrm{E}_B\left[\{\theta_n^{\star}(\alpha_n)-\theta_0^{\star}(0)\}^2\right]}\)} \\ 
\cmidrule(lr){3-11}
 &  & \multicolumn{3}{c}{$T = 10$} & \multicolumn{3}{c}{$T = 25$} & \multicolumn{3}{c}{$T = 50$} \\ 
\cmidrule(lr){3-5} \cmidrule(lr){6-8} \cmidrule(lr){9-11}
Estimator & Tilting & $n = 100$ & $n = 500$ & $n = 1000$ & $n = 100$ & $n = 500$ & $n = 1000$ & $n = 100$ & $n = 500$ & $n = 1000$ \\ 
\midrule\addlinespace[2.5pt]
\multicolumn{11}{l}{Micro-Randomized Trial} \\[2.5pt] 
\midrule\addlinespace[2.5pt]
UIPW (Score) & Adaptive & 3.69 & 1.25 & {\bfseries 1.09} & {\bfseries 4.22} & 2.31 & 1.64 & {\bfseries 4.72} & 3.01 & 2.16 \\ 
UIPW (Score) & Untilted & 4.47 & 2.83 & 1.84 & 5.70 & 6.27 & 5.04 & 6.49 & 8.56 & 8.15 \\ 
UIPW (L2 DCAR) & Adaptive & {\bfseries 3.45} & 1.27 & 1.12 & 4.43 & 2.25 & 1.18 & 4.83 & 3.06 & {\bfseries 1.74} \\ 
UIPW (L2 DCAR) & Untilted & 4.08 & 2.30 & 1.91 & 5.54 & 5.38 & 4.83 & 6.41 & 8.24 & 7.08 \\ 
CV HAL & Adaptive & 3.46 & 1.28 & 1.14 & 4.48 & 2.25 & {\bfseries 1.17} & 4.88 & 3.30 & 1.75 \\ 
CV HAL & Untilted & 4.06 & 2.31 & 1.91 & 5.57 & 5.39 & 4.97 & 6.45 & 8.20 & 7.21 \\ 
Parametric & Adaptive & 3.68 & {\bfseries 1.10} & 1.16 & 4.52 & 2.04 & 1.86 & 4.86 & 3.02 & 1.87 \\ 
Parametric & Untilted & 4.14 & 2.25 & 1.97 & 5.64 & 5.63 & 4.87 & 6.32 & 8.21 & 7.55 \\ 
Oracle & Adaptive & 3.75 & 1.13 & 1.18 & 4.53 & {\bfseries 1.47} & 1.19 & 4.83 & {\bfseries 2.89} & 2.06 \\ 
Oracle & Untilted & 4.16 & 2.16 & 2.02 & 5.55 & 5.30 & 5.03 & 6.32 & 8.32 & 7.49 \\ 
\midrule\addlinespace[2.5pt]
\multicolumn{11}{l}{Observational} \\[2.5pt] 
\midrule\addlinespace[2.5pt]
UIPW (Score) & Adaptive & 4.10 & 1.34 & 1.00 & 4.52 & {\bfseries 2.59} & 1.59 & 5.22 & 4.86 & {\bfseries 2.64} \\ 
UIPW (Score) & Untilted & 4.76 & 2.97 & 1.13 & 6.12 & 7.45 & 5.28 & 6.65 & 10.57 & 9.33 \\ 
UIPW (L2 DCAR) & Adaptive & 3.56 & 0.74 & 0.86 & 4.16 & 2.67 & 0.95 & 4.90 & {\bfseries 4.83} & 2.93 \\ 
UIPW (L2 DCAR) & Untilted & 4.06 & 1.78 & 1.12 & 5.58 & 6.85 & 4.68 & 6.54 & 10.52 & 8.70 \\ 
CV HAL & Adaptive & {\bfseries 3.41} & {\bfseries 0.73} & {\bfseries 0.85} & {\bfseries 4.12} & 2.63 & {\bfseries 0.93} & {\bfseries 4.81} & 4.93 & 2.94 \\ 
CV HAL & Untilted & 4.03 & 1.40 & 1.14 & 5.53 & 6.73 & 4.07 & 6.42 & 10.15 & 8.29 \\ 
Parametric & Adaptive & 4.00 & 1.38 & 1.67 & 4.60 & 3.58 & 1.36 & 5.33 & 5.23 & 3.29 \\ 
Parametric & Untilted & 4.63 & 2.64 & 1.35 & 5.95 & 7.57 & 4.25 & 6.85 & 11.13 & 9.30 \\ 
Oracle & Adaptive & 3.95 & 1.38 & 1.01 & 4.56 & 3.62 & 1.36 & 5.21 & 5.12 & 3.27 \\ 
Oracle & Untilted & 4.54 & 2.71 & 1.37 & 5.95 & 7.73 & 4.43 & 6.76 & 10.90 & 9.08 \\ 
\bottomrule
\end{tabular*}

		\end{table}
	\end{landscape}

	\begin{landscape}
		\begin{table}[htpb]
			\centering
			\caption{Coverage for the optimal stochastic policy.}
			\label{tbl:policy-coverage}
			\setlength{\tabcolsep}{1.5pt}
			\fontsize{12.0pt}{14.0pt}\selectfont
\begin{tabular*}{\linewidth}{@{\extracolsep{\fill}}llrrrrrrrrr}
\toprule
 &  & \multicolumn{9}{c}{Coverage (\%) for \(\theta_0^{\star}(0)\)} \\ 
\cmidrule(lr){3-11}
 &  & \multicolumn{3}{c}{$T = 10$} & \multicolumn{3}{c}{$T = 25$} & \multicolumn{3}{c}{$T = 50$} \\ 
\cmidrule(lr){3-5} \cmidrule(lr){6-8} \cmidrule(lr){9-11}
Estimator & Tilting & $n = 100$ & $n = 500$ & $n = 1000$ & $n = 100$ & $n = 500$ & $n = 1000$ & $n = 100$ & $n = 500$ & $n = 1000$ \\ 
\midrule\addlinespace[2.5pt]
\multicolumn{11}{l}{Micro-Randomized Trial} \\[2.5pt] 
\midrule\addlinespace[2.5pt]
UIPW (Score) & Adaptive & 89 & 99 & 100 & 88 & 99 & 99 & 87 & 98 & 99 \\ 
UIPW (Score) & Untilted & 84 & 96 & 99 & 79 & 93 & 96 & 75 & 88 & 89 \\ 
UIPW (L2 DCAR) & Adaptive & 90 & 99 & 100 & 87 & 99 & 99 & 86 & 98 & 99 \\ 
UIPW (L2 DCAR) & Untilted & 86 & 98 & 98 & 80 & 92 & 96 & 76 & 85 & 87 \\ 
CV HAL & Adaptive & 90 & 99 & 100 & 86 & 99 & 99 & 86 & 98 & 99 \\ 
CV HAL & Untilted & 87 & 98 & 98 & 80 & 92 & 96 & 76 & 85 & 87 \\ 
Parametric & Adaptive & 89 & 100 & 100 & 85 & 99 & 99 & 87 & 99 & 99 \\ 
Parametric & Untilted & 86 & 98 & 99 & 80 & 93 & 96 & 76 & 85 & 87 \\ 
Oracle & Adaptive & 89 & 100 & 100 & 86 & 99 & 99 & 86 & 99 & 99 \\ 
Oracle & Untilted & 86 & 97 & 98 & 80 & 92 & 95 & 76 & 86 & 86 \\ 
\midrule\addlinespace[2.5pt]
\multicolumn{11}{l}{Observational} \\[2.5pt] 
\midrule\addlinespace[2.5pt]
UIPW (Score) & Adaptive & 88 & 100 & 100 & 87 & 98 & 100 & 82 & 96 & 99 \\ 
UIPW (Score) & Untilted & 85 & 99 & 99 & 78 & 89 & 97 & 74 & 82 & 87 \\ 
UIPW (L2 DCAR) & Adaptive & 91 & 100 & 100 & 88 & 97 & 100 & 83 & 95 & 98 \\ 
UIPW (L2 DCAR) & Untilted & 87 & 99 & 100 & 78 & 87 & 94 & 72 & 77 & 83 \\ 
CV HAL & Adaptive & 92 & 100 & 100 & 88 & 97 & 100 & 83 & 94 & 98 \\ 
CV HAL & Untilted & 87 & 99 & 99 & 79 & 87 & 94 & 72 & 77 & 82 \\ 
Parametric & Adaptive & 89 & 100 & 100 & 87 & 98 & 100 & 82 & 96 & 99 \\ 
Parametric & Untilted & 86 & 99 & 100 & 79 & 91 & 97 & 71 & 79 & 86 \\ 
Oracle & Adaptive & 90 & 100 & 100 & 87 & 98 & 100 & 83 & 97 & 99 \\ 
Oracle & Untilted & 86 & 99 & 100 & 78 & 90 & 97 & 72 & 80 & 86 \\ 
\bottomrule
\end{tabular*}

		\end{table}
	\end{landscape}
	
	\subsection{Simulation Conclusions}
	
	In our simulations, we observed that correctly specified parametric and oracle IPW estimators can be more efficient than the UIPW estimators in small-to-moderate samples, despite our asymptotic efficiency theory.
	This suggests that the undersmoothing required to control the leading IPW bias can introduce non-negligible finite-sample variability.
	This pattern is consistent with prior work showing that practical undersmoothing implementations can have inflated variance at smaller sample sizes, diminishing with $n$.
	In the fully randomized setting studied by \cite{ertefaie_nonparametric_ipw_2023}, undersmoothing based on the score criterion yielded larger mean squared errors than both an unadjusted estimator and an IPW estimator using cross-validated HAL propensity scores, while DCAR-based undersmoothing was typically the most stable of the undersmoothing rules considered.
	Similarly, \cite{pham_nonparametric_2025} reported that, at smaller sample sizes, UIPW variants selected by either score- or $\Dcar$-based criteria can exhibit inflated variance compared with the cross-validated HAL IPW and DR estimators, with these gaps diminishing as $n$ increases.
	Our simulations show a similar pattern in longitudinal settings: the score-based selector often appeared to undersmooth aggressively and produced larger MSEs, whereas the $L_2$ $\Dcar$ selector was generally more stable and yielded policy RMSE and regret close to the best-performing selector across many settings.
	Thus, although undersmoothing is essential for first-order remainder control in the nonparametric setting, its finite-sample performance depends strongly on how the undersmoothing level was selected.
	The pragmatic $L_2$ $\Dcar$ selector used here based on a working parametric model provides one way to target the relevant longitudinal remainder while avoiding full nonparametric estimation of the policy- and time- indexed outcome regressions, but further stabilization may be possible.
	
	More broadly, the empirical efficiency gains available from undersmoothing and other asymptotically efficient estimation approaches may be smaller in the stochastic setting relative to the well-established gains in the deterministic setting. 
	Because stochastic interventions remain probabilistic, they typically yield more stable weights, leaving less variability for efficient augmentation or undersmoothing-based corrections to mitigate.
	This interpretation is consistent with the additional simulations in \cref{sec:ipw-aipw-efficiency}, which suggest that potential efficiency gains associated with (nonparametric) efficient estimation increase as stochastic interventions approach near-deterministic regimes.

	\section{Discussion}
	
	In this work, we address estimation and optimization of policies for distal outcomes in settings with many decision points, as arise in MRTs and other JITAI designs.
	In such settings, standard semiparametric-efficient estimation typically requires estimating a large collection of forward outcome regression functions indexed by both time and policy, which becomes impractical and potentially fragile as $T$ and the policy space $\Theta$ grow.
	Moreover, inverse probability weighted estimators involve products of weights across decision points, and can be numerically unstable when $T$ is large.
	We proposed a UIPW approach for estimating the regimen-response curve and optimal policy over a parametric stochastic policy class.
	Our approach is nonparametrically efficient yet avoids explicit modeling of the policy-indexed outcome regressions.
	To improve finite-sample stability, we introduced a data-adaptive tilting device that interpolates between the target policy and the observed treatment mechanism.
	We showed that the estimated regimen-response curve converges to a Gaussian process over $\Theta$, enabling the construction of simultaneous confidence bands.
	We also derived an asymptotic linearization for the policy optimizer, allowing inference for the learned optimal policy. 
	
	Our approach can be viewed as a saturated estimator of the regimen-response curve, in contrast to MSM analyses that summarize this curve through a lower-dimensional working model.
	\cite{ertefaie_nonparametric_estimation_2023} provided a complementary perspective in the single-decision point setting by defining a fully nonparametric MSM in which the covariate-adjusted regimen-response curve is characterized as a risk minimizer over a rich c\`adl\`ag function class with bounded sectional variational norm.
	A natural future direction is to extend this nonparametric MSM framework to our longitudinal stochastic setting by targeting a covariate adjusted curve $m_0(\theta;V)$ for $V\subseteq S_1$ to support $V$-specific optimal policies, all while using undersmoothing to avoid outcome regression estimation.
	Because our stochastic policy parameterization already induces substantial smoothness in $\theta$ through smoothly varying weights, we expect the primary gains of such an MSM formulation to arise from covariate adjustment and structured estimation across $V$, rather than from additional smoothing across $\theta$ alone.
	
	We introduced an $\alpha$-tilting device that interpolates between the target policy and observed treatment mechanism at each decision point, yielding a stabilized policy family that shrinks early weight contributions toward one. 
	Although our default tilting schedule prioritizes stabilization of earlier weights, one could instead use an approach that preserves the target policy more strongly at earlier decision points if early actions are thought to drive distal outcomes. 
	More flexibly, one could allow the direction of tilting to depend on a signed tuning parameter, selected in a data-adaptive way.
	This interpolation has a close conceptual parallel to incremental propensity score interventions, which conceptualize shifting the propensity score away from the observed mechanism, rather than deterministically setting treatment \citep{kennedy_nonparametric_2019}.
	A key contrast is that incremental and related approaches are motivated as positivity-robust estimands whereas our tilted estimand was developed to ease estimation when overlap holds by design but products of weights are numerically volatile; it does not resolve genuine support violations. 
	
	In summary, these developments were motivated by the practical goal of constructing JITAIs using intensive longitudinal data when the scientific objective is a distal end-of-study outcome rather than a discounted sum of proximal rewards.  
	By providing functional inference for the regimen–response curve and principled inference for the induced optimizer in a high-$T$ setting, the proposed framework helps close the methodological gap between MRT data structures and the inferential targets that commonly drive JITAI design.

	\bibliographystyle{apalike}
	\bibliography{References}
	
	\newpage
	
	\section*{Appendix}
	
	\appendix
	\renewcommand{\thetheorem}{S\arabic{theorem}}
	\renewcommand{\theassumption}{S\arabic{assumption}}
	\renewcommand{\thelemma}{S\arabic{lemma}}
	\renewcommand{\thecorollary}{S\arabic{corollary}}
	\renewcommand{\theequation}{S\arabic{equation}}
	\setcounter{theorem}{0}
	\setcounter{assumption}{0}
	\setcounter{lemma}{0}
	\setcounter{corollary}{0}
	\renewcommand{\thefigure}{S\arabic{figure}}
	\renewcommand{\thetable}{S\arabic{table}}
	\setcounter{figure}{0}
	\setcounter{table}{0}
	\renewcommand{\thesection}{S\arabic{section}}
	\setcounter{section}{0}
	
	\noindent This appendix does the following:
	\begin{description}
		\item[\cref{sec:supp-figs}] presents supplemental figures.
		\item[\cref{sec:hal}] describes the highly adaptive lasso (\cref{sec:hal-details}). It includes theoretical undersmoothing conditions (\cref{sec:undersmooth}), a strengthened version of Lemma 1 from \cite{ertefaie_nonparametric_ipw_2023} to offer supremum norm control of remainder terms (\cref{sec:undersmooth-lemma}), and practical undersmoothing implementation details (\cref{sec:undersmooth-practice}). 
		\item[\cref{sec:supp-implementation}] offers practical implementation details including index set elicitation (\cref{sec:index-set}) and accounting for randomization ineligibility common in mobile health trials (\cref{sec:ineligible}).
		\item[\cref{sec:simulation-details}] describes the full longitudinal data generation process used in the numerical simulations.
		\item[\cref{sec:auxiliary}] provides auxiliary statements and proofs used throughout the rest of the appendix.
		\item[\cref{sec:proof-estimation}] supports theoretical arguments regarding regimen-response curve estimation, proving \cref{lemma:identification-expectation,lemma:dcar-expectation,theorem:uipw-asym,corollary:uipw-asymp-alpha}.
		\item[\cref{sec:proof-optimization}] supports theoretical arguments regarding policy optimization, proving \cref{theorem:opt-theta-asym,corollary:opt-theta-asym-alpha}.
		\item[\cref{sec:proof-alpha-tilting}] supports theoretical results regarding data-adaptive tilting, proving \cref{lemma:alpha-clip-op}.
		\item[\cref{sec:ipw-aipw-efficiency}] discusses how the efficiency gains of asymptotically efficient estimators beyond conventional IPW estimators may be reduced for stochastic versus deterministic interventions.
	\end{description}

	\section{Supplemental Figures}
	\label{sec:supp-figs}
	
	\begin{figure}[H]
		\centering
		\includegraphics[width=\linewidth]{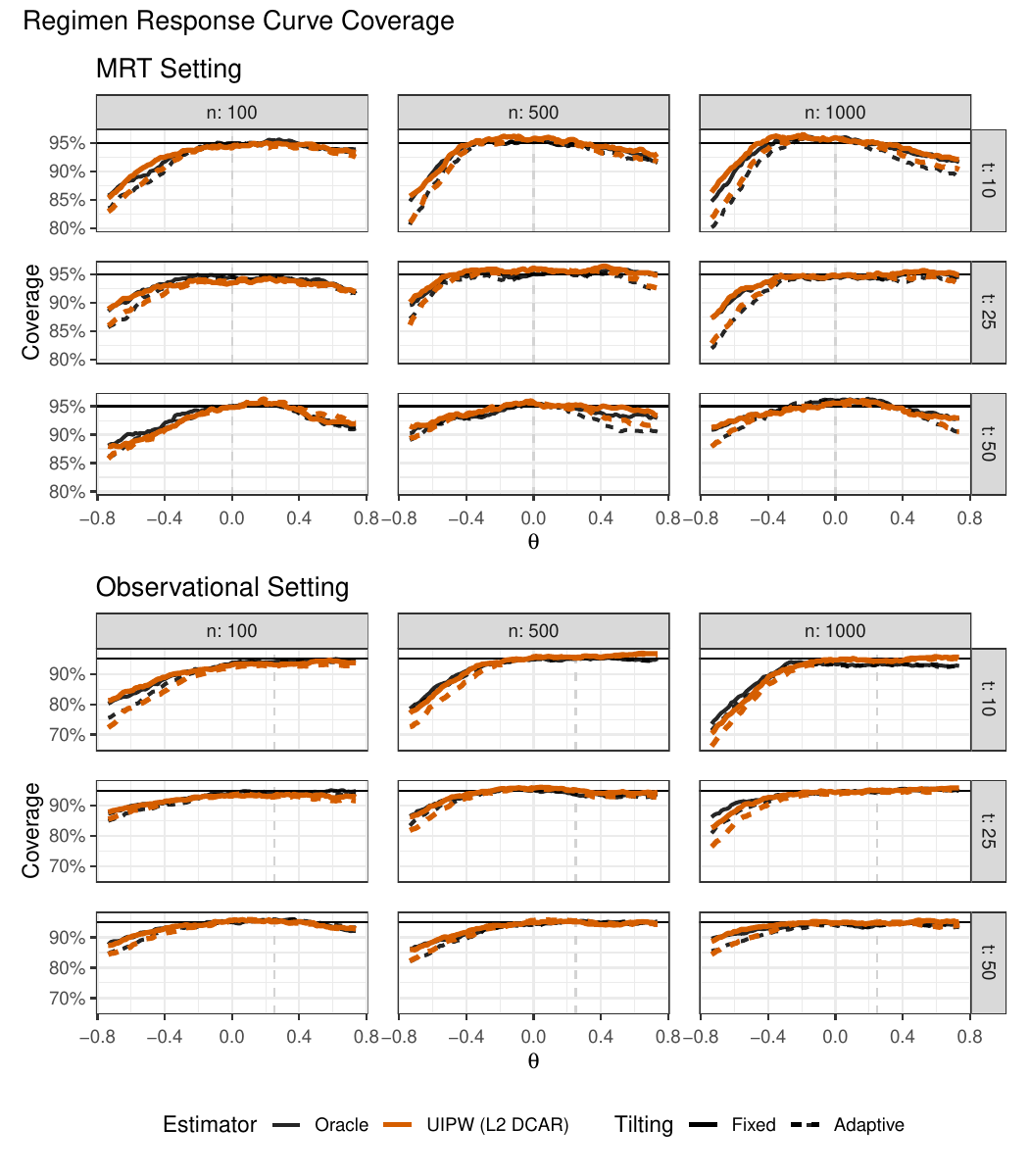}
		\caption{Pointwise regimen response curve coverage for both fixed and data-adaptive $\alpha_n$-tilting.}
		\label{fig:psi-coverage-with-alpha}
	\end{figure}
	
	\begin{figure}[H]
		\centering
		\includegraphics[width=\linewidth]{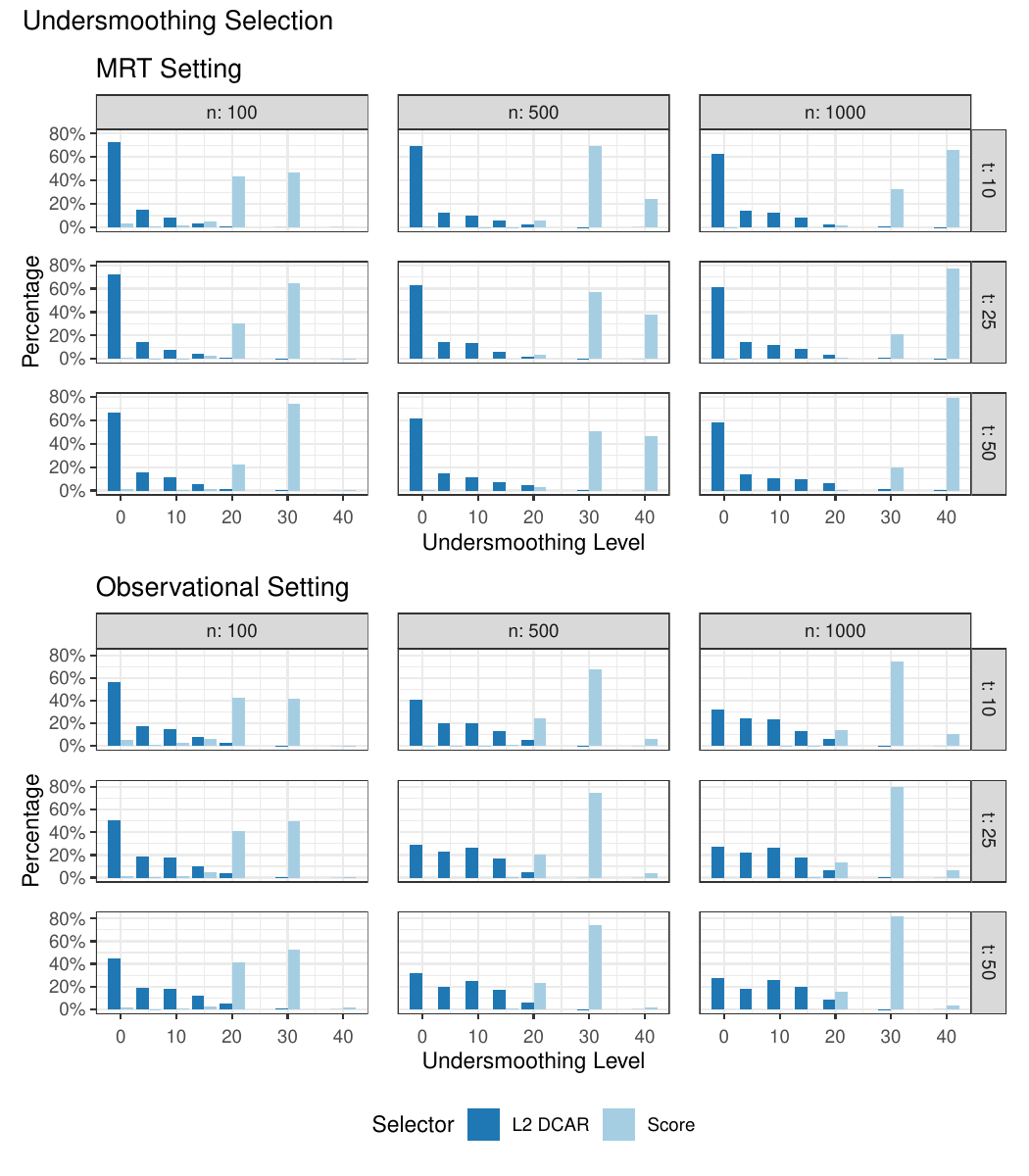}
		\caption{Undersmoothing selection. To align with \texttt{hal9001} parameterization in practice, undersmoothing was performed by manipulating  the cross-validated $L_1$-norm \emph{penalty}, $\lambda_\text{CV}$.
			Undersmoothed models were fit with penalty $\lambda_u=\lambda_\text{cv}\times (0.95)^u$.
		}
		\label{fig:us-selection}
	\end{figure}
	
	\begin{figure}[H]
		\centering
		\includegraphics[width=\linewidth]{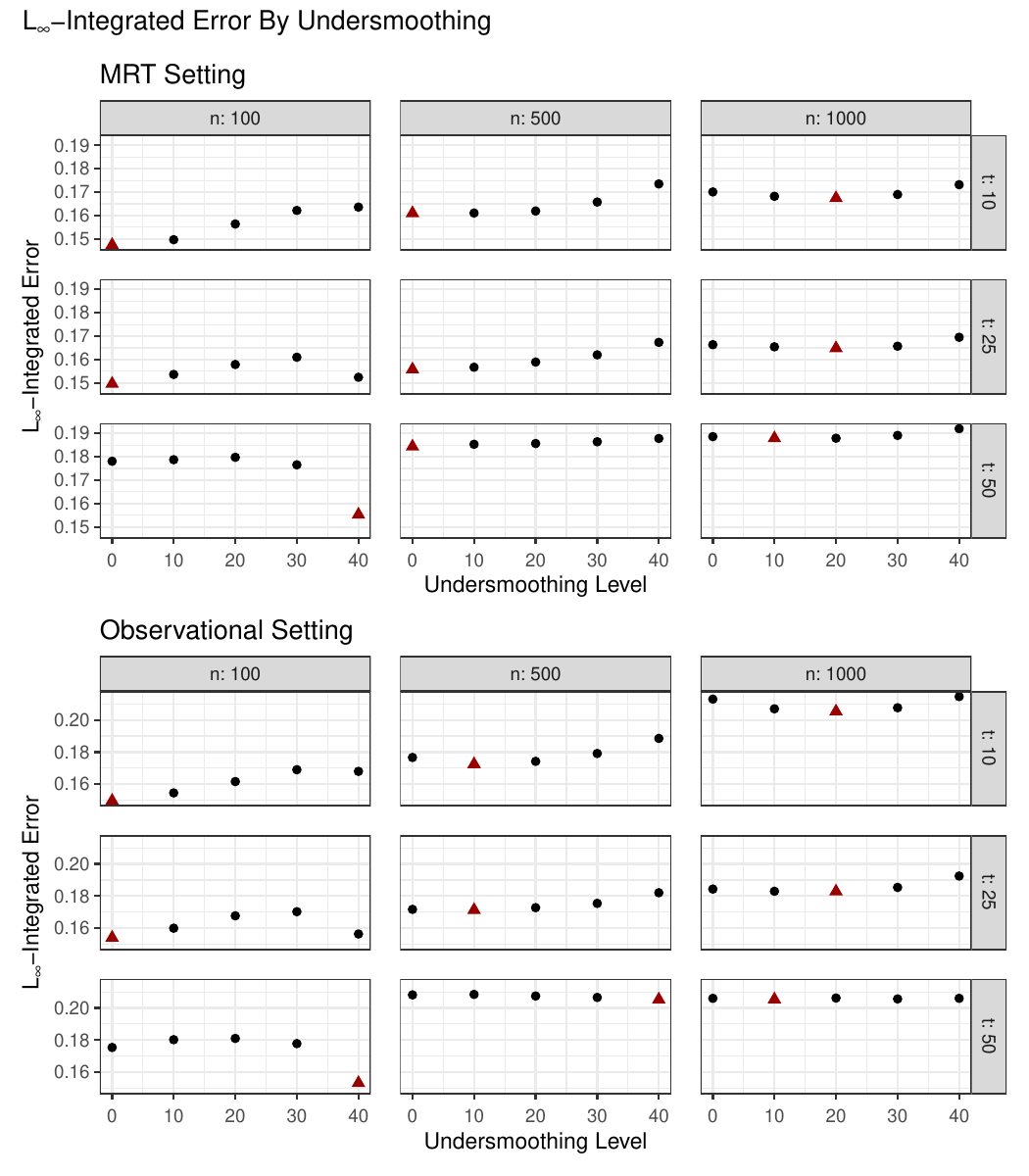}
		\caption{Average $L_\infty$-integrated absolute error ($\sup_{\theta\in\Theta}\lvert \psi_n(\theta)-\psi_0(\theta)\rvert$) as a function of the undersmoothing level. The level with the minimum integrated error is denoted by a triangle.
		}
		\label{fig:us-bias}
	\end{figure}

	\begin{figure}[H]
		\centering
		\includegraphics[width=\linewidth]{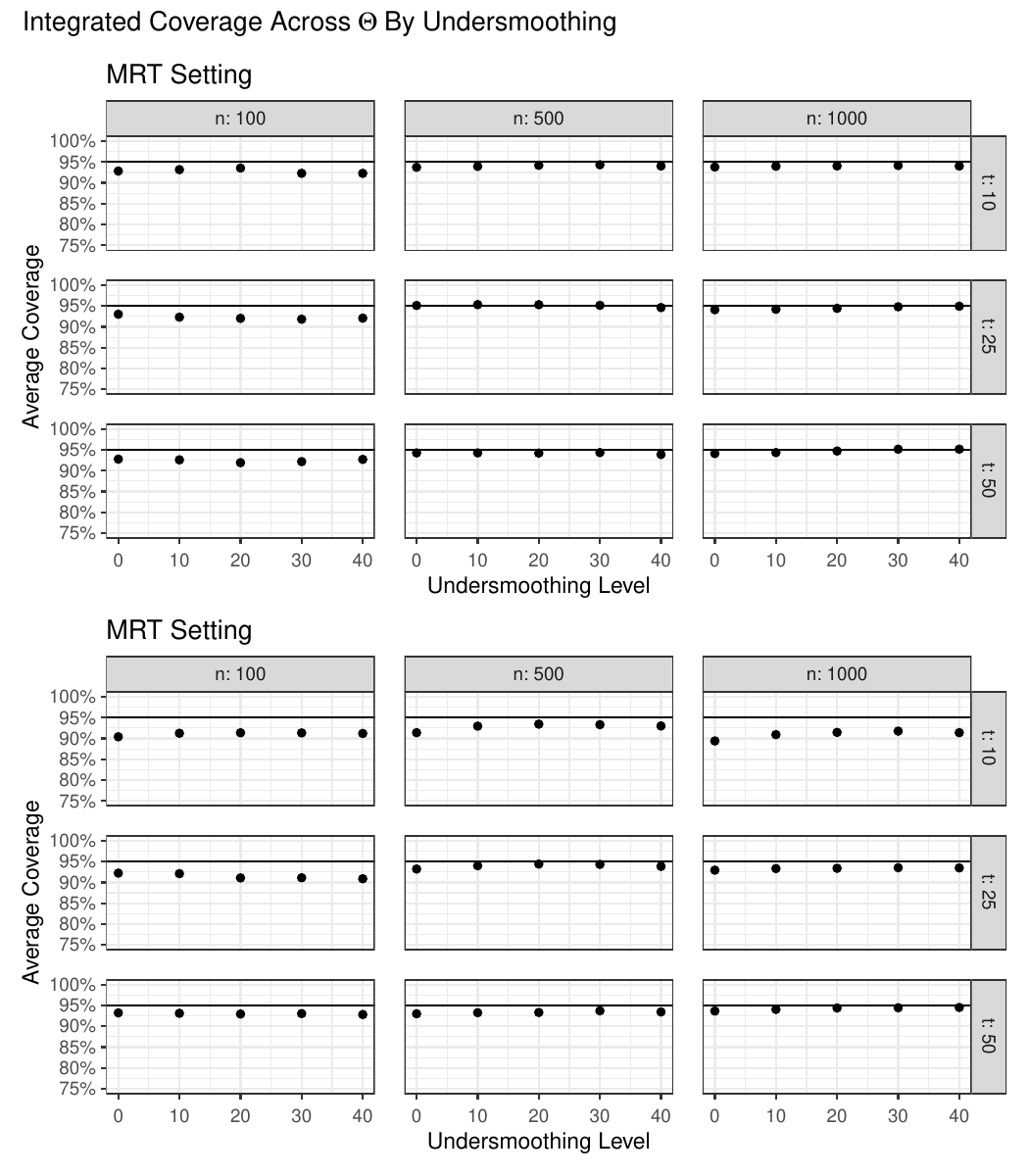}
		\caption{Average integrated pointwise coverage as a function of the undersmoothing level. 
		}
		\label{fig:us-coverage}
	\end{figure}
	
	\begin{figure}[H]
		\centering
		\includegraphics[width=\linewidth]{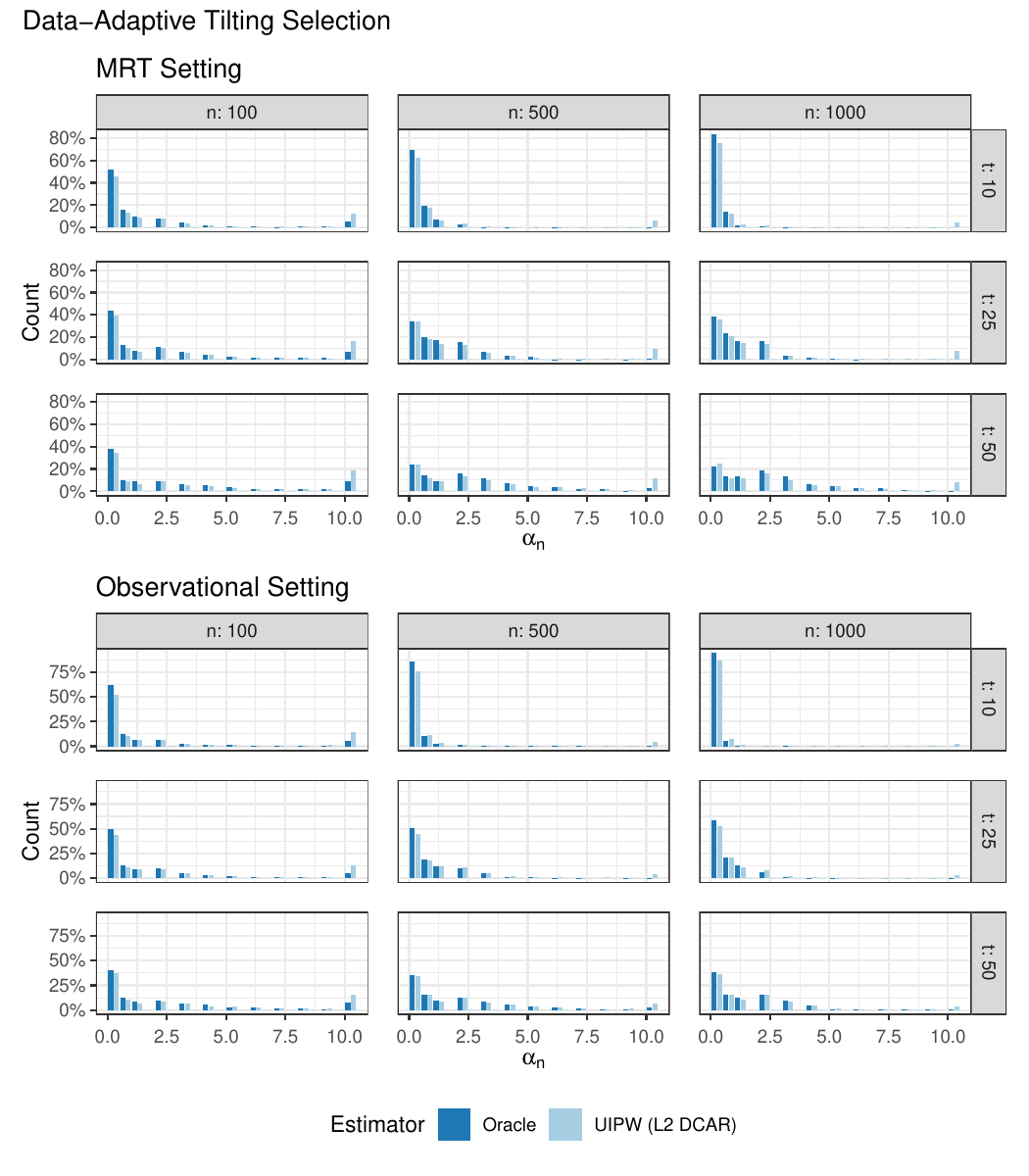}
		\caption{Data-adaptive $\alpha_n$ selection for the oracle IPW and UIPW ($L_2$-DCAR) estimators.
		}
		\label{fig:alpha-selection}
	\end{figure}

	\FloatBarrier
	
	\section{The Highly Adaptive Lasso and Undersmoothing}
	\label{sec:hal}
	
	\subsection{The Highly Adaptive Lasso}
	\label{sec:hal-details}
	Let $\mathbb{D}[0,\tau]$ denote the Banach space of real-valued c\`adl\`ag functions from $[0,\tau]\subset \mathbb{R}^d$ to $\mathbb{R}$.
	Here, $\tau<\infty$ is the upper bound of the support of all functions in this space.
	Consider some subset of the covariates: $s\subset \{1,\ldots,d\}$ and function $f\in\mathbb{D}[0,\tau]$, let the $s$th section of $f$ be given by $f_s(u)=f\{I(1\in s)u_1,\ldots, I(d\in s)u_d\}$,
	that is, varying the function along the variables in $s$ but setting all other variables to $0$.
	The sectional variational norm of $f$ is then given by aggregating all subsets' variational norms:
	$$\norm{f}_v^\star\coloneq \lvert f(0)\rvert+\sum_{s\subset\{1,\ldots,d\}}\int_{0_s}^{\tau_s}\lvert \dd f_s(u) \rvert\mathperiod$$
	
	Consider the propensity score at a given $t \in\{1,\ldots,T\}$, $\pi_{t,0}$.
	Given that $\pi_{t,0}$ is c\`adl\`ag with finite sectional variational norm, \cite{gill_inefficient_1995} shows that 
	\begin{align*}
		\logit \pi_{t,0}(w) &= \logit \pi_{t,0}(0) + \sum_{s\subset\{1,\ldots,d\}} \int_{0_s}^{w_s}\dd\logit \pi_{t,0,s}(u) \\
		&= \logit \pi_{t,0}(0) + \sum_{s\subset\{1,\ldots,d\}} \int_{0_s}^{\tau_s} I(u_s \le w_s)\dd\logit \pi_{t,0,s}(u)\mathperiod\stepcounter{equation}\tag{\theequation}\label{eqn:hal-integral}
	\end{align*}
	The HAL approximates the integral from \cref{eqn:hal-integral} by
	\begin{align*}
		\logit \pi_{t,0}(w;\beta) = \beta_{t,0} + \sum_{s\subset\{1,\ldots,d\}} \sum_{i=1}^n \beta_{t,s,i}\varphi_{t,s,i}(w_s)\mathcomma
	\end{align*}
	where $\varphi_{t,s,i}(w_s)=I(w_s\ge u_{t,s,i})$ and $u_{t,s,i}$ is the $i$th observed value of the $s$th section of the covariates at decision point $t$.
	The HAL estimator $\beta_{t,n,\lambda}$ is obtained by solving the constrained minimization problem: $\beta_{t,n,\lambda} = \arg\min \P_n L(\logit \pi_{t,n,\lambda_{n,t}})$ subject to $\lvert\beta_{t,0}\rvert + \sum_{s\subset\{1,\ldots,d\}} \sum_{i=1}^n \lvert\beta_{t,s,i}\rvert<\lambda$.
	
	\subsection{Undersmoothing Details}
	\label{sec:undersmooth}
	
	Suppose that the set of $L_1$-norms for each HAL fit, $\lambda_n=\{\lambda_{1,n},\ldots,\lambda_{T,n}\}$, is chosen such that, for all $t=1,\ldots,T$,
	\begin{equation} \label{eqn:hal-undersmooth-condition}
		\min_{(s,j)\in\mathcal{J}_{t,n}} \norm{\P_n \frac{\dd}{\dd \logit \pi_{t,n,\lambda_{n,t}}} L(\logit \pi_{t,n,\lambda_{n,t}})(\varphi_{t,s,j})}=o_p(n^{-1/2})\mathcomma
	\end{equation}
	where $L(\cdot)$ is the log-likelihood loss and $\mathcal{J}_{t,n}$ is the set of indices for the basis functions where $\beta_{t,s,j}\neq0$.
	Here, $\dd/\dd \logit \pi_{t,n,\lambda_{n,t}}\{ L(\logit \pi_{t,n,\lambda_{n,t}})(\varphi_{t,s,j})\}$ is $\dd/\dd\epsilon\{L(\logit \pi_{t,n,\lambda_{n,t}} + \epsilon\varphi_{t,sj}\}$, which denotes the directional derivative of the loss along the path $\logit \pi_{t,n,\lambda_{n,t}}^\epsilon = \logit \pi_{t,n,\lambda_{n,t}} + \epsilon \varphi_{s,j}$. Under the log-likelihood loss, the derivative is
	$$\frac{\dd}{\dd\epsilon} L(\logit \pi_{t,n,\lambda_{n,t}})(\varphi_{t,s,j})\bigg\rvert _{\epsilon=0} = \varphi_{t,s,j}\{A_t-\pi_{t,n,\lambda_{n,t}}(1\mid H_t)\}\mathcomma$$
	the corresponding score function.
	Hence, an equivalent statement of the supposition is that
	\begin{equation} \label{eqn:hal-undersmooth-condition2}
		\min_{(s,j)\in\mathcal{J}_{t,n}} \norm{\P_n \varphi_{t,s,j}\{A_t-\pi_{t,n,\lambda_{n,t}}(1\mid H_t)\}}=o_p(n^{-1/2})\mathperiod
	\end{equation}
	Although unrestricted maximum likelihood estimation exactly solves these score equations, the $L_1$ restriction generally prevents exact solutions.
	The assumption that this term is $o_p(n^{-1/2})$ is hence equivalent to assuming that the $L_1$ restriction is weakened until one of the score equations is solved to a precision of $o_p(n^{-1/2})$.
	We say that the HAL is ``sufficiently undersmoothed'' if \cref{eqn:hal-undersmooth-condition} holds for each $t=1,\ldots,T$.
	
	We make the following assumptions to allow the undersmoothed HAL basis to sufficiently approximate the leading terms in the DCAR remainders (thereby satisfying the premise of \cref{lemma:d-bounded-uniform} throughout various applications).
	These are stated as pointwise assumptions, but we will often assume they hold uniformly in $(\theta,\alpha)$ over some subset of $\Theta \times [0,\infty)$.
	
	\begin{assumption}
		\label{assumption:cadlag}
		Let $\mu_{t,0}^{\theta,\alpha}$ and $\pi_{t,0}$ be c\`adl\`ag with finite sectional variational norm for all $t\in\{1,\ldots,T\}$.
	\end{assumption}
	
	\begin{assumption}
		\label{assumption:hal-space}
		Let 
		$$f_{t,0}^{\theta,\alpha}(H_t)\coloneq \left\{\prod_{i<t} W_{i,0}^{\theta,\alpha}(H_i) \right\}(t/T)^\alpha\Bigl\{\frac{\mu_{t,0}^{\theta,\alpha}(1,H_t)q_t^\theta(1,S_t)}{\pi_{t,0}(1\mid H_t)} - \frac{\mu_{t,0}^{\theta,\alpha}(0,H_t)q_t^\theta(0,S_t)}{\pi_{t,0}(0\mid H_t)}\Bigr\}\mathcomma$$
		and let $\tilde{f}_{t,0}^{\theta,\alpha}$ denote the projection of $f_{t,0}^{\theta,\alpha}(H_t)$  onto the linear span of basis functions $\varphi_{t,s,j}$ in $L^2(\P_0)$ for $\varphi_{t,s,j}$ satisfying \cref{eqn:hal-undersmooth-condition}.
		Then, assume that $\|f_{t,0}^{\theta,\alpha}-\tilde{f}_{t,0}^{\theta,\alpha}\|_{2,\P_0}=O_p(n^{-1/4})$ for all $t\in\{1,\ldots,T\}$.
	\end{assumption}
	
	\begin{remark}
		\cref{assumption:hal-space} requires that the leading remainder multiplier $f_{t,0}^{\theta,\alpha}(H_t)$ be well-approximated by the HAL basis used for $\pi_{t,n}(1\mid H_t)$. 
		In particular, this effectively requires that the variables needed to represent $\mu^{\theta,\alpha}_{t,0}(A_t,H_t)$ (and hence $f^{\theta,\alpha}_{t,0}$) are included among the variables used to model $\pi_{t,n}(1\mid H_t)$. 
		Thus, if one imposes a Markov restriction on $\pi_{t,n}$ (e.g., $\pi_{t,n}$ depends only on the current state $S_t$), then \cref{assumption:hal-space} is most plausible when the corresponding $\mu^{\theta,\alpha}_{t,0}(A_t,H_t)$ can also be represented as depending only on $S_t$.
		If $\mu^{\theta,\alpha}_{t,0}(A_t,H_t)$ is believed to depend on additional variables, these should be included in the model for $\pi_{t,n}$.
	\end{remark}
	
	The following assumption is used to ensure that $\nabla_\theta\psi_n(\theta;\alpha)$ is also regular and asymptotically linear, which is needed for the asymptotic properties of $\theta_n^\star(\alpha)$.
	
	\begin{assumption}
		\label{assumption:hal-space-deriv}
		Let 
		\begin{align*}
			&g_{t,0}^{\theta,\alpha}(H_t) \\
			&\coloneq
			\Bigl(\prod_{i<t} W_{i,0}^{\theta,\alpha}\Bigr)
			(t/T)^\alpha
			\Biggl[
			\Bigl(
			\sum_{j<t}
			\nabla_\theta \log W_{j,0}^{\theta,\alpha}
			\Bigr)
			\Biggl\{
			\frac{\mu_{t,0}^{\theta,\alpha}(1,H_t)q_t^\theta(1,S_t)}{\pi_{t,0}(1\mid H_t)}
			-
			\frac{\mu_{t,0}^{\theta,\alpha}(0,H_t)q_t^\theta(0,S_t)}{\pi_{t,0}(0\mid H_t)}
			\Biggr\}
			\\
			&\qquad\qquad
			+
			\frac{
				\{\nabla_\theta\mu_{t,0}^{\theta,\alpha}(1,H_t)\}
				q_t^\theta(1,S_t)
				+
				\mu_{t,0}^{\theta,\alpha}(1,H_t)
				\{\nabla_\theta q_t^\theta(1,S_t)\}
			}{
				\pi_{t,0}(1\mid H_t)
			}
			\\
			&\qquad\qquad
			-
			\frac{
				\{\nabla_\theta\mu_{t,0}^{\theta,\alpha}(0,H_t)\}
				q_t^\theta(0,S_t)
				+
				\mu_{t,0}^{\theta,\alpha}(0,H_t)
				\{\nabla_\theta q_t^\theta(0,S_t)\}
			}{
				\pi_{t,0}(0\mid H_t)
			}
			\Biggr]\mathcomma
		\end{align*}
		and let $\tilde g_{t,0}^{\theta,\alpha}$ denote its coordinatewise projection  onto the linear span of basis functions $\varphi_{t,s,j}$ in $L^2(\P_0)$ for $\varphi_{t,s,j}$ satisfying \cref{eqn:hal-undersmooth-condition}.
		Assume that $\lVert g_{t,0}^{\theta,\alpha}-\tilde{g}_{t,0}^{\theta,\alpha}\rVert_{2,\P_0}=O_p(n^{-1/4})$ for all $t\in\{1,\ldots,T\}$.
	\end{assumption}
	
	\subsection{Formal Lemma and Proof}
	\label{sec:undersmooth-lemma}
	
	To control remainders uniformly across $\Theta$, we require a strengthened version of \cite{ertefaie_nonparametric_ipw_2023} Lemma 1.
	The original lemma provides pointwise control of terms of the form $\P_n \{ f \times (A_t-\pi_n)\}$ when $\pi_{t,n}$ is a sufficiently undersmoothed HAL estimate of the propensity score.
	We extend the result to offer uniform control over a class of leading functions $f_\theta:\theta\in\Theta$.
	This proof largely follows the argument provided to bound $\P_n D(\tilde{f},\pi_n)$ pointwise in \cite{ertefaie_nonparametric_ipw_2023}; the key differences are that the $\kappa_\theta$ is now bounded uniformly, and that obtaining a bound on $D(f,\pi_n)$ from the bound on $D(\tilde{f},\pi_n)$ requires a stronger empirical process argument that leverages uniform convergence of $\tilde f_\theta$ to $f_\theta$.
	
	\begin{lemma}
		\label{lemma:d-bounded-uniform}
		Fix $t\in\{1,\ldots,T\}$ and let $\pi_n\coloneq\pi_{t,n}$ be a HAL estimator for $\pi_{0}\coloneq\pi_{t,0}$ with $L_1$-norm bound $\lambda_n\coloneq\lambda_{t,n}$ chosen such that \cref{eqn:hal-undersmooth-condition} holds.
		Let $D(f_\theta,\pi_n)=f_\theta\cdot(A-\pi_n)$
		where the function class $\{f_\theta:\theta\in\Theta\}$ is c\`adl\`ag with uniformly bounded sectional variational norm.
		Let $\tilde{f}_\theta$ be the projection of $f_\theta$ onto the linear span of basis functions $\varphi_{s,j}\coloneq\varphi_{t,s,j}$ satisfying \cref{eqn:hal-undersmooth-condition}.
		Then, $\sup_{\theta\in\Theta }\lvert\P_n D(\tilde{f}^\theta, \pi_n)\rvert=o_p(n^{-1/2})$.
		Moreover, if $\sup_{\theta\in\Theta}\norm*{f_\theta-\tilde{f}_\theta}_{2,\P_0}=O_p(n^{-1/4})$, then 
		$\sup_{\theta\in\Theta}\lvert\P_n D(f^\theta, \pi_n)\rvert=o_p(n^{-1/2})$.
	\end{lemma}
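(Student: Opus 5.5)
Write $\tilde f_\theta=\sum_{(s,j)}\kappa_{\theta,s,j}\varphi_{s,j}$, where the sum runs over the retained basis functions that satisfy \cref{eqn:hal-undersmooth-condition}. The argument has two parts: a linearity bound for the projected term, and an empirical process bound for the projection error.

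\emph{Step 1 (projected term).} By linearity,
\[
\P_n D(\tilde f_\theta,\pi_n)=\sum_{(s,j)}\kappa_{\theta,s,j}\,\P_n\varphi_{s,j}(A-\pi_n),
\]
so
\[
\sup_{\theta}\lvert\P_n D(\tilde f_\theta,\pi_n)\rvert\le \Bigl\{\sup_\theta\sum_{(s,j)}\lvert\kappa_{\theta,s,j}\rvert\Bigr\}\max_{(s,j)}\lvert\P_n\varphi_{s,j}(A-\pi_n)\rvert .
\]
Following \cite{ertefaie_nonparametric_ipw_2023}, \cref{eqn:hal-undersmooth-condition} is read as saying that each retained score is solved to $o_p(n^{-1/2})$ uniformly over the retained set; this is the form in which their Lemma~1 is proved. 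It then suffices to show $\sup_\theta\lVert\kappa_\theta\rVert_1=O_p(1)$.

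For this I would use the representation in \cref{eqn:hal-integral}. Each $f_\theta$ is c\`adl\`ag with sectional variation norm at most a constant $M$, independent of $\theta$. Its HAL-basis coefficients are the jump masses $\dd f_{\theta,s}$ evaluated at the knots, so their $\ell_1$ norm is bounded by $\lVert f_\theta\rVert_v^\star\le M$. I would then argue, as in the original lemma, that the projection onto the retained span does not inflate this $\ell_1$ norm beyond a constant. The new ingredient is that this constant depends only on $M$, hence is uniform in $\theta$. This step gives $\sup_\theta\lvert\P_n D(\tilde f_\theta,\pi_n)\rvert=o_p(n^{-1/2})$.

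\emph{Step 2 (projection error).} Decompose
\[
\P_n D(f_\theta,\pi_n)-\P_n D(\tilde f_\theta,\pi_n)=(\P_n-\P_0)\bigl[(f_\theta-\tilde f_\theta)(A-\pi_n)\bigr]+\P_0\bigl[(f_\theta-\tilde f_\theta)(A-\pi_n)\bigr].
\]

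\emph{Empirical process term.} The class $\{(f_\theta-\tilde f_\theta)(A-\pi):\theta\in\Theta\}$, with $\pi$ ranging over HAL fits of bounded $L_1$ norm, is contained in a class of c\`adl\`ag functions of uniformly bounded sectional variation norm. That class is Donsker with a bracketing entropy integral $J(\delta)\lesssim\delta^{1/2}\log(1/\delta)^{d-1}$. The $L_2(\P_0)$ norms of its members are $O_p(n^{-1/4})\to0$ uniformly in $\theta$, by the hypothesis $\sup_\theta\lVert f_\theta-\tilde f_\theta\rVert_{2,\P_0}=O_p(n^{-1/4})$. Asymptotic equicontinuity (or a maximal inequality) then gives $\sup_\theta$ of this term $=o_p(n^{-1/2})$.

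\emph{Bias term.} Since $\P_0(A\mid H)=\pi_0$, we have
\[
\P_0\bigl[(f_\theta-\tilde f_\theta)(A-\pi_n)\bigr]=\P_0\bigl[(f_\theta-\tilde f_\theta)(\pi_0-\pi_n)\bigr].
\]
By Cauchy--Schwarz this is at most
\[
\sup_\theta\lVert f_\theta-\tilde f_\theta\rVert_{2,\P_0}\,\lVert\pi_n-\pi_0\rVert_{2,\P_0}=O_p(n^{-1/4})\,o_p(n^{-1/4}).
\]
Here I use the HAL rate $\lVert\pi_n-\pi_0\rVert_{2}=O_p(n^{-1/3}\log^{c} n)$. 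The undersmoothed fit keeps this rate thanks to the oversmoothing cap of \cite{pham_nonparametric_2025}, and the rate is uniform in $\theta$ because $\pi_n$ does not depend on $\theta$.

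\emph{Main obstacle.} The hardest step is the uniform $\ell_1$ bound on the projection coefficients $\kappa_\theta$. The $L^2(\P_0)$ projection onto a data-dependent span need not preserve variation norm, and the original lemma handles this pointwise. My first attempt is to show that the projection is a bounded linear map from the variation-norm ball to coefficient $\ell_1$. If this fails, I would bound $\lVert\kappa_\theta\rVert_1$ via the HAL $L_1$ constraint $\lambda_n$ together with the uniform bound $M$, accepting a constant depending on $M$ but not on $\theta$.
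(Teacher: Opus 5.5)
Your Step 2 is essentially the paper's argument. Treating $(\P_n-\P_0)[(f_\theta-\tilde f_\theta)(A-\pi_n)]$ as a single empirical process over a Donsker class with vanishing $L_2(\P_0)$ radius is cleaner than the paper's three-term split, which bounds its middle term by a product of two empirical processes; that is not a valid inequality.

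The gap is in Step 1. Condition \cref{eqn:hal-undersmooth-condition} is a \emph{minimum} over the active set $\mathcal J_{t,n}$, so it only guarantees that one retained score equation is solved to $o_p(n^{-1/2})$. Your inequality $\sup_\theta\lvert\P_n D(\tilde f_\theta,\pi_n)\rvert\le\sup_\theta\lVert\kappa_\theta\rVert_1\max_{(s,j)}\lvert\P_n\varphi_{s,j}(A-\pi_n)\rvert$ needs the \emph{maximum}. You get it by rereading the hypothesis, not from the condition as stated. The paper's proof, which follows that lemma, uses only the minimum.

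The missing idea is that $\pi_n$ solves the $L_1$-constrained likelihood problem, and this optimality controls every other direction. The paper perturbs the active coefficients as $\beta_{n,s,j}\mapsto\{1+\epsilon h(s,j)\}\beta_{n,s,j}$. If $r(h,\pi_n)=\sum_{(s,j)}h(s,j)\lvert\beta_{n,s,j}\rvert=0$, this path keeps the $L_1$ norm fixed for small $\epsilon$, so optimality forces $\P_n S_h(\pi_n)=0$ exactly. Write $D(\tilde f_\theta,\pi_n)=S_{h^\star_\theta}(\pi_n)$, and change $h^\star_\theta$ only at the coordinate $(s^\star,j^\star)$ attaining the minimum, so as to restore $r=0$. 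The only leftover is then $c_\theta\,\P_n\varphi_{s^\star,j^\star}(A-\pi_n)$, with $\lvert c_\theta\rvert\le\sum_{(s,j)}\lvert h^\star_\theta(s,j)\beta_{n,s,j}\rvert$. A fix closer to your write-up uses the KKT conditions. They give $\P_n\varphi_{s,j}(A-\pi_n)=\nu_n\,\mathrm{sign}(\beta_{n,s,j})$ for every active $(s,j)$, with a common multiplier $\nu_n\ge 0$. Hence the minimum and maximum over $\mathcal J_{t,n}$ coincide, and adding this line makes your Step 1 follow from the stated hypothesis.

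With either fix, your Step 1 and the paper's reduce to the same quantity. The product $h^\star_\theta(s,j)\beta_{n,s,j}$ is exactly your coefficient $\kappa_{\theta,s,j}$, so both need $\sup_\theta\lVert\kappa_\theta\rVert_1=O_p(1)$. The same bound is also what places $\tilde f_\theta$ in a fixed bounded-variation Donsker class in your Step 2. The paper simply asserts this from the uniform sectional-variation bound on $f_\theta$, so you are right that it is the delicate point. Your fallback does not address it, though: $\lambda_n$ constrains the coefficients $\beta_n$ of the propensity fit, not the projection coefficients of $\tilde f_\theta$, and a bound on one gives no control of the other.
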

	
	\begin{proof}[Proof of \cref{lemma:d-bounded-uniform}]
		Let $\pi^\dag\coloneq\logit \pi$.
		Consider a path on the logit scale indexed by $\epsilon$ for some bounded vector $h$, $\pi_\epsilon^\dag = \pi_n^\dag(1+\epsilon \sum_{(s,j)}h(s,j)\beta_{n,s,j}\varphi_{s,j})$; the corresponding score is
		\begin{equation*}
			S_h(\pi_n) = (A-\pi_{n,\lambda_n})\sum_{(s,j)}h(s,j)\beta_{n,s,j}\varphi_{s,j}\mathperiod
		\end{equation*}
		Then, for sufficiently small $\epsilon$,
		\begin{align*}
			\sum_{(s,j)} \abs{1+\epsilon h(s,j)\}\beta_{n,s,j}} &= \sum_{(s,j)} \{1 + \epsilon h(s,j)\} \abs{\beta_{n,s,j}} \\
			&= \sum_{(s,j)} \abs{\beta_{n,s,j}} + \epsilon r(h,\pi_n)\mathcomma
		\end{align*}
		where $r(h,\pi_n)=\sum_{(s,j)}h(s,j)\abs{\beta_{n,s,j}}$.
		It follows that if $h$ satisfies $r(h,\pi_n)=0$, then $\P_n S_h(\pi_n)=0$.
		
		Let $D(f_\theta,\pi_n)=f_\theta \cdot(A-\pi_n)$ and let $\tilde{f}_\theta$ be the projection of $f_\theta$ onto the basis functions satisfying \cref{eqn:hal-undersmooth-condition}.
		Then for any $\theta\in\Theta$, $D(\tilde{f}_\theta,\pi_n)\in\{S_h(\pi_n):\norm{h}_\infty<\infty\}$ and there exists some $h_\theta^\star$ with $\norm{h_\theta^\star}_\infty < \infty$ such that $D(\tilde{f}_\theta,\pi_n)=S_{h^\star_\theta}(\pi_n)$.
		We note that $r(h_\theta^\star,\pi_n)$ may be nonzero; to correct this, construct $h_\theta$ such that $h_\theta(s,j)=h_\theta^\star(s,j)$ for $(s,j)\neq(s^\star,j^\star)$ and
		\begin{equation*}
			h_\theta(s^\star,j^\star)\coloneq  -\abs{\beta_{n,s^\star,j^\star}}^{-1}\sum_{(s,j)\neq(s^\star,j^\star)}h^\star_\theta(s,j)\abs{\beta_{n,s,j}}\mathcomma
		\end{equation*}
		where $(s^\star,j^\star)$ are the arguments of the minima of the left hand side of \cref{eqn:hal-undersmooth-condition}.
		It follows that $r(h_\theta,\pi_n)=0$ and $\P_n S_{h_\theta}(\pi_n)=0$ for all $\theta\in\Theta$.
		
		Now, consider $\P_n S_{h_\theta}(\pi_n)-\P_n D\{\tilde{f}_\theta(O),\pi_n\}$. 
		We have
		\begin{align*}
			\P_n S_{h_\theta}(\pi_n)-\P_n D\{\tilde{f}_\theta(O),\pi_n\} &= \kappa_\theta(s^\star,j^\star) \P_n\left\{ \frac{\dd}{\dd \pi_n^\dag} L(\pi_n^\dag )(\varphi_{t,s^\star,j^\star})\right\}\mathcomma
		\end{align*}
		where
		\begin{align*}
			\kappa_\theta(s^\star,j^\star) &= \{h_\theta(s^\star,j^\star)-h_\theta^\star(s^\star,j^\star)\}\beta_{n,s^\star,j^\star} \\
			&=  -\abs{\beta_{n,s^\star,j^\star}}^{-1}\beta_{n,s^\star,j^\star}\sum_{(s,j)\neq(s^\star,j^\star)}h^\star_\theta(s,j)\abs{\beta_{n,s,j}} - h_\theta^\star(s^\star,j^\star)\}\beta_{n,s^\star,j^\star}\mathperiod 
		\end{align*}
		From the second equality, $\abs{\kappa_\theta(s^\star,j^\star)}\le \sum_{(s,j)} \abs{h_\theta^\star(s,j)\beta_{n,s,j}}$.
		Given that $f_\theta$ has uniformly bounded sectional variational norm, the $L_1$-norm of coefficients approximating $f$ will be finite which implies that  $\sup_{\theta\in\Theta}\sum_{(s,j)}\abs{h^\star_\theta(s,j)\beta_{n,s,j}}=O_p(1)$ and $\sup_{\theta\in\Theta}\abs{\kappa_\theta(s^\star,j^\star)}=O_p(1)$.
		Thus
		\begin{align*}
			\sup_{\theta\in\Theta}\abs{\P_n S_{h_\theta}(\pi_n)-\P_n D\{\tilde{f}_\theta(O),\pi_n\}} &= O_p\left[ \P_n\left\{ \frac{\dd}{\dd \pi_n^\dag} L(\pi_n^\dag )(\varphi_{t,s^\star,j^\star})\right\}\right] \\
			&= o_p(n^{-1/2})\mathcomma
		\end{align*}
		with the final equality following from the undersmoothing criteria (\cref{eqn:hal-undersmooth-condition}).
		Given the fact that $\P_n S_{h_\theta}(\pi_n)=0$ for all $\theta$, it follows that $\lVert \P_n D\{\tilde{f}_\theta(O),\pi_n\}\rVert_\Theta=o_p(n^{-1/2})$ as well.
		
		We conclude by uniformly bounding the quantity of interest, $\lVert \P_n D\{{f}_\theta(O),\pi_n\}\rVert_\Theta=o_p(n^{-1/2})$.
		We have
		\begin{align*}
			&\P_n D\{\tilde{f}_\theta(O),\pi_n\}-\P_n D\{{f}_\theta(O),\pi_n\} \\
			&= \P_n\left[ (\tilde{f}_\theta-f_\theta) \left\{ \frac{\dd}{\dd \pi_n^\dag} L(\pi_n^\dag )\right\}\right] \\
			&= \P_n\left[(\tilde{f}_\theta-f_\theta)\left\{ \frac{\dd}{\dd \pi_0^\dag} L(\pi_0^\dag )\right\}\right] + \P_n\left\{(\tilde{f}_\theta-f_\theta)(\pi_0-\pi_n)\right\} \\
			&= (\P_n-\P_0)\left[(\tilde{f}_\theta-f_\theta)\left\{ \frac{\dd}{\dd \pi_0^\dag} L(\pi_0^\dag )\right\}\right] + (\P_n-\P_0)\left\{(\tilde{f}_\theta-f_\theta)(\pi_0-\pi_n)\right\} \\
			&\qquad {} + \P_0\left\{(\tilde{f}_\theta-f_\theta)(\pi_0-\pi_n)\right\}\mathperiod
		\end{align*}
		Each of these terms is bounded uniformly.
		
		For the first term, let $c(O)\coloneq \frac{\dd}{\dd \pi_0^\dag} L(\pi_0^\dag)=A_t-\pi_0$, $g_\theta(O)\coloneq \{\tilde{f}_\theta(O)-f_\theta(O)\}c(O)$, and $\mathcal{G}\coloneq \{g_\theta:\theta\in\Theta\}$.
		We obtain a bound for $\norm{\G_n}_\mathcal{G}$ using an empirical-process argument.
		Let $G\coloneq M \abs{c_\theta(O)}$ where $M\coloneq \sup_{\theta\in\Theta}\abs*{\tilde{f}_\theta(O)-f_\theta(O)}$ and $\abs{c(O)}\le1$.
		Moreover, as $f_\theta$ and $\tilde{f}_\theta$ are c\`adl\`ag with bounded sectional variational norm, $M=O_p(1)$.
		Hence, $\abs{g_\theta(O)}\le M G(O)=O_p(1)$ and the $\mathcal{G}$ has an $O_p(1)$ envelope.
		Additionally, $\norm{G}_{2,\P_0}\le \sup_{\theta\in\Theta}\norm*{\tilde{f}_\theta-f_\theta}_{2,\P_0}\le O_p(n^{-1/4})$ by assumption.
		Finally, $\mathcal{G}$ is Donsker so 
		$\sup_{\theta\in\Theta}\abs{(\P_n-\P_0)g_\theta}=o_p(n^{-1/2})$ by \cref{corollary:maximal-inequality-shortcut}.
		
		Then,
		\begin{align*}
			\sup_{\theta\in\Theta}\abs{(\P_n-\P_0)\left\{(\tilde{f}_\theta-f_\theta)(\pi_0-\pi_n)\right\}} &\le \sup_{\theta\in\Theta}\abs*{(\P_n-\P_0)(\tilde{f}_\theta-f_\theta)} \abs{(\P_n-\P_0)(\pi_0-\pi_n)} \\
			&= O_p(n^{-1/4})o_p(n^{-1/4}) \\
			&= o_p(n^{-1/2})\mathcomma
		\end{align*}
		under the assumption that $\sup_{\theta\in\Theta}\norm*{f_\theta-\tilde{f}_\theta}_{2,\P_0}=O_p(n^{-1/4})$ and the HAL convergence rate for $\pi_n$.
		
		Finally,
		\begin{align*}
			\sup_{\theta\in\Theta}\abs{\P_n\left\{(\tilde{f}_\theta-f_\theta)(\pi_0-\pi_n)\right\}} &\le 
			\sup_{\theta\in\Theta}\norm*{\tilde{f}_\theta-f_\theta}_{2,\P_0} \norm{\pi_0-\pi_n}_{2,\P_0} \\
			&= O_p(n^{-1/4}) o_p(n^{-1/4}) \\
			&= o_p(n^{-1/2})\mathcomma
		\end{align*}
		again under the assumption that $\sup_{\theta\in\Theta}\norm*{f_\theta-\tilde{f}_\theta}_{2,\P_0}=O_p(n^{-1/4})$ and the convergence rate for $\pi_n$.
		
		Therefore we have 
		$$\sup_{\theta\in\Theta}\abs{\P_n D\{\tilde{f}_\theta(O),\pi_n\}-\P_n D\{{f}_\theta(O),\pi_n\}} = o_p(n^{-1/2})\mathperiod$$ 
		Hence, $\sup_{\theta\in\Theta}\abs{\P_n D\{{f}_\theta(O),\pi_n\}} = o_p(n^{-1/2})$, which completes the proof.
	\end{proof}
	
	\subsection{Undersmoothing in Practice}
	\label{sec:undersmooth-practice}
	
	The original undersmoothing criteria (\cref{eqn:hal-undersmooth-condition}) ensures that the penalty parameters for each propensity score model, $\lambda_{1,n},\ldots,\lambda_{T,n}$, are chosen such that the bias term in the asymptotic representation of $\sqrt{n}\{\psi_n(\theta;\alpha)-\psi_0(\theta;\alpha)\}$ is $o_p(n^{-1/2})$.
	However, this does not directly imply a procedure to select $\lambda_n$.
	
	Given that the goal of the undersmoothing is to ensure that the bias term satisfies $\P_n\{ \Dcart(\mu_0^{\theta,\alpha},\pi_n;\theta,\alpha)\}=o_p(n^{-1/2})$ for all $t=1,\ldots,T$; one approach is to directly minimize an approximation of this quantity.
	\cite{ertefaie_nonparametric_estimation_2023,ertefaie_nonparametric_ipw_2023,pham_nonparametric_2025} all offer selectors that directly approximate the DCAR term by nonparametrically estimating all nuisance functions and minimize a cross-validated estimate.
	However, this is computationally prohibitive in our longitudinal policy-index setting because the augmentation term involves the forward regressions $\{\mu_{t,0}^{\theta,\alpha}(A_t, H_t):t<T,\theta\in\Theta\}$, and hence would require estimating a large family of nuisance functions jointly indexed by time and policy.
	Instead, we use a pragmatic DCAR-approximated selector that targets the same remainder but replaces $\mu_{t,0}^{\theta,\alpha}$ with a low-dimensional working approximation fit only for tuning.
	We target performance for the untilted estimator and therefore fix $\alpha=0$ henceforth.
	For each fold $v\in1\ldots,V$ and time $t$, we construct pseudo-outcomes on the training sample of the form $Z_{i,t}^{\theta,\alpha}=\{\prod_{j>t}W_{i,j,\text{ref}}^{\theta,\alpha}\}Y_i$, where $W_{i,j,\text{ref}}^{\theta,\alpha}$ uses a reference propensity score $\pi_{t,j,\text{ref}}$ that is equal to either the cross-validated HAL estimate (i.e., no undersmoothing) or the known propensity score in MRT settings.
	We then fit working regressions $\mu_{t,n}^{-(v)}(a,h;\theta)$ by pooling pseudo-outcomes across a modest training subset $\Theta_{\text{train}}\subset\Theta$ and modeling $Z_{t}^{\theta,\alpha}$ as a function of $(H_t,A_t)$ together with a smooth basis $g(\theta)$ (e.g., a spline basis) and optional interactions.
	Then, given $\mu_{t,n}^{-(v)}$, for each undersmoothing level $\lambda$ we evaluate
	\begin{multline*}
		R^{(v)}(\lambda;\theta) \coloneq \P_{n,v}^1 \Bigl\{\prod_{i<t} W_{i,n,\lambda}^{\theta,\alpha}\Bigr\}
		(t/T)^\alpha\ \\
		\times \sum_{a\in\{0,1\}}\mu_{t,n}^{-(v)}(a, H_t;\theta)\, q_t^\theta(a,S_t)
		\frac{I(A_t=a)-\pi_{t,n,\lambda}^{(-v)}(a\mid H_t)}{\pi_{t,n,\lambda}^{(-v)}(a\mid H_t)}\mathcomma
	\end{multline*}
	where $\P_{n,v}^1$ denotes the empirical measure on the testing data from the $v$th fold.
	We then select $\lambda_{t,n}$ by minimizing either an $L_2$-integrated criterion 
	\begin{equation*}
		\lambda_{t,n,L_2} \coloneq \arg \min_{\lambda} V^{-1}\sum_{v=1}^V \lvert \Theta_{\text{test}}\rvert^{-1}\sum_{\theta\in\Theta_{\text{test}}} \{R^{(v)}(\lambda;\theta)\}^2\mathcomma    
	\end{equation*}
	or a uniform criterion 
	\begin{equation*}
		\lambda_{t,n,L_\infty} \coloneq \arg \min_{\lambda} V^{-1}\sum_{v=1}^V \max_{\theta\in\Theta_{\text{test}}} \lvert R^{(v)}(\lambda;\theta) \rvert\mathcomma
	\end{equation*}
	where $\Theta_{\text{test}}\coloneq\Theta\setminus \Theta_\text{train}$.
	
	Alternatively, we propose the alternate selection criteria based solely on the score function for the propensity score model:
	\begin{equation}
		\label{eqn:undersmooth-score}
		\lambda_{t,n,\text{score}}=\arg\min_{\lambda} V^{-1}\sum_{v=1}^V 
		\frac{1}{\norm{\beta_{n,\lambda,v}}_{1}}\sum_{(s,j)\in\mathcal{J}_n^{(v)}(\lambda)}  \lvert \P_{n,v}^1 \tilde{S}_{t,s,j}(\phi, \pi_{t,n,\lambda,v}) \rvert\mathcomma  
	\end{equation}
	where
	$\mathcal{J}_n^{(v)}(\lambda)$ is a set of indices for the active basis functions such that $\beta_{t,s,j,n,\lambda,v} \neq 0$, and 
	$$\tilde{S}_{t,s,j}(\varphi, \pi_{t,n,\lambda,v})=\varphi_{t,s,j}( H_t)\{A_t-\pi_{t,n,\lambda,v}(1\mid  H_t)\}\{\pi_{t,n,\lambda,v}(1\mid  H_t)\}^{-1}$$ 
	is the component of the score equation corresponding to $\beta_{t,s,j}$ divided by the estimated propensity score.
	This criterion only requires estimating the $T$ propensity scores used in the general estimation procedure.
	
	Finally, to ensure that the undersmoothed HAL fit still converges at the $o_p(n^{-1/4})$ rate, as needed for our asymptotic theory, we impose a cap such that the basis does not grow too quickly, as proposed by \cite{pham_nonparametric_2025}.
	Let $\tilde\lambda_{t,n}\coloneq \max \{\lambda :  V^{-1}\sum_{v=1}^V \lvert \mathcal{J}_n^{(v)}(\lambda)\rvert < n^{1/2}\}$ be the largest $L_1$ norm that yields an active basis with fewer than $n^{1/2}$ functions.
	We then select $\min\{\lambda_{t,n,\cdot}, \tilde\lambda_{t,n}\}$ for $\cdot\in\{L_2,L_\infty,\text{score}\}$.
	
	\section{Practical Implementation Considerations}
	\label{sec:supp-implementation}
	
	\subsection{Index Set Elicitation}
	\label{sec:index-set}
	
	Many policy-learning approaches enforce regularity implicitly through generic parameter-norm constraints or penalties (e.g., $L_1$ or $L_2$ regularization), which can be difficult to interpret in terms of scientifically meaningful limits on intervention frequency. 
	By contrast, we impose regularity by directly bounding treatment probabilities, which yields an immediately interpretable constraint on intervention frequency.
	
	To ensure that the explored policies are genuinely stochastic at all decision points and covariate values, we fix user-specified constants $0<\munderbar q < \bar q < 1$ and restrict attention to policies satisfying
	\begin{equation}
		\label{eqn:q-bound}
		q_t^\theta(1, S_t) \in [\munderbar q, \bar q]\quad \text{for all}\quad S_t \in \mathcal{S}_t\mathperiod  
	\end{equation}
	This constraint directly encodes investigator-specified minimum and maximum tolerable treatment probabilities (e.g., ``the policy never treats with probability below $\munderbar q$ or above $\bar q$'').
	
	Suppose that the intervention is characterized by some monotone link function $g$ and linear predictor $b(S_t)$: $q_t^\theta(1,S_t)=g\{b(S_t)\trans\theta\}$.
	Then, \cref{eqn:q-bound} is equivalent to a uniform bound on the linear predictor. 
	Let $\ell \coloneq g^{-1}(\munderbar q)$ and $u \coloneq g^{-1}(\bar q)$.
	Then \cref{eqn:q-bound} holds if and only if
	\begin{equation*}
		\label{eqn:q-bound-lin-predictor}
		\ell \le b(s)\trans\theta \le u \quad \text{for all}\quad s \in \mathcal{S}\mathperiod
	\end{equation*}
	Under this framework, given probability bounds $\munderbar q$ and $\bar q$, define
	\begin{equation*}
		\label{eqn:def-Theta}
		\Theta \coloneq \left\{\theta \in \mathbb R^p : \inf_{s\in\mathcal S} b(s)\trans\theta \ge \ell \quad \text{and} \quad \sup_{s\in\mathcal S} \le u \right\}\mathperiod
	\end{equation*}
	This set is closed and convex; under mild conditions on the basis, it is also compact. 
	
	In many applications, $q_t^\theta$ will follow a form based on a generalized linear model with a linear basis.
	In this setting, $\Theta$ can be constructed analytically.
	Suppose for exposition, without loss of generality, that $b(s) = (1,s_1,\ldots,s_d)\trans$, that is, including all covariate main effects but no interactions or higher-order terms.
	Let $s_j\in[a_j,b_j]$ denote the bounds for each component.
	Moreover, suppose that $\mathcal{S} = \prod_{j=1}^d [a_j,b_j]$ is a hyper-rectangle. 
	Violations of this supposition may result in overly conservative $\Theta$; however, the desired bound on $q^\theta$ is still attained.
	Expressing $\theta=(\theta_0,\theta_1,\ldots,\theta_d)\trans$, $M_0\coloneq 1$ and $M_j \coloneq \max\{\lvert a_j\rvert ,\lvert b_j\rvert\}$ for $j=1,\ldots,d$, we have, for any $\theta$,
	\begin{align*}
		\sup_{s \in \mathcal S} b(s)\trans\theta &= \sum_{0=1}^d \lvert\theta_j\rvert M_j \mathcomma\quad \text{and}\\
		\inf_{s \in \mathcal S} b(s)\trans\theta &= \sum_{0=1}^d \lvert\theta_j\rvert M_j \mathperiod
	\end{align*}
	Then, the index set can be written as the weighted $\ell_1$-ball:
	\begin{equation*}
		\Theta = \Bigl\{\theta \in \mathbb R^{d+1} : \sum_{j=0}^d \lvert\theta_j\rvert M_j \le \min(-\ell, u)  \Bigr\}\mathperiod
	\end{equation*}
	
	\subsection{Randomization Ineligibility}
	\label{sec:ineligible}
	
	In many mobile health studies, the protocol may prohibit intervention delivery at certain decision points due to practical or ethical constraints (e.g., if the participant is driving). 
	In such cases, the decision rule is partially constrained by design: when randomization is disallowed, the study assigns a default ``do nothing'' action.
	This feature is easily accommodated in our framework by a minor refinement of the policy class, with no other changes downstream.
	
	Let $E_t\in\{0,1\}$ denote an eligibility indicator determined prior to assignment at each decision point $t$ (i.e., such that the temporal ordering is $S_t\rightarrow E_t \rightarrow A_t$).
	We assume without loss of generality that when $E_t=0$, the protocol deterministically assigns $A_t=0$.
	
	To enforce the design constraint, define the corresponding policy $q_{t,\text{Elig}}^{\theta}$ by
	\begin{equation*}
		q_{t,\text{Elig}}^{\theta}(a, S_t,E_t) \coloneq E_t q_t^\theta(a, S_t)+(1-E_t)I(a=0)\mathperiod
	\end{equation*}
	That is, when $E_t=1$, the policy collapses to $q_t^\theta$; otherwise it assigns $A_t=0$ deterministically.
	This modified policy both respects the support of the observed data, and prevents the consideration of policies that would intervene on ineligible participants. 
	Similarly, the behavioral assignment mechanism is
	\begin{equation*}
		\pi_{t,\cdot,\text{Elig}}(a\mid H_t,E_t) \coloneq E_t \pi_{t,\cdot}(a \mid H_t,E_t=1) + (1-E_t)I(a=0)\mathcomma
	\end{equation*}
	for $\cdot\in\{0,n\}$.
	Importantly, the inner propensity score is now conditioned on eligibility.
	
	Under this construction, all expressions remain valid after replacing $q_t^\theta$ and $\pi_{t,\cdot}$ with their new counterparts.
	In particular, the time-specific weights $q_{t,\text{Elig}}^{\theta}(A_t, S_t,E_t)/\pi_{t,\cdot,\text{Elig}}(A_t\mid H_t,E_t)$ equal the standard weights $q_{t}^{\theta}(A_t, S_t)/\pi_{t,\cdot}(A_t\mid H_t,E_t=1)$ when $E_t=1$, whereas if $E_t=0$ then $A_t=0$ and the weight is equal to one.
	Consequently, ineligible decision points contribute multiplicative factors of one and can equivalently be omitted from the product.

	\section{Simulation Details}
	\label{sec:simulation-details}
	
	The first state component (representing negative affect) was initialized as $S_{1,1}\sim N(0,1)$ and then evolved according to a persistent autoregressive process with prompt effects.
	Specifically for $t\ge2$, 
	$$S_{t,1} = \operatorname{clip}\bigl\{ \rho S_{t-1,1} + A_{t-1}\tau(S_{t-1}) + \sigma\epsilon_t, -4, 4 \bigr\}\mathcomma$$
	where $\rho=0.96$ and $\sigma = (1-\rho^2)^{1/2}$.
	Clipping restricts states to the closed interval $[-4,4]$, preventing extreme simulated trajectories and aligning with \cref{assumption:bounded} while preserving temporal dependence.
	Thus, in the absence of treatment, negative affect follows a highly-persistent mean-zero process.
	Treatment can shift the subsequent affect through prompt effect $\tau$, which depends on the current state and is detailed in the following paragraph.
	The second component (representing cigarette availability) was generated as an independent Bernoulli variable: $S_{t,2}\sim \operatorname{Bernoulli}(p_2)$ with $p_2=0.25$.
	
	The prompt model encodes two behavioral features.
	First, randomization to a prompt is helpful when an individual has high negative affect and has a high potential for benefit.
	Conversely, the prompt carries a burden cost when an individual is doing relatively well.
	To implement this, we define the smooth vulnerability score $V_t\coloneq \expit\{k(S_{t,1}-c)\}$ for $k=3$ and $c=0.2$. 
	The prompt effect is then given by
	$$\tau(S_{t-1}) = -\gamma_\text{help} V_{t-1}(1+\gamma_\text{avail}S_{t-1,2})+\gamma_\text{harm}(1-V_{t-1})\{1+\gamma_\text{avail}(1-S_{t-1,2})\}\mathcomma$$
	with $\gamma_\text{help}=0.35$, $\gamma_\text{harm}=0.18$ and $\gamma_\text{avail}=0.6$.
	
	Treatment assignments were generated through a logistic model with potential dependence on the current state:
	$\pi_{t,0}(A_t=1\mid H_t) = \expit(a S_{t,1})$.
	We considered $a\in\{0,0.25\}$.
	When $a=0$, treatment is assigned with probability $0.5$ at every decision point, corresponding to a micro-randomized trial.
	When $a=0.25$, participants with higher negative affect are more likely to receive treatment; this represents a generally good observational policy that moves treatment in the same direction as the optimal stochastic policy in our policy class.
	
	The distal outcome was generated as a nonlinear function of the final state, thereby generating nonlinear outcome regression models and challenging bias terms to control. 
	Let $B\coloneq \log\{1+\exp(S_{T,1}-1)\}$ and $H=(\max\{0,S_{T,1}-0.8\})^2$.
	The outcome mean was then
	$$m = \expit\{\eta_\text{soft}B -\eta_\text{hinge}H + \eta_{S_2}S_{T,2}-\eta_{S_1S_2}S_{T,1}S_{T,2}\}\mathcomma$$
	with $\eta_\text{soft}=0.9$, $\eta_\text{hinge}=0.9$, $\eta_{S_2}=0.3$, and $\eta_{S_1S_2}=-0.2$.
	Finally, we let $Y = \operatorname{clip}(m + \epsilon_Y,0,1)$, where $\epsilon_Y\sim N(0,0.05^2)$.
	The term $B$ acts as a softplus function that introduces a smooth penalty for elevated negative affect whereas $H$ is a squared hinge term that imposes an additional penalty once negative affect exceeds a higher threshold.
	Thus, moderate increases in negative affect at time $T$ reduce the outcome gradually, but more extreme affects are penalized more sharply.
	The terms involving $S_{T,2}$ allow cigarette availability to directly influence the final outcome and modify the association between negative affect and the outcome.
	
	\section{Auxiliary Results}
	\label{sec:auxiliary}
	
	We present the following result,
	which is used to control the supremum of several terms throughout the theoretical results.

	\begin{lemma}
		\label{corollary:maximal-inequality-shortcut}
		Let $\mathcal F$ be $\P_0$-Donsker and define $\rho(f,g)\coloneq\lVert f-g\rVert_{2,\P_0}$.
		Let $\mathcal A_n \subseteq \mathcal F \times \mathcal F$ be a random set such that $\sup_{(f,g)\in\mathcal A_n} \rho(f,g)=o_p(1)$.
		Then,
		$$\sup_{(f,g)\in\mathcal A_n} \lvert \G_n(f-g)\rvert =o_p(1)\mathperiod$$
		Equivalently,
		$$\sup_{(f,g)\in\mathcal A_n} \lvert (\P_n-\P_0)(f-g)\lvert = o_p(n^{-1/2})\mathperiod$$
	\end{lemma}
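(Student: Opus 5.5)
The natural tool is asymptotic equicontinuity of the empirical process $\G_n=\sqrt n(\P_n-\P_0)$ on a Donsker class. Since $\mathcal F$ is $\P_0$-Donsker, $\mathcal F$ is totally bounded in $\rho$ (by the standard characterization, e.g.\ van der Vaart and Wellner, Theorem 1.5.7 and Section 2.1.2). Moreover $\G_n$ is asymptotically uniformly $\rho$-equicontinuous in probability: for every $\epsilon,\eta>0$ there is $\delta>0$ with
$$\limsup_{n\to\infty} \P^\ast\Bigl(\sup_{f,g\in\mathcal F:\,\rho(f,g)<\delta}\lvert \G_n(f-g)\rvert>\epsilon\Bigr)<\eta\mathperiod$$
Here $\rho$ is the $L_2(\P_0)$ distance. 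It can replace the centered variance semimetric because the two are equivalent for this purpose when $\mathcal F$ is Donsker: $\sup_{f\in\mathcal F}\lvert\P_0 f\rvert<\infty$, and the centered semimetric is bounded by $\rho$.

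Given this, fix $\epsilon,\eta>0$ and choose $\delta$ as above. Split on the event $E_n\coloneq\{\sup_{(f,g)\in\mathcal A_n}\rho(f,g)<\delta\}$. On $E_n$, every pair in the random set $\mathcal A_n$ belongs to the deterministic set $\{(f,g)\in\mathcal F^2:\rho(f,g)<\delta\}$. Hence the supremum over $\mathcal A_n$ is dominated by the supremum over that deterministic set, and
$$\P^\ast\Bigl(\sup_{(f,g)\in\mathcal A_n}\lvert\G_n(f-g)\rvert>\epsilon\Bigr)\le \P^\ast\Bigl(\sup_{\rho(f,g)<\delta}\lvert\G_n(f-g)\rvert>\epsilon\Bigr)+\P^\ast(E_n^c)\mathperiod$$
The hypothesis $\sup_{\mathcal A_n}\rho=o_p(1)$ gives $\P^\ast(E_n^c)\to0$. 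Taking $\limsup$, the left side is eventually below $\eta$. Since $\eta$ is arbitrary, the supremum is $o_p(1)$. Dividing by $\sqrt n$ gives the equivalent $o_p(n^{-1/2})$ statement for $(\P_n-\P_0)(f-g)$.

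The main obstacle is measurability. Because $\mathcal A_n$ is random and may depend on the same sample, the supremum need not be measurable. I will work throughout with outer probabilities $\P^\ast$ and interpret $o_p$ in the outer sense, as is standard in this empirical process setting. The dominating step above is pointwise in the sample, so it survives passing to outer probability without any independence between $\mathcal A_n$ and $\G_n$.

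The only other point to check is the switch from the variance semimetric to $\rho$. I will handle it by noting that the $\rho$-small set is contained in a set that is small in the variance semimetric, so the equicontinuity statement transfers directly.
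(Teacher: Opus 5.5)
Your proof is correct and is essentially the paper's argument: both split on the event that the random index set $\mathcal A_n$ is $\rho$-small and then invoke asymptotic $\rho$-equicontinuity of $\G_n$ on the Donsker class. The paper uses a deterministic sequence $\delta_n\downarrow0$ with $\P_0\{\sup_{(f,g)\in\mathcal A_n}\rho(f,g)>\delta_n\}\to0$ where you use a fixed $\delta$ and let $\eta$ be arbitrary, and you are more careful about outer probabilities. One harmless slip: a Donsker class need not satisfy $\sup_{f\in\mathcal F}\lvert\P_0 f\rvert<\infty$ or be totally bounded in $\rho$ (the class of all constant functions is a counterexample), but your argument never uses this, since the bound of the centered semimetric by $\rho$ alone transfers the equicontinuity.
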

	
	\begin{proof}[Proof of \cref{corollary:maximal-inequality-shortcut}]
		By assumption, there exists a deterministic $\delta_n\downarrow0$ such that 
		$$\P_0 \Bigl(\sup_{(f,g)\in\mathcal A_n} \rho(f,g) > \delta_n\Bigr) \to 0\mathperiod$$
		Hence, for any $\epsilon>0$,
		\begin{align*}
			\P_0 \Bigl( \sup_{(f,g)\in\mathcal A_n} \lvert \G_n(f-g)\rvert > \epsilon \Bigr) &\le \P_0 \Bigl( \sup_{(f,g)\in\mathcal A_n} \rho(f,g) > \delta_n \Bigr) \\
			&\qquad {} + \P_0 \Bigl( \sup_{ \substack{(f,g)\in\mathcal A_n \\ \rho(f,g) \le \delta_n}} \lvert \G_n(f-g)\rvert > \epsilon \Bigr)\mathperiod
		\end{align*}
		The first term tends to zero by construction. 
		The second tends to zero by the asymptotic uniform equicontinuity with respect to the semimetric $\rho$ of the empirical process on the $\P_0$-Donsker class $\mathcal F$.
		Therefore, $\sup_{(f,g)\in\mathcal A_n}\lvert \G_n(f-g)\rvert=o_p(1)$.
	\end{proof}

	\section{Proofs Regarding Regimen-Response Curve Estimation}
	\label{sec:proof-estimation}
	
	\subsection{Identification}

	\begin{proof}[Proof of \cref{lemma:identification-expectation}]
		We begin with a more rigorous definition of the potential outcome, $Y^{\theta,\alpha}$.
		Let $\bar{A}^{\theta,\alpha}$ be generated sequentially as follows. 
		For $t =1,\ldots,T$,
		\begin{itemize}
			\item Given history $H_t^{\theta,\alpha}=(\bar{S}_t^{\theta,\alpha},\bar{A}_{t-1}^{\theta,\alpha})$, draw $A_t^{\theta,\alpha}\sim q_t^{\theta,\alpha}(\cdot \mid H_t^{\theta,\alpha})$.
			\item Draw counterfactual state variables $S_{t+1}^{\theta,\alpha}$ given $(H_t^{\theta,\alpha}, A_t^{\theta,\alpha})$ according to the same conditional law as under $\P_0$.
		\end{itemize}
		Then, define $Y^{\theta,\alpha}\coloneq Y^{\bar{A}^{\theta,\alpha}}$, as the potential outcome that would be observed if treatment were assigned according to this stochastic policy.
		Under \cref{assumption:identification}, the joint distribution of $(\bar{S}^{\theta,\alpha},\bar{A}^{\theta,\alpha},Y^{\theta,\alpha})$ is fully characterized by replacing the observed treatment mechanism $\pi_{t,0}(a_t\mid H_t)$ with $q_t^{\theta,\alpha}(a_t \mid H_t)$ and leaving the conditional laws of the covariates and outcome unchanged.
		Let $\P_0^{\theta,\alpha}$ denote the induced distribution under this replacement.
		Then, for any integrable $f(Y)$,
		$$\P_0\{f(Y^{\theta,\alpha})\} = \int f(y)\dd \P_0^{\theta,\alpha}(o)\mathsemicolon$$
		specifically,
		$$\P_0(Y^{\theta,\alpha}) = \int y\dd \P_0^{\theta,\alpha}(o)\mathperiod$$
		
		Under $\P_0$, the observed data law factorizes as
		$$\dd \P_0(o) = p(s_1) \Biggl\{\prod_{t=1}^T\pi_{t,0}(a_t\mid h_t)\Biggr\} \Biggl\{\prod_{t=1}^{T-1} p(s_{t+1}\mid h_t, a_t)\Biggr\}p(y\mid\bar{s},\bar{a})\mathcomma$$
		whereas
		$$\dd \P_0^{\theta,\alpha}(o) = p(s_1) \Biggl\{\prod_{t=1}^Tq_t^{\theta,\alpha}(a_t\mid h_t)\Biggr\} \Biggl\{\prod_{t=1}^{T-1}p(s_{t+1}\mid h_t, a_t)\Biggr\}p(y\mid\bar{s},\bar{a})\mathperiod$$
		Then, under \cref{assumption:identification:positivity}, we have $q_t^{\theta,\alpha}(\cdot\mid H_t) \ll \pi_{t,0}(\cdot \mid H_t)$ almost surely and hence, $\P_0^{\theta,\alpha} \ll \P_0$.
		The Radon-Nikod\`ym derivative of $\P_0^{\theta,\alpha}$ with respect to $\P_0$ is therefore
		$$\frac{\dd\P_0^{\theta,\alpha}}{\dd\P_0}(o)=\prod_{t=1}^T \frac{q_t^{\theta,\alpha}(a_t \mid h_t)}{\pi_{t,0}(a_t\mid h_t)}=W_0^{\theta,\alpha}\mathperiod$$
		Hence, for any integrable $f$,
		$$\int f(o) \dd\P_0^{\theta,\alpha}(o) = \int f(o)\prod_{t=1}^T \frac{q_t^{\theta,\alpha}(a_t \mid h_t)}{\pi_{t,0}(a_t\mid h_t)} \dd\P_0(o) =\P_0\{f(O)W_0^{\theta,\alpha}\}\mathperiod$$
		Taking $f(O)=Y$ we have $\P_0(Y^{\theta,\alpha})=\int y \dd\P_0^{\theta,\alpha}(o) = \P_0(W_0^{\theta,\alpha}Y)$, which completes the proof.
	\end{proof}
	
	\subsection{Canonical Gradient}

	\begin{proof}[Proof of \cref{lemma:dcar-expectation}]
		Let $\{\P_\epsilon:\epsilon\in\mathbb{R}\}$ be a regular parametric submodel through $\P_0$ with density $p_\epsilon$ and $\P_{\epsilon=0}=\P_0$. We denote by $h(O)$ the mean-zero score of the submodel at $\epsilon=0$, in the sense that for any integrable $f$, $[\partial_\epsilon \P_\epsilon f]_{\epsilon=0}=[\partial_\epsilon \int f(o)\,p_\epsilon(o)\dd{o}]_{\epsilon=0}= \P_0\{f(O)h(O)\}$.
		For each $\P$, let $\pi_{t,\P}(a\mid  H_t)$ denote the propensity at time $t$, and define $W^{\theta,\alpha}_{t}(\P)(A_t, H_t)=1+(t/T)^{\alpha}\{\frac{q_t^\theta(A_t, H_t)}{\pi_{t,\P}(A_t\mid  H_t)}-1\}$ with product weight $W^{\theta,\alpha}(\P)(O)=\prod_{t=1}^T W^{\theta,\alpha}_{t}(\P)(A_t, H_t)$. 
		The target parameter introduced in \cref{sec:ipw-mapping} is $\Psi(\P,\theta,\alpha)=\P\{W^{\theta,\alpha}(\P)(O)Y\}$. 
		By the defining property of the canonical gradient at $\P_0$, there exists a mean-zero measurable function $\phi_\psi(O;\eta_0,\theta,\alpha)$ such that $\{\partial_\epsilon \psi(\P_\epsilon,\theta,\alpha)\}_{\epsilon=0}= \P_0\{\phi_\psi(O;\eta_0,\theta,\alpha)h(O)\}$. 
		We continue by computing this pathwise derivative and verifying the stated closed-form for $\phi_\psi$. 
		
		We consider the general case of perturbation weights $\delta=(\delta_1,\ldots,\delta_T)\in [0,1]^T$ in place of the weights $(t/T)^\alpha$.
		Under this notation, 
		$$W_{t}^{\theta,\delta_t}(\P_\epsilon)\coloneq (1-\delta_t) + \delta_t \frac{q_t^\theta(A_t, S_t)}{\pi_{t,\epsilon}(A_t\mid H_t)},\quad W^{\theta,\delta}(\P_\epsilon)\coloneq \prod_{t=1}^T  W_{t}^{\theta,\delta_t}(\P_\epsilon)\mathcomma$$
		and $\psi_0(\theta,\delta;\P_\epsilon)=\P_\epsilon\{W^{\theta,\delta}(\P_\epsilon)Y\}$.
		We have
		\begin{align*}
			&\partial_\epsilon \psi_0(\theta,\delta;\P_\epsilon)|_{\epsilon=0}\\
			&= \left.\partial_\epsilon\P_0\left[\left\{\prod_{t=1}^T \frac{\delta_t q_t^\theta+(1-\delta_t)\pi_{t,\epsilon}}{\pi_{t,\epsilon}}\right\}Y\{1+\epsilon h(O)\}\right]\right|_{\epsilon=0} \\
			&= \P_0\left[\left\{\prod_{t=1}^T \frac{\delta_t q_t^\theta+(1-\delta_t)\pi_{t,0}}{\pi_{t,0}}\right\}Yh(O)\right] + \P_0\left[\left.\partial_\epsilon\left\{\prod_{t=1}^T \frac{ q_t^\theta+(1-\delta_t)\pi_{t,\epsilon}}{\pi_{t,\epsilon}}\right\}Y \right|_{\epsilon=0}\right]\mathperiod
		\end{align*}
		Then,
		\begin{align*}
			&\P_0\left[\left.\partial_\epsilon\left\{\prod_{t=1}^T \frac{\delta_t q_t^\theta+(1-\delta_t)\pi_{t,\epsilon}}{\pi_{t,\epsilon}}\right\}Y \right|_{\epsilon=0}\right] \\
			&= \P_0 \Bigg[\sum_{t=1}^T\left.\partial_\epsilon\left\{ \frac{\delta_t q_t^\theta+(1-\delta_t)\pi_{t,\epsilon}}{\pi_{t,\epsilon}}\right\} \right|_{\epsilon=0} \Biggl\{\prod_{j\neq t}\frac{\delta_t q_t^\theta+(1-\delta_t)\pi_{t,0}}{\pi_{t,0}} \Biggr\}Y\Bigg] \\
			&= \P_0\Bigg[\sum_{t=1}^T \underbrace{\Biggl\{ \frac{-\delta_tq_t^\theta}{\pi_{t,0}}h(A_t\mid H_t)\Biggr\}\Biggl\{\prod_{j\neq t}\frac{\delta_t q_t^\theta+(1-\delta_t)\pi_{t,0}}{\pi_{t,0}} \Biggr\}Y}_{\eqqcolon X_t}\Bigg] \\
			&= \sum_{t=1}^T \P_0(X_t)\mathcomma
		\end{align*}
		where
		\begin{align*}
			&\P_0(X_t) \\
			&= \P_0 \{\P_0(X_t \mid H_t)\} \\
			&= \P_0 \Biggl[-\delta_t \Bigl(\prod_{j<t}W_{j,0}^{\theta,\delta} \Bigr) \frac{q_t^\theta}{\pi_{t,0}} 
			h(A_t\mid H_t) \underbrace{\P_0\biggl\{ \Bigl(\prod_{s>t}W_{j,0}^{\theta,\delta} \Bigr)Y \mid H_t \biggr\}}_{=\mu_t^{\theta,\delta}(H_t, A_t)}\Biggr] \\
			&= \P_0 \Biggl[-\delta_t \Bigl(\prod_{j<t}W_{j,0}^{\theta,\delta} \Bigr)\sum_{a\in\{0,1\}}\Biggl\{ \frac{I(A_t=a) q_t^\theta(a,S_t)}{\pi_{t,0}(A_t\mid H_t)}\Biggr\} \mu_t^{\theta,\delta}(H_t,a)h(A_t\mid H_t)\Biggr] \\
			&=  \P_0 \Biggl[-\delta_t \Bigl(\prod_{j<t}W_{j,0}^{\theta,\delta} \Bigr)\sum_{a\in\{0,1\}}\Biggl\{ \frac{I(A_t=a) q_t^\theta(a,S_t)}{\pi_{t,0}(A_t\mid H_t)}\Biggr\} \mu_t^{\theta,\delta}(H_t,a)h(A_t\mid H_t)\Biggr] \\
			&\qquad - \P_0\Biggl[-\delta_t \Bigl(\prod_{j<t}W_{j,0}^{\theta,\delta} \Bigr)  \sum_{a\in\{0,1\}} 
			q^\theta_t(a,S_t)\mu_t^{\theta,\delta}(H_t,a)h(A_t\mid H_t)\Biggr] \\
			&= \P_0\Biggl[\underbrace{-\delta \Bigl(\prod_{j<t}W_{j,0}^{\theta,\delta} \Bigr)\sum_{a\in\{0,1\}} \frac{I(A_t=a)-\pi_{t,0}(A_t\mid H_t)}{\pi_{t,0}(A_t\mid H_t)} q_t^\theta(a,H_t) \mu_t^{\theta,\delta}(H_t,a)}_{\eqqcolon f(A_t,H_t)}h(A_t\mid H_t)\Biggr] \\
			&= \P_0\{f(A_t, H_t)h(O)\}\mathcomma
		\end{align*}
		where the final equality holds because $f(A_t, H_t)$ is mean zero conditional on $ H_t$.
		It follows that the canonical gradient is
		\begin{align*}
			\phi_\psi(O) &=W_0^{\theta,\delta}Y \\ &\quad- \sum_{t=1}^T\delta_t \Bigl(\prod_{j<t} W_{j,0}^{\theta,\delta} \Bigr) \sum_{a\in\{0,1\}} \frac{I(A_t=a)-\pi_{t,0}(A_t\mid H_t)}{\pi_{t,0}(A_t\mid H_t)} q_t^\theta(A_t=a,H_t) \mu_t^{\theta,\delta}(H_t,A_t=a) \\
			&\quad -\psi_{0}(\theta,\delta)\mathperiod
		\end{align*}
		Finally, the result of \cref{lemma:dcar-expectation} is given by letting $\delta_t=(t/T)^\alpha$.
	\end{proof}

	\subsection{Theorem 1}

	\begin{proof}[Proof of \cref{theorem:uipw-asym}]
		We begin with an algebraic decomposition of $\psi_n(\theta;\alpha)-\psi_0(\theta;\alpha)$ for any given $(\theta,\alpha)$.
		We have
		\begin{align*}
			\psi_n(\theta;\alpha)-\psi_0(\theta;\alpha) &= (\P_n-\P_0)(W_0^{\theta,\alpha}Y) + \P_n(W_n^{\theta,\alpha}Y-W_0^{\theta,\alpha}Y) \\
			&= (\P_n-\P_0)(W_0^{\theta,\alpha}Y) + \P_0(W_n^{\theta,\alpha}Y-W_0^{\theta,\alpha}Y) + R_{1,n}(\theta,\alpha) \mathcomma
		\end{align*}
		where $R_{1,n}(\theta,\alpha)=(\P_n-\P_0)(W_n^{\theta,\alpha}Y-W_0^{\theta,\alpha}Y)$. 
		
		We continue with the second term, $\P_0(W_n^{\theta,\alpha}Y-W_0^{\theta,\alpha}Y)=\P_0\{(\prod_{t=1}^T W_{t,n}^{\theta,\alpha}-\prod_{t=1}^T W_{t,0}^{\theta,\alpha})Y\}$.
		We factor the difference of products using the following identity: 
		$\prod_{t=1}^Ta_t - \prod_{t=1}^Tb_t = \sum_{t=1}^T(\prod_{i<t}a_i)(a_t-b_t)(\prod_{j>t}b_j)$. 
		We have that
		\begin{align*}
			W_{t,n}^{\theta,\alpha}-W_{t,0}^{\theta,\alpha}&= (t/T)^\alpha q^\theta_t(A_t,\bar{S}_t)\left\{\frac{1}{\pi_{t,n}(A_t\mid H_t)}-\frac{1}{\pi_{t,0}(A_t\mid H_t)}\right\} \\
			&= (t/T)^\alpha q^\theta_t(A_t,\bar{S}_t)\frac{\pi_{t,0}(A_t\mid H_t)-\pi_{t,n}(A_t\mid H_t)}{\pi_{t,0}(A_t\mid H_t)\pi_{t,n}(A_t\mid H_t)}\mathcomma    
		\end{align*}
		Hence,
		\begin{align*}
			&\P_0\big(W_n^{\theta,\alpha}Y - W_0^{\theta,\alpha}Y\big) \\
			&= \P_0 \left[\sum_{t=1}^T \Bigl(\prod_{i<t} W_{i,n}^{\theta,\alpha}\Bigr\}
			(t/T)^\alpha\, q^\theta_t(A_t,S_t)\,
			\frac{\pi_{t,0}(A_t\mid H_t)-\pi_{t,n}(A_t\mid H_t)}{\pi_{t,0}(A_t\mid H_t)\pi_{t,n}(A_t\mid H_t)}\Bigl(\prod_{j>t} W_{j,0}^{\theta,\alpha}\Bigr)Y\right] \\
			&= \P_0 \left[\sum_{t=1}^T \Bigl(\prod_{i<t} W_{i,n}^{\theta,\alpha}\Bigr)
			(t/T)^\alpha\, q_t^\theta(A_t,S_t)\,
			\frac{\pi_{t,0}(A_t\mid H_t)-\pi_{t,n}(A_t\mid H_t)}{\pi_{t,0}(A_t\mid H_t)\pi_{t,n}(A_t\mid H_t)}\mu_{t,0}^{\theta,\alpha}(H_t, A_t)\right] \\
			&= \P_0 \left[\sum_{t=1}^T \Bigl(\prod_{i<t} W_{i,n}^{\theta,\alpha}\Bigl)
			(t/T)^\alpha \sum_{a\in\{0,1\}} I (A_t=a)\mu_{t,0}^{\theta,\alpha}(H_t, a)\, q_t^\theta(a,S_t)\,
			\frac{\pi_{t,0}(a\mid H_t)-\pi_{t,n}(a\mid H_t)}{\pi_{t,0}(a\mid H_t)\pi_{t,n}(a\mid H_t)}\right] \\
			&= \P_0 \left[\sum_{t=1}^T \Bigl(\prod_{i<t} W_{i,n}^{\theta,\alpha}\Bigr)
			(t/T)^\alpha \sum_{a\in\{0,1\}} \mu_{t,0}^{\theta,\alpha}(H_t,a)\, q_t^\theta(a,S_t)\,
			\frac{I(A_t=a)-\pi_{t,n}(a\mid H_t)}{\pi_{t,n}(a\mid H_t)}\right]\mathperiod\stepcounter{equation}\tag{\theequation}\label{eqn:dcar-remainder}
		\end{align*}
		
		Let $\Dcar(\mu_0^{\theta,\alpha},\pi_n;\theta,\alpha) \coloneq \sum_{t=1}^T \Dcart(\mu_0^{\theta,\alpha},\pi_n;\theta,\alpha)$, where
		\begin{align*}
			&\Dcart(\mu_0^{\theta,\alpha},\pi_n;\theta,\alpha) \\
			&\coloneq \Bigl(\prod_{i<t} W_{i,n}^{\theta,\alpha}\Bigr)
			(t/T)^\alpha\, \sum_{a\in\{0,1\}}\mu_{t,0}^{\theta,\alpha}(H_t,a)\, q_t^\theta(a,S_t)
			\frac{I(A_t=a)-\pi_{t,n}(a\mid H_t)}{\pi_{t,n}(a\mid H_t)} \\
			&= \Bigl(\prod_{i<t} W_{i,n}^{\theta,\alpha}\Bigr)
			(t/T)^\alpha\Bigl\{\frac{\mu_{t,0}^{\theta,\alpha}(1,H_t)q_t^\theta(1,S_t)}{\pi_{t,n}(1\mid H_t)} - \frac{\mu_{t,0}^{\theta,\alpha}(0,H_t)q_t^\theta(0,S_t)}{\pi_{t,n}(0\mid H_t)}\Bigr\}\{A_t-\pi_{t,n}(1\mid H_t)\}\mathperiod
		\end{align*}
		That is, $\Dcart(\mu_0^{\theta,\alpha},\pi_n;\theta,\alpha)$ is the argument of the expectation in \cref{eqn:dcar-remainder}.
		Then,
		\begin{align*}
			&\P_0\{\Dcar(\mu_0^{\theta,\alpha},\pi_n;\theta,\alpha)\} \\
			&= (\P_0-\P_n)\{\Dcar(\mu_0^{\theta,\alpha},\pi_n;\theta,\alpha)\}  + \P_n\{\Dcar(\mu_0^{\theta,\alpha},\pi_n;\theta,\alpha)\}  \\
			&= (\P_0-\P_n)\{\Dcar(\mu_0^{\theta,\alpha},\pi_0;\theta,\alpha)\}  + \P_n\{\Dcar(\mu_0^{\theta,\alpha},\pi_n;\theta,\alpha)\}  + R_{2,n}(\theta,\alpha)\mathcomma
		\end{align*}
		where
		$\Dcar(\mu_0^{\theta,\alpha},\pi_0;\theta,\alpha)$ is the augmentation term from the canonical gradient $\phi_\psi$ as given in \cref{lemma:dcar-expectation}, and 
		$R_{2,n}(\theta,\alpha)=(\P_0-\P_n)\{\Dcar(\mu_0^{\theta,\alpha},\pi_n;\theta,\alpha)-\Dcar(\mu_0^{\theta,\alpha},\pi_0;\theta,\alpha)\}$.
		Combining everything, we have
		\begin{align*}
			&\psi_n(\theta;\alpha)-\psi_0(\theta;\alpha) \\
			&= (\P_n-\P_0)(W_0^{\theta,\alpha}Y) + (\P_0-\P_n)\{\Dcar(\mu_0^{\theta,\alpha},\pi_0;\theta,\alpha)\}+\P_n\{\Dcar(\mu_0^{\theta,\alpha},\pi_n;\theta,\alpha)\} \\
			&\qquad\, {} + (R_{1,n}+R_{2,n})(\theta,\alpha) \\
			&= (\P_n-\P_0) \{\phi_\psi(O; \eta_0, \theta,\alpha)\}+ \P_n\{\Dcar(\mu_0^{\theta,\alpha},\pi_n;\theta,\alpha)\} + (R_{1,n}+R_{2,n})(\theta,\alpha) \mathperiod \stepcounter{equation}\tag{\theequation}\label{eqn:remainder-full}
		\end{align*}    
		We prove that $\P_n\{\Dcar(\mu_0^{\theta,\alpha},\pi_n;\theta,\alpha)\}$ and the two remainder terms and are bounded uniformly over $\Theta$ in \cref{corollary:dcar-bounded-uniform,lemma:r1-bounded-uniform,lemma:r2-bounded-uniform}, respectively.
		Together, these bounds give 
		$$\sup_{\theta\in\Theta}\lvert \P_n\{\Dcar(\mu_0^{\theta,\alpha},\pi_n;\theta,\alpha)\}+(R_{1,n}+R_{2,n})(\theta,\alpha) \rvert=o_p(n^{-1/2})\mathperiod$$    
		This immediately gives the desired pointwise result for any fixed $\theta$:
		$$\psi_n(\theta; \alpha)-\psi_0(\theta; \alpha) = (\P_n-\P_0)\phi_\psi(O;\eta_0,\theta,\alpha) + o_p(n^{-1/2})\mathcomma$$
		and, hence,
		$$\sqrt{n}\{\psi_n(\theta,\alpha)-\psi_0(\theta,\alpha)\} \dto N\big[0, \P_0\{\phi_\psi^2(O;\eta_0,\theta,\alpha)\}\big]\mathperiod $$
		
		Finally, we argue that the leading term in \cref{eqn:remainder-full} is a tight empirical process indexed by $\Theta$.
		Let $\mathcal{F}=\{\phi(O,\eta_0,\theta,\alpha) :\theta\in\Theta\}$ denote the function class. 
		First, note that $\mathcal{F}$ has finite sectional variational norm under \cref{assumption:cadlag} and therefore $\mathcal{F}$ is Donsker and is bounded by some envelope function $F$ with $\P_0  F^2<\infty$.
		Moreover, $\Theta$ is compact and $\phi$ is almost surely continuous in $\theta$. 
		It follows that $\sqrt{n}(\P_n-\P_0) \{\phi(O,\eta_0,\theta,\alpha)\}\dto \G(\theta;\alpha)$ in $\ell^\infty(\Theta)$, where $\G(\cdot;\alpha)$ is a mean-zero Gaussian process with covariance
		\begin{equation}
			\label{eqn:g-covar}
			\Cov\{\G(\theta_1;\alpha),\G(\theta_2;\alpha)\}=\Cov\{ \phi(O,\eta_0,\theta_1,\alpha),\phi(O,\eta_0,\theta_2,\alpha)\}\mathperiod  
		\end{equation}
		Combining this with the uniform bounds on the remainder terms, we have 
		$$\sqrt{n}\{\psi_n(\theta,\alpha)-\psi_0(\theta,\alpha)\}\dto \G(\theta;\alpha)\mathcomma$$
		as desired.
	\end{proof}
	
	We conclude this subsection by providing bounds for the non-leading terms in \cref{eqn:dcar-remainder}.
	The bias term is controlled by \cref{lemma:d-bounded-uniform}, which extends the general UIPW result from \cite[Lemma 1]{ertefaie_nonparametric_ipw_2023}.
	We proceed by applying this result (\cref{lemma:d-bounded-uniform}) to the specific bias terms that arise in the longitudinal setting.
	Finally, we establish bounds for the remainder terms using standard empirical process theory arguments.
	
	\begin{lemma}
		\label{corollary:dcar-bounded-uniform}
		Suppose that \cref{assumption:identification:positivity,,assumption:q-smooth} hold, that
		\cref{assumption:cadlag,,assumption:hal-space} hold uniformly in $\Theta$ at $\alpha$, and that $\pi_{t,n}$ is sufficiently undersmoothed. 
		Then $$\sup_{\theta\in\Theta} \lvert \P_n\{\Dcar(\mu_0^{\theta,\alpha},\pi_n;\theta,\alpha)\} \rvert=o_p(n^{-1/2})\mathperiod$$
	\end{lemma}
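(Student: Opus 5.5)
Since $T$ is fixed, the triangle inequality gives
$$\sup_{\theta\in\Theta}\lvert \P_n\Dcar(\mu_0^{\theta,\alpha},\pi_n;\theta,\alpha)\rvert\le\sum_{t=1}^T\sup_{\theta\in\Theta}\lvert\P_n\Dcart(\mu_0^{\theta,\alpha},\pi_n;\theta,\alpha)\rvert .$$
It therefore suffices to treat one $t$ at a time. From the second display in the definition of $\Dcart$ in the proof of \cref{theorem:uipw-asym}, each term has the form $f_{t,n}^\theta(H_t)\{A_t-\pi_{t,n}(1\mid H_t)\}$. The multiplier is
$$f_{t,n}^\theta=\Bigl(\prod_{i<t}W_{i,n}^{\theta,\alpha}\Bigr)(t/T)^\alpha\Bigl\{\frac{\mu_{t,0}^{\theta,\alpha}(1,H_t)q_t^\theta(1,S_t)}{\pi_{t,n}(1\mid H_t)}-\frac{\mu_{t,0}^{\theta,\alpha}(0,H_t)q_t^\theta(0,S_t)}{\pi_{t,n}(0\mid H_t)}\Bigr\}.$$
This is the estimated analogue of $f_{t,0}^{\theta,\alpha}$ from \cref{assumption:hal-space}. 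The plan is to replace $f_{t,n}^\theta$ by $f_{t,0}^{\theta,\alpha}$, apply \cref{lemma:d-bounded-uniform} to the latter, and show that the replacement error is $o_p(n^{-1/2})$ uniformly in $\theta$.

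\textbf{Step 1: the oracle multiplier.} We check the hypotheses of \cref{lemma:d-bounded-uniform} for the class $\{f_{t,0}^{\theta,\alpha}:\theta\in\Theta\}$.
\begin{itemize}
\item By \cref{assumption:cadlag}, each $\mu_{t,0}^{\theta,\alpha}$ and each $\pi_{j,0}$ is c\`adl\`ag with finite sectional variation norm.
\item By \cref{assumption:q-smooth}, $q_t^\theta$ is c\`adl\`ag with sectional variation norm uniformly bounded over $\Theta$.
\item By \cref{assumption:identification:positivity}, the denominators are bounded below by $\epsilon_\pi$.
\item Products and quotients of bounded c\`adl\`ag functions with denominators bounded away from zero have sectional variation norm bounded by a constant times the norms of the factors.
\end{itemize}
Hence $f_{t,0}^{\theta,\alpha}$ has uniformly bounded sectional variation norm over $\Theta$. \cref{assumption:hal-space}, holding uniformly, supplies $\sup_\theta\lVert f_{t,0}^{\theta,\alpha}-\tilde f_{t,0}^{\theta,\alpha}\rVert_{2,\P_0}=O_p(n^{-1/4})$. \cref{lemma:d-bounded-uniform} then gives $\sup_\theta\lvert\P_n f_{t,0}^{\theta,\alpha}(A_t-\pi_{t,n})\rvert=o_p(n^{-1/2})$.

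\textbf{Step 2: the replacement error, which I expect to be the main obstacle.} We must control
$$\sup_\theta\bigl\lvert\P_n(f_{t,n}^\theta-f_{t,0}^{\theta,\alpha})(A_t-\pi_{t,n})\bigr\rvert .$$
Write $A_t-\pi_{t,n}=(A_t-\pi_{t,0})+(\pi_{t,0}-\pi_{t,n})$. This splits the error into
$$(\P_n-\P_0)\bigl\{(f_{t,n}^\theta-f_{t,0}^{\theta,\alpha})(A_t-\pi_{t,0})\bigr\}\quad\text{and}\quad\P_n\bigl\{(f_{t,n}^\theta-f_{t,0}^{\theta,\alpha})(\pi_{t,0}-\pi_{t,n})\bigr\}.$$
The $\P_0$-mean of the first piece vanishes because $f_{t,n}^\theta-f_{t,0}^{\theta,\alpha}$ is $H_t$-measurable and $A_t-\pi_{t,0}$ has conditional mean zero given $H_t$.

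For the first piece, the functions lie in a fixed Donsker class of uniformly bounded sectional variation norm. This uses two facts. First, the HAL fits have $L_1$ norm $\lambda_n$ bounded in probability, so the $\pi_{j,n}$ stay in such a class. Second, $\pi_{j,n}$ is bounded away from zero with probability tending to one, which follows from uniform consistency or could be imposed by truncation. Next, expand the difference of products via $\prod a_i-\prod b_i=\sum_i(\prod_{k<i}a_k)(a_i-b_i)(\prod_{k>i}b_k)$ and use boundedness of all factors. This yields
$$\sup_\theta\lVert f_{t,n}^\theta-f_{t,0}^{\theta,\alpha}\rVert_{2,\P_0}\lesssim\sum_{j\le t}\lVert\pi_{j,n}-\pi_{j,0}\rVert_{2,\P_0}=o_p(n^{-1/4}),$$
using the HAL rate. \cref{corollary:maximal-inequality-shortcut} then gives $o_p(n^{-1/2})$ uniformly for the first piece.

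For the second piece, write $\P_n=(\P_n-\P_0)+\P_0$.
\begin{itemize}
\item The $(\P_n-\P_0)$ part is again handled by \cref{corollary:maximal-inequality-shortcut}, since its $L_2$ size is $o_p(1)$.
\item The $\P_0$ part is bounded by Cauchy--Schwarz by $\sup_\theta\lVert f_{t,n}^\theta-f_{t,0}^{\theta,\alpha}\rVert_{2,\P_0}\lVert\pi_{t,n}-\pi_{t,0}\rVert_{2,\P_0}=o_p(n^{-1/4})o_p(n^{-1/4})=o_p(n^{-1/2})$.
\end{itemize}

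The delicate point is that the product of up to $t$ estimated weights must remain uniformly bounded and Donsker. I would first secure this through the positivity bound together with the HAL sectional-variation control. Combining Steps 1 and 2 and summing over $t$ yields the claim.
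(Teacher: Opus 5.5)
Your proof is correct, but it handles the estimated multiplier differently from the paper.

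The paper applies \cref{lemma:d-bounded-uniform} directly to the estimated multiplier $f_{t,n}^{\theta,\alpha}$, the one built from $\prod_{i<t}W_{i,n}^{\theta,\alpha}$ and $\pi_{t,n}$. It first checks that this data-dependent class has uniformly bounded sectional variation norm. It then transfers \cref{assumption:hal-space} from $f_{t,0}^{\theta,\alpha}$ to $f_{t,n}^{\theta,\alpha}$ using the triangle inequality and the fact that $L^2(\P_0)$ projection is a contraction. That is, $\lVert\tilde f_{t,n}^{\theta,\alpha}-\tilde f_{t,0}^{\theta,\alpha}\rVert_{2,\P_0}\le\lVert f_{t,n}^{\theta,\alpha}-f_{t,0}^{\theta,\alpha}\rVert_{2,\P_0}=o_p(n^{-1/4})$, so the projection error for $f_{t,n}^{\theta,\alpha}$ is still $O_p(n^{-1/4})$.

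You instead apply the lemma to the oracle class $f_{t,0}^{\theta,\alpha}$, where \cref{assumption:hal-space} is used exactly as stated. You then control $\P_n\{(f_{t,n}^{\theta}-f_{t,0}^{\theta,\alpha})(A_t-\pi_{t,n})\}$ directly. This is essentially a rerun of the second half of the proof of \cref{lemma:d-bounded-uniform}, with $f_{t,n}-f_{t,0}$ in the role of $\tilde f-f$.

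The paper's route is shorter. However, it tacitly invokes \cref{lemma:d-bounded-uniform} for a multiplier that depends on the data through $\pi_n$, although that lemma's proof is written for a fixed class $\{f_\theta\}$. Your route keeps this data dependence out of the lemma. You handle it explicitly through the conditional mean-zero observation, the Donsker-plus-$L^2$-consistency step, and a Cauchy--Schwarz product term.

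Both arguments rest on the same implicit conditions on the undersmoothed HAL fits: they stay in a fixed bounded-variation class, remain bounded away from zero, and converge at rate $o_p(n^{-1/4})$. You at least state these conditions openly.

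Your $L^2$ bound correctly sums over $j\le t$, because $f_{t,n}$ also contains $\pi_{t,n}$ in its denominators. The paper's text cites only $i<t$ at that step.
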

	
	\begin{proof}[Proof of \cref{corollary:dcar-bounded-uniform}]
		We show that for any given $t$, $\Dcart(\mu_0^{\theta,\alpha},\pi_n;\theta,\alpha)\}=D\{f_\theta(O),\pi_n\}$ where $f_\theta$ satisfies the conditions of \cref{lemma:d-bounded-uniform}.
		
		Let
		\begin{equation*}
			f_{t,n}^{\theta,\alpha}(O) \coloneq \Bigl\{\prod_{i<t} W_{i,n}^{\theta,\alpha}(H_i) \Bigr\}(t/T)^\alpha \Bigl\{\frac{\mu_{t,0}^{\theta,\alpha}(1,H_t)q_t^\theta(1,S_t)}{\pi_{t,n}(1\mid H_t)} - \frac{\mu_{t,0}^{\theta,\alpha}(0,H_t)q_t^\theta(0,S_t)}{\pi_{t,n}(0\mid H_t)}\Bigr\}
		\end{equation*}
		so
		\begin{equation*}
			\Dcar(O,\mu_0^{\theta,\alpha},\pi_n;\theta,\alpha) = f_{t,n}^{\theta,\alpha}(O)\{A_t-\pi_{t,n}(1\mid H_t)\}\eqqcolon  D(f_{t,n}^{\theta,\alpha},\pi_{t,n})\mathperiod
		\end{equation*}
		Likewise let
		\begin{equation*}
			f_{t,0}^{\theta,\alpha}(O) \coloneq \Bigl\{\prod_{i<t} W_{i,0}^{\theta,\alpha}(H_i) \Bigr\}(t/T)^\alpha \Bigl\{\frac{\mu_{t,0}^{\theta,\alpha}(1,H_t)q_t^\theta(1,S_t)}{\pi_{t,0}(1\mid H_t)} - \frac{\mu_{t,0}^{\theta,\alpha}(0,H_t)q_t^\theta(0,S_t)}{\pi_{t,0}(0\mid H_t)}\Bigr\}
		\end{equation*}
		be a version that uses the true treatment model.
		
		We first show that $f_{t,n}^{\theta,\alpha}$ is c\`adl\`ag with uniformly bounded total SVN. 
		First, under \cref{assumption:cadlag}, $\mu_{t,0}^{\theta,\alpha}$ is c\`adl\`ag with uniformly bounded total SVN.
		Likewise we have that $\pi_{t,0}$ is c\`adl\`ag with finite total SVN; under \cref{assumption:identification:positivity}, the same holds for $\pi_{t,0}^{-1}$.
		Moreover, $\pi_{i,n}:i<t$ are c\`adl\`ag with finite total SVN by construction and, by the universal consistency of the HAL, so are $\pi_{i,n}^{-1}$.
		Finally, $q_i^\theta:i\le t$ is c\`adl\`ag with finite total SVN under \cref{assumption:q-smooth}.
		It follows that $f_{t,n}^{\theta,\alpha}$ is a finite product of terms which are all c\`adl\`ag with uniformly bounded total SVN and therefore $f_{t,n}^{\theta,\alpha}$ is also c\`adl\`ag with uniformly bounded total SVN.
		
		We conclude by showing that the projection error is controlled uniformly to invoke \cref{lemma:d-bounded-uniform}. 
		Under \cref{assumption:hal-space}, we have $\sup_{\theta\in\Theta}\|\tilde{f}_{t,0}^{\theta,\alpha}-f_{t,0}^{\theta,\alpha}\|_{2,\P_0}=O_p(n^{-1/4})$.
		Furthermore, $\sup_{\theta\in\Theta}\|f_{t,n}^{\theta,\alpha}-f_{t,0}^{\theta,\alpha}\|_{2,\P_0}=o_p(n^{-1/4})$ by the HAL convergence of $\pi_{i,n}$ for $i<t$. 
		Then, by the triangle inequality,
		\begin{align*}
			\sup_{\theta\in\Theta} \| f_{t,n}^{\theta,\alpha}-\tilde{f}_{t,n}^{\theta,\alpha} \| &\le \sup_{\theta\in\Theta}\| f_{n,t}^{\theta,\alpha}-f_{t}^{\theta,\alpha}\| + \sup_{\theta\in\Theta}\|\tilde{f}_{n,t}^{\theta,\alpha}-\tilde{f}_t^{\theta,\alpha}\| + \sup_{\theta\in\Theta}\|f_t^{\theta,\alpha}-\tilde{f}_t^{\theta,\alpha}\| \\
			&\le 2\sup_{\theta\in\Theta}\| f_{n,t}^{\theta,\alpha}-f_{t}^{\theta,\alpha}\|  + \sup_{\theta\in\Theta}\|f_t^{\theta,\alpha}-\tilde{f}_t^{\theta,\alpha}\| \\ 
			&= o_p(n^{-1/4}) + O_p(n^{-1/4})\mathperiod
		\end{align*}
		Hence, $\sup_{\theta\in\Theta}\| f_{t,n}^{\theta,\alpha}-\tilde{f}_{t,n}^{\theta,\alpha} \|=O_p(n^{-1/4})$.
		The conditions are met to apply \cref{lemma:d-bounded-uniform} for $f_{t,n}^{\theta,\alpha}$, hence
		$\sup_{\theta\in\Theta} \lvert \P_n\{\Dcart(\mu_0^{\theta,\alpha},\pi_n;\theta,\alpha) \rvert =o_p(n^{-1/2})$ for each $t\in\{1,\ldots,T\}$.
		It follows that $\sup_{\theta\in\Theta}\lvert \P_n\{\Dcar(\mu_0^{\theta,\alpha},\pi_n;\theta,\alpha)\rvert =o_p(n^{-1/2})$.
	\end{proof}
	
	\begin{lemma}
		\label{lemma:r1-bounded-uniform}
		The remainder $R_{1,n}(\theta,\alpha)$ is uniformly bounded in $\theta$:
		$$\sup_{\theta\in\Theta}\abs{(\P_n-\P_0)\{W_n(O;\theta,\alpha)Y-W_0(O;\theta,\alpha)Y\}}=o_p(n^{-1/2})\mathperiod$$
	\end{lemma}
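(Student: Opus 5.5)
This will be handled with the uniform asymptotic equicontinuity device of \cref{corollary:maximal-inequality-shortcut}. Take $\mathcal F$ to be the class of functions $o\mapsto W(o)\,y$, where $W(o)=\prod_{t=1}^T\{(t/T)^\alpha q_t^\theta(a_t,s_t)+(1-(t/T)^\alpha)\pi_t(a_t\mid h_t)\}/\pi_t(a_t\mid h_t)$. Here $\theta$ ranges over $\Theta$, and each $\pi_t$ ranges over càdlàg functions with sectional variational norm at most a fixed constant $M$ and values bounded away from $0$ and $1$ by some $\epsilon>0$. Both $W_n^{\theta,\alpha}Y$ and $W_0^{\theta,\alpha}Y$ then lie in $\mathcal F$, at least with probability tending to one. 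Set $\mathcal A_n\coloneq\{(W_n^{\theta,\alpha}Y,W_0^{\theta,\alpha}Y):\theta\in\Theta\}$. Once $\mathcal F$ is shown to be $\P_0$-Donsker and $\sup_{\theta}\lVert W_n^{\theta,\alpha}Y-W_0^{\theta,\alpha}Y\rVert_{2,\P_0}=o_p(1)$, the corollary gives exactly the claim.

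The first step is the Donsker property. The class of càdlàg functions on a bounded domain (\cref{assumption:bounded}) with sectional variational norm at most $M$ is Donsker. Each factor $W_t$ is a finite sum and product of such functions together with $1/\pi_t$. The function $1/\pi_t$ stays in a bounded-variation class because $\pi_t\ge\epsilon$ and $x\mapsto1/x$ is Lipschitz on $[\epsilon,1]$. The map $s\mapsto q_t^\theta(a,s)$ has sectional variational norm bounded uniformly in $\theta$ by \cref{assumption:q-smooth}. Products of finitely many uniformly bounded Donsker classes are Donsker, and $Y$ is bounded, so $\mathcal F$ is Donsker with a bounded envelope. The random membership needs two facts about the HAL fit: $\pi_{t,n}$ has sectional variational norm bounded by $\lambda_{t,n}$, which I take to be $O_p(1)$ under the cap in \cref{sec:undersmooth-practice}; and $\pi_{t,n}$ is bounded away from $0$ with probability tending to one, which I will argue from strong positivity together with uniform consistency of the HAL. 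I expect this to be the main obstacle. The paper's assumptions only loosely guarantee a bounded $L_1$ norm under undersmoothing and a lower bound on $\pi_{t,n}$ in sup norm. If needed I will state these as consequences of the HAL rate and the basis cap, or work on the event where they hold, whose probability tends to one.

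The second step is the $L_2$ rate. Telescoping with the identity $\prod a_t-\prod b_t=\sum_t(\prod_{i<t}a_i)(a_t-b_t)(\prod_{j>t}b_j)$, as in the proof of \cref{theorem:uipw-asym}, gives
\[
W_n^{\theta,\alpha}-W_0^{\theta,\alpha}=\sum_{t=1}^T\Bigl(\prod_{i<t}W_{i,n}^{\theta,\alpha}\Bigr)(t/T)^\alpha q_t^\theta\,\frac{\pi_{t,0}-\pi_{t,n}}{\pi_{t,0}\pi_{t,n}}\Bigl(\prod_{j>t}W_{j,0}^{\theta,\alpha}\Bigr).
\]
All weight factors are bounded by $1+1/\epsilon$, $q_t^\theta\le\bar q$, and $Y$ is bounded. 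Hence $\sup_\theta\lVert (W_n^{\theta,\alpha}-W_0^{\theta,\alpha})Y\rVert_{2,\P_0}\le C\sum_t\lVert\pi_{t,n}-\pi_{t,0}\rVert_{2,\P_0}$ for a constant $C$ free of $\theta$. This bound is $o_p(n^{-1/4})$, and in particular $o_p(1)$, by the HAL convergence rate. Applying \cref{corollary:maximal-inequality-shortcut} then yields $\sup_\theta\lvert R_{1,n}(\theta,\alpha)\rvert=o_p(n^{-1/2})$.
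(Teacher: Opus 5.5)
Your proposal is correct and follows essentially the same route as the paper. Both arguments use a telescoping bound $\sup_\theta\lVert (W_n^{\theta,\alpha}-W_0^{\theta,\alpha})Y\rVert_{2,\P_0}\lesssim\sum_t\lVert\pi_{t,n}-\pi_{t,0}\rVert_{2,\P_0}=o_p(1)$, a Donsker argument via bounded sectional variation norm and strong positivity, and then \cref{corollary:maximal-inequality-shortcut}. Your version is slightly more careful than the paper's, since you place both $W_n^{\theta,\alpha}Y$ and $W_0^{\theta,\alpha}Y$ in a fixed Donsker class (as that lemma formally requires) and make explicit the needed control of $\lambda_{t,n}$ and of $\inf\pi_{t,n}$, which the paper simply asserts.
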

	
	\begin{proof}[Proof of \cref{lemma:r1-bounded-uniform}]
		Consider the following function class indexed by $\theta$:
		\begin{equation*}
			\mathcal{F}_{n,1}\coloneq\big[f_{n,\theta,\alpha}(O)=\left\{W_n(O;\theta,\alpha)-W_0(O;\theta,\alpha)\right\}Y : \theta \in\Theta\big]\mathperiod
		\end{equation*}
		
		First, we establish a uniformly bounded envelope function.
		Note that under strong positivity (\cref{assumption:identification}),  $\abs*{W_{t,0}^{\theta,\alpha}(A_t, H_t)}$ is bounded uniformly over $(\theta,\alpha)$; let $C_W$ be the maximum of these bounds over all $t$.
		Additionally, given that $\pi_{t,n}$ is consistent for $\pi_{t,0}$, we also have that $\abs*{W_{t,n}^{\theta,\alpha}(A_t, H_t)}<C_W$ with high probability. 
		Moreover, we note that the difference in time-specific weights is bounded.
		Recall that $W_{t,0}^{\theta,\alpha}-W_{t,n}^{\theta,\alpha}=(t/T)^\alpha q_t^\theta (\pi_{t,0}-\pi_{t,n})/(\pi_{t,n}\pi_{t,0})$.
		Hence,
		\begin{equation}
			\label{eqn:w-diff-bound}
			\abs*{W_{t,0}^{\theta,\alpha}-W_{t,n}^{\theta,\alpha}}\le (t/T)^\alpha \norm*{q_t^\theta}_\infty \norm*{\pi_{t,n}^{-1}}_\infty \norm*{\pi_{t,0}^{-1}}_\infty\abs*{\pi_{t,0}-\pi_{t,n}}\mathperiod
		\end{equation}
		This bound is finite under strong positivity of both $\pi_{t,n}$ and $\pi_{t,0}$.
		Using a telescoping expansion, we have that 
		\begin{equation*}
			\abs{\prod_{t=1}^T W^{\theta,\alpha}_{t,0}-\prod_{t=1}^T W^{\theta,\alpha}_{t,n}}\le \sum_{t=1}^T \abs{W^{\theta,\alpha}_{t,0}-W^{\theta,\alpha}_{t,n}}C_W^{T-1}<C'_W\mathperiod   
		\end{equation*}
		Thus, we obtain the envelope function $F_1(O)\coloneq C_YC'_W$ with $\norm{F_1(O)}_{\P_0,p}<\infty$ for all $p\ge1$.
		
		Next, we control the $L_2(\P_0)$-seminorm of the function class.
		\cref{eqn:w-diff-bound} and strong positivity give
		\begin{equation*}
			\norm*{W_{t,0}^{\theta,\alpha}-W_{t,n}^{\theta,\alpha}}_{\P_0,2}\lesssim \norm*{\pi_{t,0}-\pi_{t,n}}_{\P_0,2}\mathperiod
		\end{equation*}
		Under consistency of $\pi_{t,n}$ for $\pi_{t,0}$, $\norm*{\pi_{t,0}-\pi_{t,n}}_{\P_0,2}\to0$, and therefore
		\begin{equation*}
			\sup_{\theta\in\Theta} \norm*{W_{t,0}^{\theta,\alpha}-W_{t,n}^{\theta,\alpha}}_{\P_0,2}\to 0\mathperiod
		\end{equation*}
		Therefore, we have that
		\begin{equation*}
			\sup_{\theta\in\Theta} \norm*{f_{n,\theta,\alpha}}_{\P_0,2}\le C_Y C_W^{T-1}\sum_{t=1}^T \sup_{\theta\in\Theta} \norm*{W_{t,0}^{\theta,\alpha}-W_{t,n}^{\theta,\alpha}}_{\P_0,2}\to 0\mathperiod
		\end{equation*}
		
		Additionally, the entropy of the index class is controlled.
		Under \cref{assumption:cadlag}, $\pi_{t,0}$ has finite sectional variational norm.
		Likewise, $\pi_{t,n}$ has finite sectional variational norm by construction.
		Therefore the classes $\{\pi_{t,\cdot}(A_t\mid H_t)\}$ for $\cdot\in\{0,n\}$ and $\{q_t^\theta(A_t, H_t):\theta\in\Theta$ are Donsker.
		Under strong positivity, $\{\pi^{-1}_{t,\cdot}(A_t\mid H_t)\}$ is also Donsker.
		Given that finite products and sums of Donsker functions are Donsker we have that $\mathcal{F}_{n,1}$ is also Donkser.
		
		Finally, applying \cref{corollary:maximal-inequality-shortcut}, we have  $\sup_{\theta\in\Theta}\abs{R_{1,n}}=o_p(n^{-1/2})$.
	\end{proof}
	
	\begin{lemma}
		\label{lemma:r2-bounded-uniform}
		The remainder term $R_{2,n}(\theta,\alpha)$ is bounded uniformly in $\theta$:
		$$\sup_{\theta\in\Theta}\abs{(\P_0-\P_n)\{\Dcar(\mu_0^{\theta,\alpha},\pi_n;\theta,\alpha)-\Dcar(\mu_0^{\theta,\alpha},\pi_0;\theta,\alpha)\}}=o_p(n^{-1/2})\mathperiod$$
	\end{lemma}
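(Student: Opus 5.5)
The argument follows the proof of \cref{lemma:r1-bounded-uniform}. We work with the random function class
\[
\mathcal F_{n,2}\coloneq\bigl\{g_{n,\theta}\coloneq \Dcar(\mu_0^{\theta,\alpha},\pi_n;\theta,\alpha)-\Dcar(\mu_0^{\theta,\alpha},\pi_0;\theta,\alpha):\theta\in\Theta\bigr\}
\]
and aim to apply \cref{corollary:maximal-inequality-shortcut}. The pairs $(f,g)$ are taken from a fixed Donsker class $\mathcal F$ that contains both $\Dcar(\mu_0^{\theta,\alpha},\pi;\theta,\alpha)$ for $\pi$ ranging over HAL fits with bounded sectional variational norm and the true $\pi_0$. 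It therefore suffices to establish three things: a uniformly bounded envelope, membership in a Donsker class, and $\sup_{\theta}\lVert g_{n,\theta}\rVert_{2,\P_0}=o_p(1)$.

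\textbf{Envelope and Donsker property.} We write $\Dcart(\mu_0^{\theta,\alpha},\pi;\theta,\alpha)=f_t^{\theta,\alpha}(\pi)\{A_t-\pi_t(1\mid H_t)\}$, where $f_t^{\theta,\alpha}(\pi)$ is the product of $\prod_{i<t}W_i^{\theta,\alpha}(\pi)$, $(t/T)^\alpha$, and the bracketed $\mu q/\pi$ difference.
\begin{itemize}
\item Under \cref{assumption:bounded}, $\mu_{t,0}^{\theta,\alpha}$ is bounded by $C_Y C_W^{T-t}$.
\item Under strong positivity (\cref{assumption:identification:positivity}) together with consistency of the HAL fits, $\pi_{t,n}$ is bounded away from zero with probability tending to one.
\item Hence every factor is uniformly bounded over $\Theta$, and so is the difference $g_{n,\theta}$.
\end{itemize}
For the Donsker property, \cref{assumption:cadlag} and \cref{assumption:q-smooth} imply that $\mu_{t,0}^{\theta,\alpha}$, $q_t^\theta$, $\pi_{t,0}^{\pm1}$ and $\pi_{t,n}^{\pm1}$ are c\`adl\`ag with uniformly bounded sectional variational norm. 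Finite sums and products of such functions remain in this class, exactly as argued in the proof of \cref{corollary:dcar-bounded-uniform}. The class of c\`adl\`ag functions with bounded sectional variational norm is Donsker, which gives the required containing class $\mathcal F$.

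\textbf{$L_2$ control.} We telescope the difference in three layers:
\begin{enumerate}
\item the product $\prod_{i<t}W_{i,n}^{\theta,\alpha}-\prod_{i<t}W_{i,0}^{\theta,\alpha}$, handled by the telescoping identity used in \cref{theorem:uipw-asym} combined with \cref{eqn:w-diff-bound};
\item the inverse weights $1/\pi_{t,n}(a\mid H_t)-1/\pi_{t,0}(a\mid H_t)$;
\item the residual $\{A_t-\pi_{t,n}\}-\{A_t-\pi_{t,0}\}=\pi_{t,0}-\pi_{t,n}$.
\end{enumerate}
Each layer is bounded by a constant (uniform in $\theta$, using the bounds on $q_t^\theta$ and $\mu$) times $\lvert\pi_{i,n}-\pi_{i,0}\rvert$ for some $i\le t$. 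This yields
\[
\sup_\theta\lVert g_{n,\theta}\rVert_{2,\P_0}\lesssim\sum_{i=1}^T\lVert\pi_{i,n}-\pi_{i,0}\rVert_{2,\P_0}=o_p(1)
\]
by the HAL rate. Applying \cref{corollary:maximal-inequality-shortcut} to the pairs $\{(\Dcar(\cdot,\pi_n),\Dcar(\cdot,\pi_0))\}$ then gives $\sup_\theta\lvert(\P_n-\P_0)g_{n,\theta}\rvert=o_p(n^{-1/2})$.

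The main obstacle is making the Donsker step rigorous given that $\pi_n$ is random. The plan is to take $\mathcal F$ to be the deterministic class obtained by letting $\pi$ range over all propensity functions with sectional variational norm at most $M$ and bounded away from zero by $\epsilon_\pi/2$. The estimate $\pi_{t,n}$ lies in this set with probability tending to one, by the $L_1$ bound on the HAL coefficients and consistency. We then verify that $(\theta,\pi)\mapsto\Dcar$ maps this set into a bounded-variation ball. Here the only delicate part is preserving the variation bound under reciprocals, which is exactly where strong positivity is used.
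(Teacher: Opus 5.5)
Your proposal is correct and follows essentially the same route as the paper: a telescoping bound giving $\sup_\theta\lVert g_{n,\theta}\rVert_{2,\P_0}\lesssim\sum_{i\le t}\lVert\pi_{i,n}-\pi_{i,0}\rVert_{2,\P_0}=o_p(1)$ uniformly in $\theta$, a Donsker property for the difference class, and then an application of \cref{corollary:maximal-inequality-shortcut}. The differences are minor. The paper works one $t$ at a time and sums, and it simply asserts that $\mathcal F_{n,2}$ is Donsker. You treat the full sum at once and sketch an embedding of the random class into a fixed bounded-variation class, which adds rigor without changing the argument.
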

	
	\begin{proof}[Proof of \cref{lemma:r2-bounded-uniform}]
		Fix $t\in\{1,\ldots,T\}$ and consider the function class indexed by $\theta$:
		\begin{equation*}
			\mathcal{F}_{n,2}\coloneq\big[f_{n,\theta,\alpha}(O)=\{g_{n,\theta,\alpha}(O)-g_{0,\theta,\alpha}(O)\} : \theta \in\Theta\big]\mathperiod
		\end{equation*}
		where
		$g_{n,\theta,\alpha}\coloneq\Dcart(\mu_0^{\theta,\alpha},\pi_n;\theta,\alpha)$ and $g_{0,\theta,\alpha}\coloneq\Dcart(\mu_0^{\theta,\alpha},\pi_0;\theta,\alpha)$.
		Recall that, by a telescoping decomposition,
		\begin{align*}
			\abs{\prod_{i<t} W_{i,n}^{\theta,\alpha}-\prod_{i<t}W_{i,0}^{\theta,\alpha}} \le C \sum_{i<t} \abs{\pi_{i,n}-\pi_{i,0}}\mathcomma 
		\end{align*}
		and
		\begin{equation*}
			\norm{\prod_{i<t} W_{i,n}^{\theta,\alpha}-\prod_{i<t}W_{i,0}^{\theta,\alpha}}_{\P_0,2} \le C \sum_{i<t} \norm{\pi_{i,n}-\pi_{i,0}}_{\P_0,2} \mathcomma
		\end{equation*}
		for $C<\infty$ that does not depend on $(\theta,\alpha)$.
		Moreover, for $R^{\theta,\alpha}_{t,\cdot}=\sum_{a\in\{0,1\}} \mu_{t,0}^{\theta,\alpha}q_t^\theta\{I(A_t=a)-\pi_{t,\cdot}\}/\pi_{t,0}$, we have
		\begin{equation*}
			\lVert R_{t,n}^{\theta,\alpha}-R_{t,0}^{\theta,\alpha} \rVert_{\P_0,2} \le C \lVert \pi_{t,n}-\pi_{t,0} \rVert_{\P_0,2}\mathperiod
		\end{equation*}
		Therefore, we have
		\begin{equation*}
			g_{n,\theta,\alpha}-g_{0,\theta,\alpha} = \Bigl(\prod_{i<t} W_{i,n}^{\theta,\alpha}-\prod_{i<t}W_{i,0}^{\theta,\alpha}\Bigr)R_{t,n}^{\theta,\alpha} + \Bigl(\prod_{i<t}W_{i,0}^{\theta,\alpha}\Bigr)(R_{t,n}^{\theta,\alpha}-R_{t,0}^{\theta,\alpha})\mathcomma
		\end{equation*}
		and
		\begin{equation*}
			\norm{g_{n,\theta,\alpha}-g_{0,\theta,\alpha}}_{\P_0,2} \le C' \sum_{i\le t}\norm{\pi_{i,n}-\pi_{i,0}}_{\P_0,2} \mathcomma
		\end{equation*}
		for some $C'$ which does not depend on $(\theta,\alpha)$.
		Therefore
		\begin{equation*}
			\sup_{\theta\in\Theta}\norm{f_{n,\theta,\alpha}}_{\P_0,2} \le C' \sum_{i\le t}\norm{\pi_{i,n}-\pi_{i,0}}_{\P_0,2}=o_p(n^{-1/4})=o_p(1)\mathperiod
		\end{equation*}
		Finally, given that $\mathcal{F}_{n,2}$ is Donsker, we have that $\norm{\G_n}_{\mathcal{F}_{n,2}}=o_p(1)$ by \cref{corollary:maximal-inequality-shortcut}, and 
		$$\sup_{\theta\in\Theta}\abs{(\P_n-\P_0)\{\Dcart(\mu_0^{\theta,\alpha},\pi_n;\theta,\alpha)-\Dcart(\mu_0^{\theta,\alpha},\pi_0;\theta,\alpha)\}}=o_p(n^{-1/2})\mathcomma$$
		for all $t=1,\ldots,T$.
		Taking the sum over $1,\ldots,T$, we have that 
		$$\sup_{\theta\in\Theta}\abs{(\P_n-\P_0)\{\Dcar(\mu_0^{\theta,\alpha},\pi_n;\theta,\alpha)-\Dcar(\mu_0^{\theta,\alpha},\pi_0;\theta,\alpha)\}}=o_p(n^{-1/2})\mathcomma$$
		as desired.
	\end{proof}
	
	\subsection{Corollary 1}
	
	Before proving results as $\alpha_n=o_p(n^{-1/2})$ we introduce some auxiliary results.
	
	\begin{lemma}
		\label{lemma:uipw-asymp-uniform-alpha}
		Fix $\alpha_\epsilon>0$.
		Suppose \cref{assumption:identification:positivity,assumption:bounded} hold and that $\pi_{t,n}$ is sufficiently undersmoothed such that \cref{eqn:hal-undersmooth-condition} holds.
		Assume moreover that \cref{assumption:cadlag,,assumption:hal-space} hold uniformly over $\theta\in\Theta$ and $\alpha\in[0,\alpha_\epsilon]$.
		Then,
		\begin{enumerate}[label=(\alph*),ref=\thelemma(\alph*)]
			\item \label{lemma:uipw-asymp-uniform-alpha:uniform-remainder} \textbf{Uniform remainder control.} Writing the expansion from \cref{theorem:uipw-asym},
			$$\psi_n(\theta;\alpha)-\psi_0(\theta;\alpha)=(\P_n-\P_0)\phi_\psi(O;\eta_0,\theta,\alpha) + R_n(\theta,\alpha)$$
			we have
			$$\sup_{\theta\in\Theta,\alpha\in[0,\alpha_\epsilon]} \lvert R_n(\theta;\alpha) \rvert= o_p(n^{-1/2})\mathperiod$$
			\item \label{lemma:uipw-asymp-uniform-alpha:equicontinuity}\textbf{Stochastic equicontinuity at $\alpha=0$.} For any $\alpha_n=o_p(1)$,
			$$\sup_{\theta\in\Theta} \lvert (\P_n-\P_0)\left\{\phi_\psi(O;\eta_0,\theta,\alpha_n) - \phi_\psi(O;\eta_0,\theta,0) \right\} \rvert=o_p(n^{-1/2})$$
			\item \label{lemma:uipw-asymp-uniform-alpha:drift-bound}\textbf{Uniform drift bound.} For any $\alpha\in[0,\alpha_\epsilon]$,
			$\sup_{\theta\in\Theta} \lvert \psi_0(\theta;\alpha)-\psi_0(\theta;0)\rvert \le C\alpha$,
			for some finite constant $C$.
		\end{enumerate}
	\end{lemma}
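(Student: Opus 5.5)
For part (a), I will re-run the proof of \cref{theorem:uipw-asym} with the index set enlarged from $\Theta$ to $\Theta\times[0,\alpha_\epsilon]$. The decomposition in \cref{eqn:remainder-full} is purely algebraic and holds for every $(\theta,\alpha)$. The remainder is therefore
$$R_n(\theta,\alpha)=\P_n\{\Dcar(\mu_0^{\theta,\alpha},\pi_n;\theta,\alpha)\}+R_{1,n}(\theta,\alpha)+R_{2,n}(\theta,\alpha).$$
I will check that each of \cref{corollary:dcar-bounded-uniform,lemma:r1-bounded-uniform,lemma:r2-bounded-uniform} goes through with suprema taken jointly over $(\theta,\alpha)$.

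The key observation is that $\delta_t=(t/T)^\alpha\in[0,1]$ and that $\alpha\mapsto(t/T)^\alpha$ is smooth and monotone. All the constants used in those proofs are uniform in $\alpha$: the weight bound $C_W$, the envelope $C_YC_W'$, and the $L_2$ bounds $C\sum_{i\le t}\|\pi_{i,n}-\pi_{i,0}\|$. For the Donsker requirement, the enlarged classes are built from finite sums and products of the fixed Donsker classes $\{q_t^\theta\}$, $\{\pi_{t,\cdot}^{\pm1}\}$, and $\{\mu_{t,0}^{\theta,\alpha}\}$, together with the uniformly bounded monotone scalar family $\{(t/T)^\alpha\}$. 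The $\mu$ class has uniformly bounded sectional variation norm by \cref{assumption:cadlag} holding uniformly in $\alpha$. The scalar family is a VC-type class, so these products remain Donsker. With that in hand, \cref{corollary:maximal-inequality-shortcut} applies exactly as before.

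The $\Dcar$ term uses \cref{lemma:d-bounded-uniform} with the index set $\Theta\times[0,\alpha_\epsilon]$. That lemma's proof never used the structure of $\Theta$ beyond uniform boundedness of the sectional variation norms and the uniform projection rate, and both are supplied by \cref{assumption:cadlag,,assumption:hal-space} holding uniformly over the enlarged set.

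For part (c), I will show that $\alpha\mapsto\psi_0(\theta;\alpha)$ is Lipschitz uniformly in $\theta$. By the telescoping identity,
$$W_0^{\theta,\alpha}-W_0^{\theta,0}=\sum_t\Bigl(\prod_{i<t}W_{i,0}^{\theta,\alpha}\Bigr)\bigl(W_{t,0}^{\theta,\alpha}-W_{t,0}^{\theta,0}\bigr)\Bigl(\prod_{j>t}W_{j,0}^{\theta,0}\Bigr),$$
where each middle factor is
$$W_{t,0}^{\theta,\alpha}-W_{t,0}^{\theta,0}=\{(t/T)^\alpha-1\}\bigl(q_t^\theta/\pi_{t,0}-1\bigr).$$
Next I bound $|1-(t/T)^\alpha|\le\alpha\log(T/t)\le\alpha\log T$. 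Strong positivity bounds every weight factor by $C_W$, and \cref{assumption:bounded} bounds $|Y|$. Together these give $|\psi_0(\theta;\alpha)-\psi_0(\theta;0)|\le T C_W^{T}\epsilon_\pi^{-1}C_Y\log T\cdot\alpha$, which is the required $C\alpha$.

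For part (b), the plan is to apply \cref{corollary:maximal-inequality-shortcut} to the Donsker class $\mathcal F=\{\phi_\psi(\cdot;\eta_0,\theta,\alpha):\theta\in\Theta,\alpha\in[0,\alpha_\epsilon]\}$, which is Donsker by the argument in (a), with the random set $\mathcal A_n=\{(\phi_\psi(\theta,\alpha_n),\phi_\psi(\theta,0)):\theta\in\Theta\}$. It then suffices to show that
$$\sup_\theta\|\phi_\psi(\theta,\alpha)-\phi_\psi(\theta,0)\|_{2,\P_0}\to0\quad\text{as }\alpha\to0,$$
because $\alpha_n=o_p(1)$ then places $\mathcal A_n$ inside a shrinking $\rho$-neighbourhood with probability tending to one. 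The $W_0Y$ part and the $\psi_0$ part are already handled by the bound from (c). In the $\Dcar$ terms, the factors $\prod_{i<t}W_{i,0}$ and $(t/T)^\alpha$ are likewise Lipschitz in $\alpha$.

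The main obstacle is the remaining factor, the continuity of $\alpha\mapsto\mu_{t,0}^{\theta,\alpha}$ in $L_2$ uniformly over $\theta$. I will handle it directly from its definition as a conditional expectation of $(\prod_{j>t}W_{j,0}^{\theta,\alpha})Y$. Using the same telescoping bound and Jensen's inequality gives
$$\sup_\theta\|\mu_{t,0}^{\theta,\alpha}-\mu_{t,0}^{\theta,0}\|_\infty\lesssim\alpha.$$
Combined with boundedness of the other factors, this yields the required uniform $L_2$ continuity, and (b) follows.
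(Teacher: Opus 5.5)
Your proposal is correct. For (a) and (b) it follows the paper's proof: you re-run the bounds behind \cref{theorem:uipw-asym} with suprema over $\Theta\times[0,\alpha_\epsilon]$, then apply \cref{corollary:maximal-inequality-shortcut} using uniform $L_2(\P_0)$ continuity of $\alpha\mapsto\phi_\psi(\cdot;\eta_0,\theta,\alpha)$. You are more explicit than the paper in two places:
\begin{itemize}
\item You treat the $\P_n\Dcar$ term through \cref{lemma:d-bounded-uniform} on the enlarged index set, whereas the paper's proof mentions only $R_{1,n}$ and $R_{2,n}$.
\item You verify directly that $\mu_{t,0}^{\theta,\alpha}$ is Lipschitz in $\alpha$ in sup norm. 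The paper folds this into its remark about ``bounded products of weights and bounded $Y$''.
\end{itemize}

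The genuine difference is in (c). The paper uses a second-order Taylor expansion in $\alpha$. It bounds $\partial_\alpha^2\psi_0$ uniformly and implicitly needs a uniform bound on $\partial_\alpha\psi_0(\theta;0)$ as well. That route would also expose the leading drift term $\alpha\,\partial_\alpha\psi_0(\theta;0)$ if one wanted it. Your route instead combines the telescoping identity, the factorization $W_{t,0}^{\theta,\alpha}-W_{t,0}^{\theta,0}=\{(t/T)^\alpha-1\}(q_t^\theta/\pi_{t,0}-1)$, and the inequality $1-e^{-x}\le x$. This gives an explicit pointwise Lipschitz constant with no differentiability in $\alpha$. The bound holds for every $\alpha\ge0$, not only on $[0,\alpha_\epsilon]$. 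Because it is pointwise, the same bound feeds directly into (b).

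One caveat you share with the paper: treating $\{q_t^\theta:\theta\in\Theta\}$ as Donsker in (a) uses part (b) of \cref{assumption:q-smooth}, which the lemma's stated hypotheses do not include.
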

	
	\begin{proof}[Proof of \cref{lemma:uipw-asymp-uniform-alpha}]
		The first statement follows by replicating the bounds obtained in \cref{lemma:r1-bounded-uniform,,lemma:r2-bounded-uniform} uniformly over $\alpha\in[0,\alpha_\epsilon]$.
		These hold under the assumption that \cref{assumption:cadlag,,assumption:hal-space} hold uniformly in $\Theta\times[0,\alpha_\epsilon]$ because $\alpha$ only enters through the multipliers $(t/T)^\alpha\in [(1/T)^{\alpha_\epsilon},1]$.
		Hence, all arguments about bounded weight and Donsker envelope arguments go through with constraints replaced by suprema over $\alpha\in[0,\alpha_\epsilon]$. 
		
		We proceed with stochastic equicontinuity.
		Note that $d_t(\alpha)\coloneq (t/T)^\alpha$ is differentiable and Lipschitz on $[0,\alpha_\epsilon]$ with
		$$\lvert d_t(\alpha)-d_t(0) \rvert \le \alpha \sup_{\tilde\alpha\in[0,\alpha_\epsilon]}\lvert\partial_\alpha d_t(\tilde\alpha) \rvert = \alpha \sup_{\tilde\alpha\in[0,\alpha_\epsilon]} d_t(\tilde\alpha)\lvert \log(t/T) \rvert \le \alpha \lvert \log(t/T)\rvert\mathperiod$$
		Thus $\max_{t\le T}\lvert d_t(\alpha)-1\rvert \lesssim \alpha$.
		Moreover, the form of $\phi_\psi$ as bounded products of weights and bounded $Y$, there exists some $C<\infty$ such that
		$$\sup_{\theta\in\Theta} \lVert \phi_\psi(\cdot;\eta_0,\theta,\alpha)-\phi_\psi(\cdot;\eta_0,\theta,0) \rVert_{2,\P_0} \le C\alpha, \quad \alpha\in[0,\alpha_\epsilon]\mathperiod$$
		Then, define the class
		$$\mathcal F_\delta \coloneq \big\{f_{\theta,\alpha}(\cdot)=\phi_\psi(\cdot;\eta_0,\theta,\alpha)-\phi_\psi(\cdot;\eta_0,\theta,0):\theta\in\Theta,\alpha\in[0,\delta] \big\}\mathperiod$$
		For each fixed $\delta\le\alpha_\epsilon$, $\mathcal F_\delta$ is Donsker by the same arguments used in the proof of \cref{theorem:uipw-asym}.
		Moreover, the previous argument gives $\sup_{f\in\mathcal F_\delta}\lVert f \rVert_{2,\P_0} \lesssim\delta$.
		\cref{corollary:maximal-inequality-shortcut} then implies
		$$\sup_{\theta\in\Theta} \lvert (\P_n-\P_0)\left\{\phi_\psi(O;\eta_0,\theta,\alpha_n) - \phi_\psi(O;\eta_0,\theta,0) \right\} \rvert=o_p(n^{-1/2})\mathcomma$$
		as desired.    
		
		Finally we prove the uniform drift bound.
		By Taylor's Theorem, for each $\theta$,
		$$\psi_0(\theta;\alpha)=\psi_0(\theta;0)+\alpha\partial_\alpha \psi_0(\theta;0)+\frac12 \alpha^2 \partial^2_\alpha\psi_0(\theta;\tilde\alpha), \quad \tilde\alpha\in(0,\alpha)\mathperiod$$
		Standard bounding arguments leveraging \cref{assumption:identification:positivity,,assumption:bounded,,assumption:q-smooth} give us the following remainder bound:
		$\sup_{\theta\in\Theta}\lvert \tfrac12\alpha^2\partial^2_\alpha\psi_0(\theta;\tilde\alpha) \rvert\lesssim \alpha^2$.
		Hence, $\sup_{\theta\in\Theta}\lvert \psi_0(\theta;\alpha)-\psi_0(\theta;0)\rvert \le C\alpha$ for $\alpha\in[0,\alpha_\epsilon]$.
	\end{proof}
	
	\begin{proof}[Proof of \cref{corollary:uipw-asymp-alpha}]
		We consider the decomposition:
		$$\psi_n(\theta;\alpha_n) - \psi_0(\theta,0) = \underbrace{\big\{\psi_n(\theta,\alpha_n) - \psi_0(\theta; \alpha_n)\big\}}_{\text{estimation error}} - \underbrace{\big\{\psi_0(\theta;\alpha_n) - \psi_0(\theta; 0)\big\}}_{\text{estimand drift}}\mathperiod$$
		
		The first term represents the estimation error from \cref{theorem:uipw-asym}.
		Under \cref{assumption:identification:positivity,,assumption:q-smooth,,assumption:bounded}, sufficient undersmoothing \cref{eqn:hal-undersmooth-condition}, and uniform \cref{assumption:cadlag,,assumption:hal-space} over $\Theta\times[0,\alpha_\epsilon]$ we have, by \cref{theorem:uipw-asym} and the uniform control over $[0,\alpha_\epsilon]$ from \cref{lemma:uipw-asymp-uniform-alpha}, 
		$$\sup_{\theta\in\Theta} \lvert \{\psi_n(\theta;\alpha_n)-\psi_0(\theta;\alpha_n) - (\P_n-\P_0)\phi_\psi(O;\eta_0,\theta,\alpha_n\} \rvert=o_p(n^{-1/2})\mathperiod$$
		That is, that \cref{theorem:uipw-asym}'s expansion holds at a random $\alpha_n$ that lies within $[0,\alpha_\epsilon]$ with high probability.
		Next, by the stochastic continuity result of \cref{lemma:uipw-asymp-uniform-alpha},
		$$\sup_{\theta\in\Theta} \lvert (\P_n-\P_0)\left\{\phi_\psi(O;\eta_0,\theta,\alpha_n)-\phi_\psi(O;\eta_0,\theta,0) \right\} \rvert = o_p(n^{-1/2})\mathperiod$$
		Combining the two displayed equations gives the uniform linearization
		$$\sup_{\theta\in\Theta}\lvert \{\psi_n(\theta;\alpha_n)-\psi_0(\theta;0)\} - (\P_n-\P_0)\phi_\psi(O;\eta_0,\theta,0) \rvert\mathperiod$$
		Hence, the estimation error is uniformly linear with influence function equal to the influence function from \cref{theorem:uipw-asym} with fixed $\alpha=0$.
		
		Next we control the estimand drift.
		By \cref{lemma:uipw-asymp-uniform-alpha}, for $\alpha\in[0,\alpha_\epsilon]$, $\sup_{\theta\in\Theta}\lvert\psi_0(\theta;\alpha)-\psi_0(\theta;0)\rvert \le C \alpha$.
		Therefore, as $\alpha_n=o_p(n^{-1/2})$,
		$$\sup_{\theta\in\Theta}\lvert \psi_0(\theta;\alpha_n)-\psi_0(\theta;0)\rvert\le C \alpha_n = o_p(n^{-1/2})\mathperiod$$
		
		Combining these two results into the original decomposition gives
		$$\psi_n(\theta;\alpha_n)-\psi_0(\theta;0)=(\P_n-\P_0)\phi_\psi(O;\eta_0,\theta,0)+o_p(n^{-1/2})$$
		uniformly in $\Theta$.
		That is,
		\begin{equation}
			\label{eqn:psi-n-alpha-n-unif}
			\sup_{\theta\in\Theta}\lvert\psi_n(\theta;\alpha_n)-\psi_0(\theta;0)-(\P_n-\P_0)\phi_\psi(O;\eta_0,\theta,0) \rvert=o_p(n^{-1/2})\mathperiod    
		\end{equation}
		
		Finally, we prove weak convergence in $\ell^\infty(\Theta)$.
		\cref{theorem:uipw-asym} yields
		$$\sqrt{n}(\P_n-\P_0)\phi_\psi(O;\eta_0,\theta,0)\dto \G(\theta;0)$$
		in $\ell^\infty(\Theta)$.
		Multiplying the previous linearization (\cref{eqn:psi-n-alpha-n-unif}) by $\sqrt{n}$ gives
		$$\sup_{\theta\in\Theta}\lvert \sqrt{n}\left\{\psi_n(\theta;\alpha_n)-\psi_0(\theta;0) \right\} - \sqrt{n}(\P_n-\P_0)\phi_\psi(O;\eta_0,\theta,0)\rvert = o_p(1)\mathperiod$$
		Hence, by Slutsky's theorem,
		$$\sqrt{n}\left\{\psi_n(\theta;\alpha_n)-\psi_0(\theta;0)\right\} \dto \G(\theta;0)$$
		in $\ell^\infty(\Theta)$.
		Finally, the pointwise result at a fixed $\theta_0$ is a special case of this result. 
	\end{proof}
	
	\section{Proofs Regarding Policy Optimization}
	\label{sec:proof-optimization}
	
	\subsection{Supporting Results}
	
	Before proving \cref{theorem:opt-theta-asym} and \cref{corollary:opt-theta-asym-alpha}, we establish some supporting lemmata regarding the derivatives of the regimen-response curve.
	
	\begin{lemma}
		\label{lemma:deriv-gradient}
		Suppose \cref{assumption:identification:positivity,,assumption:bounded,,assumption:q-smooth} hold. Then,
		\begin{enumerate}
			\item The map $\theta\mapsto \psi_0(\theta;\alpha)$ is continuously differentiable on $\Theta$, with gradient
			\begin{equation*}
				\nabla_\theta \psi_0(\theta;\alpha) = \P_0\left\{  W_0^{\theta,\alpha}Y \sum_{t=1}^T \frac{(t/T)^\alpha}{W_{t,0}^{\theta,\alpha}\pi_{t,0}}\nabla_\theta q_t^\theta\right\}\mathperiod
			\end{equation*}
			\item The parameter $\nabla_\theta \psi_0(\theta;\alpha)$ is pathwise differentiable with canonical gradient
			\begin{equation*}
				\phi_{\nabla\psi}(O;\eta_0,\theta,\alpha) = \nabla_\theta \phi_\psi(O;\eta_0,\theta,\alpha) - \P_0\{\nabla_\theta \phi_\psi(O;\eta_0,\theta,\alpha) \}\mathcomma
			\end{equation*}
			where $\phi_\psi$ is the canonical gradient given in \cref{lemma:dcar-expectation}.
		\end{enumerate}
	\end{lemma}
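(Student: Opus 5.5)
For part (i), I would differentiate under the integral in the identification formula of \cref{lemma:identification-expectation}, $\psi_0(\theta;\alpha)=\P_0(W_0^{\theta,\alpha}Y)$. Only the factors $q_t^\theta$ depend on $\theta$; $\pi_{t,0}$ does not. Each weight is $W_{t,0}^{\theta,\alpha}=(t/T)^\alpha q_t^\theta/\pi_{t,0}+1-(t/T)^\alpha$, so
\[
\nabla_\theta W_{t,0}^{\theta,\alpha}=(t/T)^\alpha\nabla_\theta q_t^\theta/\pi_{t,0}.
\]
The product rule then gives
\[
\nabla_\theta W_0^{\theta,\alpha}=W_0^{\theta,\alpha}\sum_t (t/T)^\alpha\nabla_\theta q_t^\theta/(W_{t,0}^{\theta,\alpha}\pi_{t,0}).
\]
This step uses $W_{t,0}^{\theta,\alpha}\ge \munderbar q(t/T)^\alpha/\pi_{t,0}+1-(t/T)^\alpha>0$, which follows from \cref{assumption:q-smooth}.

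To justify the interchange I would use dominated convergence via the mean value theorem. By strong positivity and $\sup_\theta\|\nabla_\theta q_t^\theta\|\le C_1$, the derivative is bounded by $C_Y\,T\,C_1\epsilon_\pi^{-1}C_W^{T-1}$ uniformly in $\theta$, using \cref{assumption:bounded} for $|Y|\le C_Y$. Continuity of the gradient follows the same way: the integrand is continuous in $\theta$ because $q_t^\theta$ is $C^1$ (\cref{assumption:q-smooth:derivs}), and it is dominated by the same constant, so dominated convergence applies again.

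For part (ii), I would follow the pathwise-derivative computation in the proof of \cref{lemma:dcar-expectation}, carried out one step further. Take a regular submodel $\P_\epsilon$ with score $h$. The claim is that $\partial_\epsilon\nabla_\theta\psi_\epsilon(\theta;\alpha)|_{\epsilon=0}=\P_0\{\nabla_\theta\phi_\psi\, h\}$. The key identity from \cref{lemma:dcar-expectation} is
\[
\partial_\epsilon\psi_\epsilon(\theta;\alpha)|_0=\P_0\{\phi_\psi(O;\eta_0,\theta,\alpha)h\}\quad\text{for every }\theta.
\]
Here $\phi_\psi$ depends on $\theta$ smoothly through $q_t^\theta$ and $\mu_{t,0}^{\theta,\alpha}$. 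I would exchange $\partial_\epsilon$ and $\nabla_\theta$, which requires joint continuity of the mixed partial $\partial_\epsilon\nabla_\theta\psi_\epsilon$ (a Schwarz/Clairaut argument). That gives
\[
\partial_\epsilon\nabla_\theta\psi_\epsilon|_0=\nabla_\theta\P_0\{\phi_\psi h\}=\P_0\{(\nabla_\theta\phi_\psi)h\},
\]
where the last step is again dominated convergence, using bounded $h$ for the submodels and a uniform bound on $\nabla_\theta\phi_\psi$. Since $h$ has mean zero, centering $\nabla_\theta\phi_\psi$ leaves the inner product unchanged. The centered function $\nabla_\theta\phi_\psi-\P_0\nabla_\theta\phi_\psi$ lies in $L_2^0(\P_0)$, which is the full tangent space under the nonparametric model, so it is the canonical gradient. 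I also expect $\P_0\nabla_\theta\phi_\psi=\nabla_\theta\P_0\phi_\psi=0$, because $\phi_\psi$ is mean zero for every $\theta$; the centering is then harmless either way.

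\textbf{Main obstacle.} The hardest step is bounding $\nabla_\theta\phi_\psi$ and the mixed derivative uniformly. Differentiating $\phi_\psi$ brings in $\nabla_\theta\mu_{t,0}^{\theta,\alpha}$. I would control this by differentiating the backward recursion for $\mu_t^{\theta,\alpha}$ by induction from $t=T$, where $\mu_T^{\theta,\alpha}$ is free of $\theta$, downward. Each step adds at most bounded multiples of $\nabla_\theta q$ and $\mu$, so $\sup_\theta\|\nabla_\theta\mu_{t,0}^{\theta,\alpha}\|$ is bounded. The same induction with $\epsilon$-perturbed conditional laws gives joint continuity in $(\epsilon,\theta)$, using bounded scores and $C^2$ smoothness of $q^\theta$ from \cref{assumption:q-smooth:derivs}.
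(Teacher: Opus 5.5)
Your proposal is correct and follows essentially the paper's route. Part (i) is the product rule plus dominated convergence. For part (ii), the canonical gradient of $\nabla_\theta\psi_0$ is identified as the centered $\theta$-derivative of $\phi_\psi$; your Clairaut-type interchange of $\partial_\epsilon$ and $\nabla_\theta$, together with the backward-recursion bounds on $\nabla_\theta\mu_{t,0}^{\theta,\alpha}$, justifies what the paper only asserts by appeal to linearity and continuity of differentiation.

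Your observation that $\P_0\nabla_\theta\phi_\psi=0$ is also right. It shows that the constant in the paper's final explicit display of $\phi_{\nabla\psi}$ should enter as $-\nabla_\theta\psi_0(\theta;\alpha)$ rather than $+\nabla_\theta\psi_0(\theta;\alpha)$.
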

	
	\begin{proof}[Proof of \cref{lemma:deriv-gradient}]
		We have $\nabla_\theta W_{t,0}^{\theta,\alpha}=(t/T)^\alpha {\nabla_\theta q_t^\theta(A_t,S_t)}{\pi_{t,0}^{-1}(A_t\mid H_t)}$.
		Then,
		\begin{align*}
			\nabla W_0^{\theta,\alpha}&= W_0^{\theta,\alpha} \sum_{t=1}^T \frac{\nabla_\theta W_{t,0}^{\theta,\alpha}}{W_{t,0}^{\theta,\alpha}} = W_0^{\theta,\alpha}\sum_{t=1}^T \frac{(t/T)^\alpha}{W_{t,0}^{\theta,\alpha}\pi_{t,0}}\nabla_\theta q_t^\theta\mathperiod
		\end{align*}
		Under \cref{assumption:q-smooth}, $W_0^{\theta,\alpha}Y$ is almost surely differentiable in $\theta$ with derivative dominated an integrable envelope uniformly in $\theta$.
		Hence, by the dominated convergence theorem
		$$\nabla_\theta \P_0(W_0^{\theta,\alpha}Y)=\P_0(\nabla_\theta W_0^{\theta,\alpha}Y)\nabla_\theta \P_0(W_0^{\theta,\alpha}Y) = \P_0\left\{   W_0^{\theta,\alpha}Y\sum_{t=1}^T \frac{(t/T)^\alpha}{W_{t,0}^{\theta,\alpha}\pi_{t,0}}\nabla_\theta q_t^\theta \right\}\mathperiod$$
		
		The parameter $\nabla_\theta \psi_0(\theta;\alpha)$ is pathwise differentiable.
		Under \cref{assumption:q-smooth}, $\theta\mapsto\psi_0(\theta;\alpha)$ is continuously differentiable and its canonical gradient is differentiable in $\theta$.
		Because differentiation with respect to $\theta$ is a linear and continuous operator, the canonical gradient of $\nabla_\theta \psi_0(\theta;\alpha)$ is given by the derivative of the canonical gradient of $\psi_0(\theta;\alpha)$, centered to have mean zero.
		That is,
		\begin{equation*}
			\phi_{\nabla \psi}(O;\eta_0,\theta,\alpha) = \nabla_\theta \phi_{\psi}(O;\eta_0,\theta,\alpha) - \P_0\left\{\nabla_\theta \phi_{\psi}(O;\eta_0,\theta,\alpha)\right\}\mathperiod 
		\end{equation*}
		Recall that the canonical gradient of the regimen-response curve is given by $\phi_{\psi}(O;\eta_0,\theta,\alpha)=W_0^{\theta,\alpha}(O)Y - \Dcar(O;\eta_0,\theta,\alpha)-\psi_0(\theta;\alpha)$.
		All that remains is to derive $\nabla_\theta \phi_{\psi}(O;\eta_0,\theta,\alpha)$.
		The augmentation is given by $\Dcar(O;\eta_0,\theta,\alpha)=\sum_{t=1}^T\Dcart(O;\eta_0,\theta,\alpha)$.
		We have
		\begin{align}
			\begin{split}
				\label{eqn:dcar-deriv}
				\nabla_\theta \Dcart(O;\eta_0,\theta,\alpha)
				&= \Big(\sum_{j<t} \Xi_j^{\theta,\alpha}\Big) \Dcart(O;\eta_0,\theta,\alpha) \\
				&\quad + (t/T)^\alpha \Big(\prod_{j<t} W_{j,0}^{\theta,\alpha}\Big) \sum_{a\in\{0,1\}} \frac{I(A_t=a)-\pi_{t,0}(a\mid H_t)}{\pi_{t,0}(a\mid H_t)} \\
				&\quad\quad \times \left[\left\{\nabla_\theta q_t^\theta(a, H_t)\right\}\mu_t^{\theta,\alpha}(a\mid H_t) + q_t^\theta(a, H_t)\left\{\nabla_\theta \mu_t^{\theta,\alpha}(a\mid H_t)\right\}\right]\mathcomma
			\end{split}
		\end{align}
		where
		\begin{equation}
			\label{eqn:def-xi}
			\Xi_t^{\theta,\alpha} \coloneq \nabla_\theta \log W_{j,0}^{\theta,\alpha} = \frac{(t/T)^\alpha}{W_{t,0}^{\theta,\alpha}\pi_{t,0}}\nabla_\theta q_t^\theta\mathperiod
		\end{equation}
		This term is seen to have mean zero by conditioning on $H_t$.
		Therefore, the canonical gradient is given by
		\begin{align*}
			\phi_{\nabla \psi}(O;\eta_0,\theta,\alpha) = YW^{\theta,\alpha}(O) \sum_{t=1}^T \frac{(t/T)^\alpha}{W_{t,0}^{\theta,\alpha}\pi_{t,0}}\nabla_\theta q_t^\theta - \sum_{t=1}^T\nabla_\theta \Dcart(O;\theta,\alpha) + \nabla_\theta\psi_0(\theta;\alpha)\mathperiod
		\end{align*}
	\end{proof}
	
	\begin{lemma}
		\label{lemma:mu-deriv-cadlag-svn}
		Suppose that \cref{assumption:q-smooth,,assumption:cadlag} hold, then for each $t\in\{1,\ldots,T\}$, $(H_t,A_t)\mapsto
		\nabla_\theta \mu_{t,0}^{\theta,\alpha}(A_t, H_t)$ is c\`al\`ag with finite total SVN. 
	\end{lemma}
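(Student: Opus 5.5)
The plan is to argue by backward induction on $t$ using the recursion for $\mu_{t,0}^{\theta,\alpha}$ recorded before \cref{lemma:dcar-expectation}, after first justifying differentiation under the integral. At $t=T$, $\mu_{T,0}^{\theta,\alpha}(A_T,H_T)=\P_0(Y\mid A_T,H_T)$ does not depend on $\theta$, so $\nabla_\theta\mu_{T,0}^{\theta,\alpha}\equiv0$ and the claim is trivial. For $t<T$, differentiating the recursion gives
\begin{align*}
\nabla_\theta\mu_{t,0}^{\theta,\alpha}(A_t,H_t)
=\P_0\Bigl[\sum_{a\in\{0,1\}}\Bigl\{\nabla_\theta q_{t+1}^{\theta,\alpha}(a,H_{t+1})\,\mu_{t+1,0}^{\theta,\alpha}(a,H_{t+1})
+q_{t+1}^{\theta,\alpha}(a,H_{t+1})\,\nabla_\theta\mu_{t+1,0}^{\theta,\alpha}(a,H_{t+1})\Bigr\}\Bigm| A_t,H_t\Bigr],
\end{align*}
where $\nabla_\theta q_{t+1}^{\theta,\alpha}=\{(t+1)/T\}^\alpha\nabla_\theta q_{t+1}^\theta$, since the $\pi_{t+1,0}$ component of the tilted policy is free of $\theta$. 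The interchange of $\nabla_\theta$ and the conditional expectation is justified by dominated convergence. The integrand is differentiable in $\theta$ by \cref{assumption:q-smooth:derivs}, with $\|\nabla_\theta q_t^\theta\|\le C_1$. The terms $\mu_{t+1,0}^{\theta,\alpha}$ are bounded by \cref{assumption:bounded} together with boundedness of the weights. By the inductive hypothesis, $\nabla_\theta\mu_{t+1,0}^{\theta,\alpha}$ is bounded uniformly in $\theta$. Together these give a $\theta$-free integrable envelope.

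For the regularity claim I will use closure properties of the class of càdlàg functions with finite sectional variation norm on a bounded rectangle (\cref{assumption:bounded}): it is an algebra under sums and products, with $\|fg\|_v^\star\lesssim\|f\|_v^\star\|g\|_v^\star$. The integrand is a finite sum of products of four kinds of factors:
\begin{itemize}
\item $\nabla_\theta q_{t+1}^\theta$ (coordinatewise);
\item $q_{t+1}^\theta$, which has finite SVN by \cref{assumption:q-smooth};
\item $\mu_{t+1,0}^{\theta,\alpha}$, which has finite SVN by \cref{assumption:cadlag};
\item $\nabla_\theta\mu_{t+1,0}^{\theta,\alpha}$, which has finite SVN by induction.
\end{itemize}
The integrand is therefore a càdlàg function of $(H_{t+1},A_{t+1})$ with finite SVN.

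The remaining step is that conditional expectation given $(A_t,H_t)$ preserves the càdlàg and finite-SVN properties. I expect this to be the main obstacle, because it is not automatic without assumptions on the transition kernel of $S_{t+1}$. My first attempt avoids any kernel assumption. I will use that $\mu_{t,0}^{\theta,\alpha}$ itself is assumed càdlàg with finite SVN uniformly in $\theta$ (\cref{assumption:cadlag}). From this I will express $\nabla_\theta\mu_{t,0}^{\theta,\alpha}$ as a limit of difference quotients $\{\mu_{t,0}^{\theta+he_k,\alpha}-\mu_{t,0}^{\theta,\alpha}\}/h$.

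The difference quotients should have uniformly bounded SVN, via a mean-value bound on the integrand combined with the third-derivative bound $C_3$ for uniform control. Lower semicontinuity of the variation norm under pointwise convergence then yields finite SVN of the limit. Right-continuity and left limits pass to the limit because the convergence is uniform, which again follows from the uniform bounds on second derivatives. If this route fails, I will fall back on the standard assumption implicit in \cref{assumption:cadlag} that the conditional law of $S_{t+1}$ given $(A_t,H_t)$ maps bounded-SVN functions to bounded-SVN functions, and apply it directly to the integrand above.
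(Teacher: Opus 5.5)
Your outline is the paper's proof: backward induction with $\nabla_\theta\mu_{T,0}^{\theta,\alpha}\equiv 0$, the same differentiated recursion, and closure of c\`adl\`ag finite-SVN functions under sums and products. Your dominated-convergence justification of the interchange is a detail the paper omits. The paper then simply asserts that the conditional expectation over $S_{t+1}$ of the finite-SVN integrand is again c\`adl\`ag with finite SVN. You are right that this is the real step, but your first route does not close it. Even taken uniformly in $\theta$, \cref{assumption:cadlag} gives only $\sup_\theta\lVert\mu_{t,0}^{\theta,\alpha}\rVert_v^\star\le M$, hence $\lVert h^{-1}(\mu_{t,0}^{\theta+he_k,\alpha}-\mu_{t,0}^{\theta,\alpha})\rVert_v^\star\le 2M/h$. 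The mean-value and $C_2,C_3$ bounds control these quotients in sup norm over $(A_t,H_t)$, not their variation, so lower semicontinuity has no $h$-uniform bound to act on. Those ingredients cannot supply one. Take $\Theta=[-1,1]$ and $f_\theta(x)=(-1)^k k^{-3/2}\sin(k^{1/2}\theta)$ for $x\in[1/(k+1),1/k)$, with $f_\theta(0)=0$. This family is c\`adl\`ag with $\sup_\theta\lVert f_\theta\rVert_v^\star\le 2\sum_k k^{-3/2}$ and $|\partial_\theta^j f_\theta|\le 1$ for $j\le 3$. Yet $\partial_\theta f_\theta|_{\theta=0}=(-1)^k/k$ on $[1/(k+1),1/k)$, which has infinite variation. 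An $h$-uniform variation bound for the quotients of $\mu_{t,0}^{\theta,\alpha}$ would have to come from how $\P_0(\cdot\mid A_t,H_t)$ acts on variation, i.e.\ from the transition law of $S_{t+1}$. That is exactly what you wanted to avoid. The route does deliver the c\`adl\`ag half: since $\sup|\nabla_\theta^2\mu_{t,0}^{\theta,\alpha}|<\infty$, the quotients converge uniformly, and uniform limits of c\`adl\`ag functions are c\`adl\`ag.

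So the argument goes through only with your fallback, and that is an additional hypothesis, not something implicit in \cref{assumption:cadlag}. That assumption constrains only the particular functions $\mu_{t,0}^{\theta,\alpha}$ and $\pi_{t,0}$. You need $g\mapsto\P_0\{g(H_{t+1})\mid A_t,H_t\}$ to map c\`adl\`ag finite-SVN functions to c\`adl\`ag finite-SVN functions, and this fails for general kernels: take $S_{t+1}=m(H_t)$ with $m$ bounded but of unbounded variation, and $g(H_{t+1})=S_{t+1}$. The paper's proof uses this property without stating it. A defensible version of the lemma adds that kernel condition, or imposes \cref{assumption:cadlag} directly on $\nabla_\theta\mu_{t,0}^{\theta,\alpha}$. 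The factor $\nabla_\theta q_{t+1}^\theta$, which you list without a source, needs the same care. \cref{assumption:q-smooth} gives finite SVN of $q_t^\theta(a,\cdot)$ but only sup-norm bounds on its $\theta$-derivatives. The choice $q^\theta(1,s)=1/2+f_\theta(s)/4$, with $f_\theta$ as above, satisfies all of \cref{assumption:q-smooth}, yet $s\mapsto\nabla_\theta q^\theta(1,s)$ has infinite variation at $\theta=0$. For the logistic class the property holds when the components of $b$ have finite SVN, but it must be assumed or verified rather than read off \cref{assumption:q-smooth}; the paper makes the same unsupported attribution.
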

	
	\begin{proof}[Proof of \cref{lemma:mu-deriv-cadlag-svn}]
		We prove by backward recursion.
		Let $t=T$.
		Then
		$$\mu_{T,0}^{\theta,\alpha}(A_T,H_T)=\P_0(Y\mid H_T,A_T)\mathcomma$$
		which does not depend on $\theta$ and hence $\nabla_\theta \mu_{T,0}^{\theta,\alpha}=0$, which is trivially c\`adl\`ag with finite total SVN.
		Assume that for some $t < T$, $\nabla_\theta \mu_{t+1,0}^{\theta,\alpha}$ is c\`adl\`ag with finite total SVN. Then,
		\begin{align*}
			\nabla_\theta \mu_{t,0}^{\theta,\alpha}(A_t, H_t) &= 
			\P_0\Big[ \sum_{a_{t+1}\in\{0,1\}} \left\{\nabla_\theta q_{t+1}^{\theta,\alpha}(a_{t+1},H_{t+1}) \right\}\mu_{t+1,0}^{\theta,\alpha}(a_{t+1},H_{t+1}) \\
			&\qquad\qquad\qquad + q_{t+1}^{\theta,\alpha}(a_{t+1},H_{t+1})\left\{\nabla_\theta \mu_{t+1,0}^{\theta,\alpha}(a_{t+1},H_{t+1}) \right\} \mid A_t,H_t\Big]\mathperiod
		\end{align*}
		Each term inside the conditional expectation is a product of two c\`adl\`ag functions of $H_{t+1}$ with finite total SVN by either the induction hypothesis ($\nabla_\theta\mu_{t+1}^{\theta,\alpha}$), \cref{assumption:q-smooth} ($q_t^{\theta,\alpha}$ and $\nabla_\theta q_t^{\theta,\alpha}$), or \cref{assumption:cadlag} ($\mu_{t+1}^{\theta,\alpha}$). 
		It follows that $\nabla_\theta \mu_{t,0}^{\theta,\alpha}$ is c\`adl\`ag with finite total SVN.
		By backward induction, the result holds for all $t$.
	\end{proof}
	
	\begin{lemma}
		\label{lemma:psi-deriv-asymp}
		Suppose that \cref{assumption:identification:positivity,assumption:bounded,assumption:q-smooth} hold.
		Moreover, suppose that $\pi_{t,n}$ is sufficiently undersmoothed, satisfying \cref{eqn:hal-undersmooth-condition}, that \cref{assumption:cadlag} holds, and now that \cref{assumption:hal-space-deriv} holds.
		Then, for any $\theta$ on the interior of $\Theta$,
		\begin{equation*}
			\nabla_\theta \psi_n(\theta;\alpha)-\nabla_\theta \psi_0(\theta;\alpha) = (\P_n-\P_0)\{\phi_{\nabla \psi}(O;\eta_0,\theta,\alpha)\}+ o_p(n^{-1/2})\mathperiod%
		\end{equation*}
	\end{lemma}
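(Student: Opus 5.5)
We mirror the proof of \cref{theorem:uipw-asym}, now applied to the $\theta$-gradient of the weighted outcome. Since $\psi_n(\theta;\alpha)=\P_n(W_n^{\theta,\alpha}Y)$ and the estimated weights depend smoothly on $\theta$ only through $q_t^\theta$ (by \cref{assumption:q-smooth}), we can write
\[
\nabla_\theta\psi_n(\theta;\alpha)=\P_n\Bigl\{W_n^{\theta,\alpha}Y\sum_{t=1}^T\Xi_{t,n}^{\theta,\alpha}\Bigr\},
\]
where $\Xi_{t,n}^{\theta,\alpha}$ is defined as in \cref{eqn:def-xi} with $\pi_{t,0}$ replaced by $\pi_{t,n}$. \cref{lemma:deriv-gradient} gives the analogous population formula. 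We then decompose
\[
\nabla\psi_n-\nabla\psi_0=(\P_n-\P_0)(\nabla_\theta W_0^{\theta,\alpha}Y)+\P_0\{(\nabla_\theta W_n^{\theta,\alpha}-\nabla_\theta W_0^{\theta,\alpha})Y\}+R_{1,n}',
\]
where $R_{1,n}'=(\P_n-\P_0)\{(\nabla_\theta W_n^{\theta,\alpha}-\nabla_\theta W_0^{\theta,\alpha})Y\}$.

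The remainder $R_{1,n}'$ is handled exactly as in \cref{lemma:r1-bounded-uniform}. By strong positivity, the bounds in \cref{assumption:q-smooth}, and boundedness of $Y$, the class $\{(\nabla_\theta W_n^{\theta,\alpha}-\nabla_\theta W_0^{\theta,\alpha})Y\}$ has a bounded envelope. Its $L_2(\P_0)$ norm is $\lesssim\sum_t\|\pi_{t,n}-\pi_{t,0}\|_{2,\P_0}\to0$, and it is Donsker because it is built from finite products of c\`adl\`ag functions with finite SVN. \cref{corollary:maximal-inequality-shortcut} then gives $o_p(n^{-1/2})$.

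The central step is the drift term. Since $\nabla_\theta$ commutes with $\P_0$ (by dominated convergence, as in \cref{lemma:deriv-gradient}), it equals $\nabla_\theta\P_0\{(W_n^{\theta,\alpha}-W_0^{\theta,\alpha})Y\}$. The telescoping computation leading to \cref{eqn:dcar-remainder} is an identity in $\theta$, so this drift term equals $\nabla_\theta\P_0\Dcar(\mu_0^{\theta,\alpha},\pi_n;\theta,\alpha)$. Differentiating under $\P_0$ as in \cref{eqn:dcar-deriv}, but with $\pi_n$ in the prior weights and the residuals, gives $\P_0\{\nabla_\theta\Dcart(\mu_0,\pi_n)\}=\P_0[g_{t,n}^{\theta,\alpha}(H_t)\{A_t-\pi_{t,n}(1\mid H_t)\}]$. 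Here $g_{t,n}^{\theta,\alpha}$ is the multiplier of \cref{assumption:hal-space-deriv} with $\pi_{i,n}$ substituted for $\pi_{i,0}$. The factor $\sum_{j<t}\Xi_j$ multiplies the $f$-part, and the $\nabla q$ and $\nabla\mu$ parts form the remaining pieces. We then write $\P_0=\P_n-(\P_n-\P_0)$:
\begin{itemize}
\item Replacing $\pi_n$ by $\pi_0$ inside $(\P_n-\P_0)$ costs $o_p(n^{-1/2})$ by the argument of \cref{lemma:r2-bounded-uniform}. What remains, $-(\P_n-\P_0)\nabla_\theta\Dcar(\eta_0)$, combines with the first term to give $(\P_n-\P_0)\phi_{\nabla\psi}$, matching \cref{lemma:deriv-gradient}.
\item The term $\P_n[g_{t,n}^{\theta,\alpha}\{A_t-\pi_{t,n}\}]$ is $o_p(n^{-1/2})$ by \cref{lemma:d-bounded-uniform}, applied coordinatewise. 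For this we need $g_{t,n}$ to be c\`adl\`ag with finite SVN. This follows from \cref{lemma:mu-deriv-cadlag-svn} for $\nabla_\theta\mu_{t,0}$, from \cref{assumption:q-smooth}, and from positivity of the HAL fits. We also need $\|g_{t,n}-\tilde g_{t,n}\|_{2,\P_0}=O_p(n^{-1/4})$. This follows from \cref{assumption:hal-space-deriv} and the triangle inequality $\|g_{t,n}-g_{t,0}\|=o_p(n^{-1/4})$, exactly as in the proof of \cref{corollary:dcar-bounded-uniform}.
\end{itemize}

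The main obstacle is the sign and ordering bookkeeping in this last step. The $\Dcar$ identity holds for every $\theta$, but only once we confirm that differentiating $\mu_{t,0}^{\theta,\alpha}$ inside $\Dcart(\mu_0,\pi_n)$ yields exactly the multiplier $g_{t,n}$. In particular, the $\Xi_j$ in the prior-weight factor must be evaluated at $\pi_n$ rather than $\pi_0$. To pin this down, we first verify the product rule term by term on $\prod_{i<t}W_{i,n}^{\theta,\alpha}$, and only then absorb the $\pi_n$ versus $\pi_0$ discrepancy into the $o_p(n^{-1/4})$ approximation error.
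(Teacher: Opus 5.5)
Your proposal is correct and follows essentially the same route as the paper. The paper differentiates the expansion from the proof of \cref{theorem:uipw-asym} term by term, producing $\nabla_\theta R_{1,n}$, $\nabla_\theta R_{2,n}$ and the bias term $\P_n\{\nabla_\theta\Dcar(\mu_0^{\theta,\alpha},\pi_n;\theta,\alpha)\}$; these are exactly your $R_{1,n}'$, your $\pi_n$-to-$\pi_0$ swap inside $(\P_n-\P_0)$, and your $\P_n[g_{t,n}^{\theta,\alpha}\{A_t-\pi_{t,n}(1\mid H_t)\}]$, with the last handled as you propose via \cref{lemma:d-bounded-uniform}, \cref{assumption:hal-space-deriv}, \cref{lemma:mu-deriv-cadlag-svn} and the triangle inequality, and with the prior-weight log-derivatives evaluated at $\pi_n$.
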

	
	\begin{proof}
		Recall that 
		\begin{align*}
			\psi_n(\theta;\alpha)-\psi_0(\theta;\alpha) &= (\P_n-\P_0) \{\phi(O;\eta_0,\theta,\alpha)\}+ \P_n\{\Dcar(\mu_0^{\theta,\alpha},\pi_n;\theta,\alpha)\} \\
			&\qquad\, {}+ (R_{1,n}+R_{2,n})(\theta,\alpha) \mathperiod 
		\end{align*}
		Hence, 
		\begin{align*}
			\nabla_\theta \psi_n(\theta;\alpha)-\nabla_\theta \psi_0(\theta;\alpha) &= (\P_n-\P_0) \{\nabla_\theta \phi(O; \eta_0,\theta,\alpha)\} \\
			&\qquad\, {} + \P_n\{\nabla_\theta \Dcar(\mu_0^{\theta,\alpha},\pi_n;\theta,\alpha)\} \\
			&\qquad\, {} + (\nabla_\theta R_{1,n}+ \nabla_\theta R_{2,n})(\theta,\alpha) \mathperiod
		\end{align*}
		The leading term is equal to $(\P_n-\P_0)\{\phi_{\nabla \psi}(O; \eta_0,\theta,\alpha)\}$.
		We show all of the remaining terms are $o_p(n^{-1/2})$:
		\setlist[description]{font=\normalfont}
		\begin{description}
			\item[$\nabla_\theta R_{1,n}$] is the empirical process term, $(\P_n-\P_0)\{\nabla_\theta W_n(O;\theta,\alpha)Y-\nabla_\theta W_0(O;\theta,\alpha)Y\}$.
			Under \cref{assumption:identification:positivity,assumption:bounded,assumption:q-smooth,assumption:cadlag}, $O\mapsto\nabla_\theta W_n(O;\theta,\alpha)Y-\nabla_\theta W_0(O;\theta,\alpha)Y$ lies in a fixed Donsker class.
			Because $\norm{\pi_{t,n}-\pi_{t,0}}_{2,\P_0}=o_p(1)$ for all $t$, $\|\nabla_\theta W_n(O;\theta,\alpha)Y-\nabla_\theta W_0(O;\theta,\alpha)Y\|_{2,\P_0}=o_p(1)$, and $\nabla_\theta R_{1,n}(\theta,\alpha)=o_p(n^{-1/2})$ by \citet[Lemma~19.24]{van_der_vaart_asymptotic_1998}.
			
			\item[$\nabla_\theta R_{2,n}$] is equal to $(\P_0-\P_n)\{\nabla_\theta \Dcar(\mu_0^{\theta,\alpha},\pi_n;\theta,\alpha)\allowbreak-\allowbreak \nabla_\theta\Dcar(\mu_0^{\theta,\alpha},\pi_0;\theta,\alpha)\}$.
			The derivative for the $\pi_{t,0}$ case is given in \cref{eqn:dcar-deriv}.
			Using similar bounding arguments to what was used for $\nabla_\theta R_{n,2}(\theta,\alpha)$, we are able to show that $f_{2,n,t}(O)\coloneq\nabla_\theta \Dcart(\mu_0^{\theta,\alpha},\pi_n;\theta,\alpha)\allowbreak-\allowbreak \nabla_\theta\Dcart(\mu_0^{\theta,\alpha},\pi_0;\theta,\alpha)$ is bounded and belongs to a Donsker class.
			Furthermore, using a telescoping decomposition similar to the proof of \cref{lemma:r2-bounded-uniform} we can show that $\lVert f_{2,n,t}(O;\theta,\alpha)\rVert_{2,\P_0}\lesssim \sum_{i\le t} \lVert\pi_{i,n}-\pi_{i,0}\rVert_{2,\P_0}=o_p(1)$.
			It follows that $\nabla_\theta R_{n,3}(\theta,\alpha)=o_p(n^{-1/2})$ under empirical process theory \citep[Lemma 19.24]{van_der_vaart_asymptotic_1998}.
			
			\item[$\P_n\{\nabla_\theta \Dcar(\mu_0^{\theta,\alpha},\pi_n;\theta,\alpha)\}$] is the bias term, which is once again controlled by undersmoothing.
			Then, as in the proof of \cref{corollary:dcar-bounded-uniform}, the empirical $D_\mathrm{CAR}$ remainder at time \(t\) can be written as
			\begin{equation*}
				\Dcart(\mu_0^{\theta,\alpha},\pi_n;\theta,\alpha)=
				f_{t,n}^{\theta,\alpha}(H_t)
				\{A_t-\pi_{t,n}(1\mid H_t)\}\mathcomma    
			\end{equation*}
			where
			\begin{equation*}
				f_{t,n}^{\theta,\alpha}(H_t) =\Bigl(\prod_{i<t} W_{i,n}^{\theta,\alpha}\Bigr)
				(t/T)^\alpha \Biggl\{ \frac{\mu_{t,0}^{\theta,\alpha}(1,H_t)q_t^\theta(1,S_t)}{\pi_{t,n}(1\mid H_t)} -
				\frac{\mu_{t,0}^{\theta,\alpha}(0,H_t)q_t^\theta(0,S_t)}{\pi_{t,n}(0\mid H_t)
				}\Biggr\}\mathperiod    
			\end{equation*}
			Since \(\pi_{t,n}\) is estimated independently of \(\theta\), differentiation
			with respect to \(\theta\) only acts on the weights, the target policy, and the
			forward regressions.  
			Hence, letting $g_{t,n}^{\theta,\alpha}(H_t) \coloneq \nabla_\theta f_{t,n}^{\theta,\alpha}(H_t)$, we have
			\begin{equation*}
				\nabla_\theta \Dcart(\mu_0^{\theta,\alpha},\pi_n;\theta,\alpha)
				=
				g_{t,n}^{\theta,\alpha}(H_t)
				\{A_t-\pi_{t,n}(1\mid H_t)\}\mathperiod        
			\end{equation*}
			Deriving, we obtain
			\begin{align*}
				&g_{t,n}^{\theta,\alpha}(H_t) \\
				&=
				\Bigl(\prod_{i<t} W_{i,n}^{\theta,\alpha}\Bigr)
				(t/T)^\alpha
				\Biggl[
				\Bigl(
				\sum_{j<t}
				\nabla_\theta \log W_{j,n}^{\theta,\alpha}
				\Bigr)
				\Biggl\{
				\frac{\mu_{t,0}^{\theta,\alpha}(1,H_t)q_t^\theta(1,S_t)}{\pi_{t,n}(1\mid H_t)}
				-
				\frac{\mu_{t,0}^{\theta,\alpha}(0,H_t)q_t^\theta(0,S_t)}{\pi_{t,n}(0\mid H_t)}
				\Biggr\}
				\\
				&\qquad\qquad
				+
				\frac{
					\{\nabla_\theta\mu_{t,0}^{\theta,\alpha}(1,H_t)\}
					q_t^\theta(1,S_t)
					+
					\mu_{t,0}^{\theta,\alpha}(1,H_t)
					\{\nabla_\theta q_t^\theta(1,S_t)\}
				}{
					\pi_{t,n}(1\mid H_t)
				}
				\\
				&\qquad\qquad
				-
				\frac{
					\{\nabla_\theta\mu_{t,0}^{\theta,\alpha}(0,H_t)\}
					q_t^\theta(0,S_t)
					+
					\mu_{t,0}^{\theta,\alpha}(0,H_t)
					\{\nabla_\theta q_t^\theta(0,S_t)\}
				}{
					\pi_{t,n}(0\mid H_t)
				}
				\Biggr]\mathcomma
			\end{align*}
			and
			\begin{equation*}
				\nabla_\theta\log W_{j,n}^{\theta,\alpha}
				= \frac{(j/T)^\alpha \nabla_\theta q_j^\theta(A_j,S_j)}{(j/T)^\alpha q_j^\theta(A_j,S_j)+\{1-(j/T)^\alpha\}\pi_{j,n}(A_j\mid H_j)}\mathperiod    
			\end{equation*}
			
			Let \(g_{t,0}^{\theta,\alpha}\) denote the population version obtained by replacing \(\pi_{\cdot,n}\) and \(W_{\cdot,n}^{\theta,\alpha}\) by \(\pi_{\cdot,0}\) and \(W_{\cdot,0}^{\theta,\alpha}\), respectively. 
			By \cref{assumption:hal-space-deriv}, \(g_{t,0}^{\theta,\alpha}\) is approximated by the active HAL basis to order \(O_p(n^{-1/4})\) in \(L^2(\P_0)\): $\lVert g_{t,0}^{\theta,\alpha} - \tilde g_{t,0}^{\theta,\alpha}\lVert_{2,\P_0}=O_p(n^{-1/4})$. Moreover, by consistency of the HAL propensity score estimator, $\lVert g_{t,n}^{\theta,\alpha} - g_{t,0}^{\theta,\alpha} \rVert_{2,\P_0}=o_p(n^{-1/4})$.  
			By the triangle inequality, we obtain $\lVert g_{t,n}^{\theta,\alpha} - \tilde g_{t,n}^{\theta,\alpha}] \rVert_{2,\P_0} = O_p(n^{-1/4})$.
			Moreover, $g_{t,n}^{\theta,\alpha}$ term is finite product of c\`adl\`ag functions with finite total SVN (see \cref{assumption:q-smooth,assumption:cadlag}, \cref{lemma:mu-deriv-cadlag-svn}, and the proof of \cref{corollary:dcar-bounded-uniform}) and therefore is c\`adl\`ag with finite total SVN.
			Hence, conditions are met to apply \cref{lemma:d-bounded-uniform} at $(\theta,\alpha)$.
			
			Applying \cref{lemma:d-bounded-uniform} yields
			\begin{equation*}
				\P_n \nabla_\theta\Dcart(\mu_0^{\theta,\alpha},\pi_n;\theta,\alpha)=
				\P_n g_{t,n}^{\theta,\alpha}(H_t) \{A_t-\pi_{t,n}(1\mid H_t)\}
				= o_p(n^{-1/2})    
			\end{equation*}
			for each fixed $t$.  Since $T$ is fixed,
			\begin{equation*}
				\P_n\nabla_\theta\Dcar(\mu_0^{\theta,\alpha},\pi_n;\theta,\alpha)
				=\sum_{t=1}^T\P_n\nabla_\theta\Dcart(\mu_0^{\theta,\alpha},\pi_n;\theta,\alpha)
				=o_p(n^{-1/2})\mathperiod        
			\end{equation*}
		\end{description}
		Combining, we have $\nabla_\theta \psi_n(\theta;\alpha)-\nabla_\theta \psi_0(\theta;\alpha)= (\P_n-\P_0) \{\phi_{\nabla \psi}(O; \eta_0, \theta,\alpha)\}+o_p(n^{-1/2})$.
		This completes the proof.
	\end{proof}
	
	\begin{lemma}
		\label{lemma:hessian-est-properties}
		Suppose that \cref{assumption:identification:positivity,,assumption:bounded,,assumption:q-smooth} hold.
		Then,
		\begin{enumerate}[label=(\alph*),ref=\thelemma(\alph*)]
			\item The Hessian $H_0(\theta_0;\alpha_0)\coloneq\nabla_\theta^2\psi_0(\theta_0;\alpha_0)$ exists and is finite for any fixed $\theta_0,\alpha_0$.
			\item The Hessian is Lipschitz continuous in $\Theta$:
			$$\|\nabla_\theta^2\psi_\cdot(\theta_1;\alpha)-\nabla_\theta^2\psi_\cdot(\theta_2;\alpha)\|\le L\|\theta_1-\theta_2\|\mathcomma$$ for some $L<\infty$.
			\item The Hessian of $\psi_n$ converges uniformly in $\Theta$:
			$$\sup_{\theta\in\Theta}\|\nabla_\theta^2 \psi_n(\theta;\alpha)-\nabla_\theta^2\psi_0(\theta;\alpha)\|=o_p(1)\mathperiod$$
		\end{enumerate}
	\end{lemma}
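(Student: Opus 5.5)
The plan is to work directly with the explicit second derivative of the product weight. Each factor $W_{t,\cdot}^{\theta,\alpha}=1+(t/T)^\alpha\{q_t^\theta/\pi_{t,\cdot}-1\}$, for $\cdot\in\{0,n\}$, is affine in $q_t^\theta$. Writing $W^{\theta,\alpha}_\cdot=\prod_t W^{\theta,\alpha}_{t,\cdot}$ and applying the product rule twice gives
\begin{align*}
\nabla^2_\theta W^{\theta,\alpha}_\cdot
&=\sum_{t=1}^T\Bigl(\prod_{j\neq t}W^{\theta,\alpha}_{j,\cdot}\Bigr)\frac{(t/T)^\alpha}{\pi_{t,\cdot}}\nabla^2_\theta q_t^\theta \\
&\quad+\sum_{s\neq t}\Bigl(\prod_{j\notin\{s,t\}}W^{\theta,\alpha}_{j,\cdot}\Bigr)\frac{(s/T)^\alpha(t/T)^\alpha}{\pi_{s,\cdot}\pi_{t,\cdot}}\nabla_\theta q_s^\theta\,(\nabla_\theta q_t^\theta)\trans .
\end{align*}
Since $(t/T)^\alpha\le1$, the quantities $q_t^\theta$, $\nabla_\theta q_t^\theta$ and $\nabla^2_\theta q_t^\theta$ are bounded by \cref{assumption:q-smooth}, $1/\pi_{t,0}\le\epsilon_\pi^{-1}$ by \cref{assumption:identification:positivity}, and $Y$ is bounded by \cref{assumption:bounded}, this yields a deterministic envelope for $\nabla^2_\theta(W_0^{\theta,\alpha}Y)$ that is uniform in $\theta$.

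\textbf{Part (a).} With the envelope in hand, dominated convergence applied twice (exactly as in the first part of \cref{lemma:deriv-gradient}) lets differentiation pass under $\P_0$. This gives $\nabla^2_\theta\psi_0(\theta;\alpha)=\P_0\{\nabla^2_\theta W_0^{\theta,\alpha}Y\}$, which is finite.

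\textbf{Part (b).} Differentiating the display once more produces terms involving $\nabla^3_\theta q_t^\theta$, together with products of lower-order derivatives, all divided by up to three propensities. These are bounded using $C_1,C_2,C_3$. Hence $\sup_\theta\|\nabla^3_\theta(W^{\theta,\alpha}_\cdot Y)\|\le L$ almost surely, and the mean value theorem gives the Lipschitz bound pointwise in $O$. Integrating against $\P_0$ covers $\psi_0$; integrating against $\P_n$ covers $\psi_n$. For the $\psi_n$ case we need $\pi_{t,n}$ bounded away from zero with probability tending to one. I would obtain this from sup-norm consistency of the HAL fit, which holds because the fit has bounded sectional variation, combined with strong positivity; alternatively one can truncate $\pi_{t,n}$ at $\epsilon_\pi/2$ at no asymptotic cost.

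\textbf{Part (c).} I decompose
\[
\nabla^2\psi_n-\nabla^2\psi_0=(\P_n-\P_0)\{\nabla^2_\theta W_0^{\theta,\alpha}Y\}+\P_n\{(\nabla^2_\theta W_n^{\theta,\alpha}-\nabla^2_\theta W_0^{\theta,\alpha})Y\}.
\]
For the first term, the class $\{\nabla^2_\theta W_0^{\theta,\alpha}Y:\theta\in\Theta\}$ is Lipschitz in the finite-dimensional parameter $\theta$, with bounded Lipschitz constant $L$ from part (b), over the compact set $\Theta$. Its bracketing numbers are therefore polynomial, so the class is Glivenko--Cantelli, and the first term is $o_p(1)$ uniformly in $\theta$.

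For the second term, the telescoping identity already used in \cref{lemma:r1-bounded-uniform}, applied to each summand of the Hessian display, bounds the integrand by
\[
C\sum_{t=1}^T|\pi_{t,n}-\pi_{t,0}|(A_t,H_t),
\]
with $C$ free of $\theta$. This reuses the envelope bounds and the lower bound on $\pi_{t,n}$. It then remains to show $\P_n|\pi_{t,n}-\pi_{t,0}|=o_p(1)$. I would split this as $\P_0|\pi_{t,n}-\pi_{t,0}|\le\|\pi_{t,n}-\pi_{t,0}\|_{2,\P_0}=o_p(1)$, using the HAL rate, plus $(\P_n-\P_0)|\pi_{t,n}-\pi_{t,0}|$. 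The latter is $o_p(1)$ because these differences range over a fixed class of bounded sectional variation, which is Glivenko--Cantelli.

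The main obstacle is the step in part (c) that requires uniform positivity of $\pi_{t,n}$. Without it, neither the envelope for $\nabla^2_\theta W_n^{\theta,\alpha}$ nor the telescoping bound is controlled. I would first try to derive it from $\|\pi_{t,n}-\pi_{t,0}\|_\infty=o_p(1)$. If that is not available from the stated HAL properties, I would instead argue on the event $\{\min_t\inf\pi_{t,n}>\epsilon_\pi/2\}$, whose probability tends to one. Everything else is routine bookkeeping of the product rule.
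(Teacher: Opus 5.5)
Your proposal is correct. Parts (a) and (b) follow the paper's proof essentially verbatim: product-rule derivatives of $W^{\theta,\alpha}_\cdot$, a $\theta$-uniform envelope from strong positivity and \cref{assumption:q-smooth}, dominated convergence, and a bounded third derivative combined with the mean value theorem.

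Part (c) is organized differently. The paper splits the error into three pieces:
\begin{itemize}
\item $(\P_n-\P_0)H(\theta,\pi_0)$, made Glivenko--Cantelli by asserting that the class inherits bounded sectional variation norm;
\item $(\P_n-\P_0)\{H(\theta,\pi_n)-H(\theta,\pi_0)\}$, handled by a Donsker argument and Lemma~19.24 of van der Vaart;
\item $\P_0\{H(\theta,\pi_n)-H(\theta,\pi_0)\}$, handled by a Lipschitz-in-$\pi$ envelope and Cauchy--Schwarz.
\end{itemize}

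You instead merge the last two into $\P_n\{(\nabla^2_\theta W_n^{\theta,\alpha}-\nabla^2_\theta W_0^{\theta,\alpha})Y\}$. You then dominate its integrand by the $\theta$-free quantity $C\sum_t\lvert\pi_{t,n}-\pi_{t,0}\rvert$. This makes the supremum over $\Theta$ automatic, and you need Glivenko--Cantelli only for one class that is not indexed by $\theta$. By contrast, the paper's Lemma~19.24 step is stated for fixed $\theta$ and would need its uniform maximal-inequality lemma to deliver the supremum.

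Similarly, your parametric Lipschitz-bracketing argument for the leading term uses only the derivative bounds in \cref{assumption:q-smooth:derivs}. The paper's bounded-variation claim would instead require variation control of $s\mapsto\nabla^2_\theta q_t^\theta(a,s)$, which \cref{assumption:q-smooth} does not state. What the paper's route buys is a sharper $o_p(n^{-1/2})$ rate for the middle piece, which this lemma does not need.

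Both routes lean on the same unstated HAL properties: $\pi_{t,n}$ stays in a fixed bounded-variation class and is bounded below. The paper simply takes $x,y\in[\epsilon_\pi/2,1]$.

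On the lower bound, your first justification does not hold as stated. Bounded sectional variation plus $L_2(\P_0)$ consistency does not give sup-norm consistency: a narrow indicator spike has bounded variation and vanishing $L_2$ norm. Your ``event with probability tending to one'' fallback presupposes the same sup-norm control. Truncating $\pi_{t,n}$ at $\epsilon_\pi/2$, or assuming the bound outright, is the sound way to secure it.
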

	
	\begin{proof}[Proof of \cref{lemma:hessian-est-properties}]
		First, recall from \cref{lemma:deriv-gradient} that $\nabla_\theta\psi_0(\theta;\alpha)=\P_0\{W_0^{\theta,\alpha}Yf^\theta(O)\}$ where $f^\theta(O)$ is a smooth function of $\{q_t^\theta,\nabla_\theta q_t^\theta\}_{t=1}^T$.
		Differentiating once more in $\theta$ introduces terms involving $\nabla_\theta^2 q_t^\theta$ and $\nabla_\theta q_t^\theta (\nabla_\theta q_j^\theta)\trans$
		All such terms are c\`adl\`ag with finite total SVN under \cref{assumption:q-smooth}, and the Dominated Convergence Theorem justifies $\nabla_\theta^2\psi_0(\theta;\alpha)=\P_0\{\nabla_\theta^2 W_0^{\theta,\alpha}Y\}$.
		
		We proceed with establishing a Lipschitz condition by bounding the third derivative and applying the Mean Value Theorem to the Hessian.
		Under \cref{assumption:q-smooth} we have
		$$\nabla_\theta^k\psi_\cdot(\theta;\alpha)=\P_\cdot(Y \nabla_\theta^k W_\cdot^{\theta,\alpha})\mathcomma$$
		for any $k\in\{1,2,3\}$ and $\cdot\in\{0,n\}$ by the Dominated Convergence Theorem.
		Recall that $W_\cdot^{\theta,\alpha}=\prod_{t=1}^T W_{t,\cdot}^{\theta,\alpha}$ for $W_{t,\cdot}^{\theta,\alpha}=1+(t/T)^\alpha\{q_t^\theta(A_t,S_t)\pi_{t,\cdot}^{-1}(A_t\mid H_t)-1\}$.
		Under \cref{assumption:q-smooth,assumption:identification:positivity}, each weight and its inverse $(W_{t,\cdot}^{\theta,\alpha})^{-1}$ are uniformly bounded over $\Theta$.
		Then, as $\pi_{t,\cdot}$ does not depend on $\theta$, we have
		\begin{align*}
			\nabla_\theta^k W_{t,\cdot}^{\theta,\alpha} &= (t/T)^\alpha \frac{\nabla_\theta^k q_t^\theta(A_t,S_t)}{\pi_{t,\cdot}(A_t\mid H_t)}\mathperiod
		\end{align*}
		Moreover, \cref{assumption:q-smooth} gives $\sup_{\theta\in\Theta}\|\nabla_\theta^k q_t^\theta(A_t,S_t)\| < \infty$ almost surely for $k\in\{1,2,3\}$.
		Hence, each derivative $\nabla_\theta^k W_{t,\cdot}^{\theta,\alpha}$ is uniformly bounded over $\Theta$ for $k\in\{1,2,3\}$.
		Repeated application of the product rule for derivatives shows that $\nabla_\theta W_\cdot^{\theta,\alpha}$ is a finite sum of finite products of
		$$\frac{\nabla_\theta W_{r,\cdot}^{\theta,\alpha}}{W_{r,\cdot}^{\theta,\alpha}},\quad\frac{\nabla^2_\theta W_{s,\cdot}^{\theta,\alpha}}{W_{s,\cdot}^{\theta,\alpha}},\quad\text{and}\quad\frac{\nabla^3_\theta W_{t,\cdot}^{\theta,\alpha}}{W_{t,\cdot}^{\theta,\alpha}}\mathcomma$$
		multiplied by $W_\cdot^{\theta,\alpha}$.
		Each of these components is uniformly bounded over $\Theta$; hence
		$$\sup_{\theta\in\Theta} \|\nabla_\theta^3 W_\cdot^{\theta,\alpha} \|<\infty$$
		almost surely.
		Then, under \cref{assumption:bounded}, $\|Y\|_\infty<\infty$, and hence $$\sup_{\theta\in\Theta}\|\nabla_\theta^3\psi_\cdot(\theta;\alpha)\| \le \|Y\|_\infty\sup_{\theta\in\Theta} \|\nabla_\theta^3 W_\cdot^{\theta,\alpha} \|<\infty$$
		almost surely.
		Finally, by the Mean Value Theorem. for any $\theta_1,\theta_2\in\Theta$,
		$$\|\nabla_\theta^2\psi_\cdot(\theta_1;\alpha)-\nabla_\theta^2\psi_\cdot(\theta_2;\alpha)\|\le \sup_{\theta\in\Theta}\|\nabla_\theta^3 \psi_\cdot(\theta;\alpha)\|\, \|\theta_1-\theta_2\|\mathperiod$$
		Since the third derivative is uniformly bounded, it follows that the Hessian is globally Lipschitz continuous.
		
		Finally, we prove uniform convergence.
		We denote $\nabla_\theta^2\psi_n(\theta;\alpha)=\P_n\{H(O;\theta,\alpha,\pi_n)\}$ and $\nabla_\theta^2\psi_0(\theta;\alpha)=\P_0\{H(O;\theta,\alpha,\pi_0)\}$.
		For brevity, let $H(\theta;\pi)\coloneq H(O;\theta,\alpha,\pi)$.
		Consider the decomposition:
		\begin{multline*}
			\nabla_\theta^2\psi_n(\theta;\alpha)-\nabla_\theta^2\psi_0(\theta;\alpha)=(\P_n-\P_0)H(\theta,\pi_0)  +(\P_n-\P_0)\big\{H(\theta,\pi_n)-H(\theta,\pi_0)\big\}\\ + \P_0\big\{H(\theta,\pi_n)-H(\theta,\pi_0) \big\}\mathperiod
		\end{multline*}
		We proceed by controlling the three terms separately.
		\begin{itemize}
			\item 
			The first term is an empirical process term.
			Define the deterministic class
			$$\mathcal H_0 \coloneq \big\{H(\cdot;\theta,\alpha,\pi_0):\theta\in\Theta \big\}\mathperiod$$
			We claim that $\mathcal H_0$ is Glivenko-Cantelli with an integrable envelope.
			The integrable envelope exists following similar logic used to bound 
			$\nabla_\theta^3\psi_\cdot(\theta,\alpha)$ previously in this proof.
			Moreover, under \cref{assumption:q-smooth}, $\mathcal H_0$ inherits bounded sectional variational norm and therefore is Glivenko-Cantelli under \cref{assumption:bounded}.
			Hence,
			$$\sup_{\theta\in\Theta}\|(\P_n-\P_0)H(\cdot;\theta,\alpha,\pi_0)\|=o_p(1)\mathperiod$$
			\item 
			The second term also an empirical process term. We have that $H(O;\theta,\pi_0)$ is Donsker and $\lVert H(O;\theta,\pi_n)-H(O;\theta,\pi)\rVert_{\P_0,2} = o_p(1)$.
			It follows that $(\P_n-\P_0)\{H(\theta,\pi_n)-H(\theta,\pi)\}=o_p(n^{-1/2})=o_p(1)$ by \citet[Lemma~19.24]{van_der_vaart_asymptotic_1998}.
			\item 
			The final term represents the plug-in bias. 
			First, note that for any $x,y\in[\epsilon_\pi/2,1]$,
			\begin{equation*}
				\lvert x^{-1}-y^{-1}\rvert = \frac{\lvert x-y\rvert}{xy} \le \frac4{\epsilon_\pi^2}\lvert x-y\rvert\mathperiod
			\end{equation*}
			Because $\pi$ enters $H(O;\theta,\alpha,\pi)$ only through a finite number of factors $\pi(\cdot\mid H_t)^{-1}$ inside the weights and their $\theta$-derivatives, repeated application of the previous inequality and the product rule yields the existence of an integrable envelope $G$ such that
			$$\sup_{\theta\in\Theta}\lVert H(O;\theta,\alpha,\pi_n)-H(O;\theta,\alpha,\pi_0) \rVert\le G(O) \max_{t\le T} \lvert \pi_{t,n}(1\mid H_t)-\pi_{t,0}(1\mid H_t) \rvert\mathperiod$$
			That is, that $H$ is Lipschitz in $\pi$.
			Then, by Cauchy-Schwawrz, we have
			$$\sup_{\theta\in\Theta}  \lVert \P_0\left\{H(\theta;\pi_n)-H(\theta;\pi_0) \right\} \rVert \le \lVert G \rVert_{2,\P_0} \max_{t\le T}\lVert \pi_{t,n}-\pi_{t,0}\rVert_{2,\P_0}\mathperiod$$
			HAL consistency gives $\lVert\pi_{t,n}-\pi_{t,0} \rVert_{2,\P_0}\to0$ for each $t$; hence, the term is $o_p(1)$.
		\end{itemize}
		Combining the three terms, we have $\sup_{\theta\in\Theta}\|\nabla_\theta^2\psi_n(\theta;\alpha)-\nabla_\theta^2\psi_0(\theta;\alpha)\| = o_p(1)\mathperiod$
	\end{proof}

	\subsection{Theorem 2}
	
	\begin{proof}[Proof of \cref{theorem:opt-theta-asym}]
		Recall that $V_\cdot(\theta;\alpha)=\psi_\cdot(\theta;\alpha)-\mathcal P(\theta)$ where $\cdot\in\{0,n\}$, and that $\theta_0^\star(\alpha)$ is the unique interior maximizer of $V_0(\cdot;\alpha)$, which exists by \cref{assumption:v0-regularity}.
		
		By definition $\theta_n^\star(\alpha)$ satisfies the first order condition $\nabla_\theta V_n\{\theta_n^\star(\alpha);\alpha\}=0$ and likewise $\nabla_\theta V_0\{\theta_0^\star(\alpha);\alpha\}=0$. 
		Moreover, by \cref{lemma:deriv-gradient}, $\theta\mapsto \nabla_\theta V_0(\theta;\alpha)$ is continuously differentiable in a neighborhood of $\theta_0^\star(\alpha)$ with Hessian $H_0(\alpha)=\nabla_\theta^2 V_0\{\theta_0^\star(\alpha);\alpha)$ that is negative definite under \cref{assumption:v0-regularity}. 
		
		By a first order Taylor expansion to $\nabla_\theta V_n(\cdot;\alpha)$ about $\theta_0^\star(\alpha)$, we obtain
		\begin{equation*}
			0 = \nabla_\theta V_n\{\theta_n^\star(\alpha);\alpha\} = \nabla_\theta V_n\{\theta_0^\star(\alpha);\alpha\} + H_0(\alpha)\{\theta_n^\star(\alpha)-\theta_0^\star(\alpha)\} + R_n\mathcomma
		\end{equation*}
		where
		\begin{equation*}
			R_n = \big[\nabla_\theta^2 V_n(\tilde\theta_n;\alpha)- \nabla_\theta^2 V_0\{\theta_0^\star(\alpha);\alpha\} \big]\{\theta_n^\star(\alpha)-\theta_0^\star(\alpha) \}\mathcomma
		\end{equation*}
		and $\tilde\theta_n$ lies on the line segment between $\theta_n^\star(\alpha)$ and $\theta_0^\star(\alpha)$.
		Note that $\nabla_\theta V_n(\theta;\alpha)-\nabla_\theta V_0(\theta;\alpha)=\nabla_\theta\psi_n(\theta;\alpha)-\nabla_\theta\psi_0(\theta;\alpha)$.
		Evaluating the previous equality at $\theta_0^\star(\alpha)$ we obtain
		\begin{equation*}
			\nabla_\theta V_n\{\theta_0^\star(\alpha);\alpha\}= \nabla_\theta \psi_n\{\theta_0^\star(\alpha);\alpha\} - \nabla_\theta \psi_0\{\theta_0^\star(\alpha);\alpha\}\mathperiod
		\end{equation*}
		Substituting into the Taylor expansion and rearranging we obtain
		\begin{equation}
			\label{eqn:theta-n-taylor}
			\theta_n^\star(\alpha)-\theta_0^\star(\alpha) = -H_0^{-1}(\alpha)\big[\nabla_\theta \psi_n\{\theta_0^\star(\alpha);\alpha\} - \nabla_\theta \psi_0\{\theta_0^\star(\alpha);\alpha\} \big] - H_0(\alpha)^{-1}R_n\mathperiod
		\end{equation}
		Under \cref{assumption:identification:positivity,assumption:bounded,assumption:q-smooth,assumption:cadlag,assumption:hal-space,assumption:hal-space-deriv} and sufficient undersmoothing (\cref{eqn:hal-undersmooth-condition}), the leading term satisfies
		$$\nabla_\theta \psi_n\{\theta_0^\star(\alpha);\alpha\} - \nabla_\theta \psi_0\{\theta_0^\star(\alpha);\alpha\} = (\P_n-\P_0)\big[\phi_{\nabla \psi}\{O;\eta_0,\theta_0^\star(\alpha),\alpha\}\big]+ o_p(n^{-1/2})\mathperiod $$
		by \cref{lemma:psi-deriv-asymp}.
		Substituting into \cref{eqn:theta-n-taylor} we obtain
		\begin{equation}
			\label{eqn:theta-n-taylor-2}
			\theta_n^\star(\alpha)-\theta_0^\star(\alpha) 
			= (\P_n-\P_0)\big[{-H_0}(\alpha)^{-1}\phi_{\nabla \psi}\{O;\eta_0,\theta_0^\star(\alpha),\alpha\}\big] - H_0(\alpha)^{-1}R_n + o_p(n^{-1/2})\mathperiod
		\end{equation}
		
		It remains to show that $R_n = o_p(n^{-1/2})$.
		We decompose
		\begin{align*}
			\nabla_\theta^2 V_n(\tilde\theta_n;\alpha)- \nabla_\theta^2 V_0\{\theta_0^\star(\alpha);\alpha\} &= \big[\nabla_\theta^2 V_n(\tilde\theta_n;\alpha)- \nabla_\theta^2 V_0(\tilde\theta_n;\alpha)\big]  \\
			&\quad\, + \big[\nabla_\theta^2 V_0(\tilde\theta_n;\alpha)- \nabla_\theta^2 V_0\{\theta_0^\star(\alpha);\alpha\}\big] 
		\end{align*}
		The first term concerns the convergence of the Hessian of the value function
		We note that $\nabla_\theta^2 V_n(\theta;\alpha)-\nabla_\theta^2V_0(\theta;\alpha)=\nabla_\theta^2 \psi_n(\theta;\alpha)-\nabla_\theta^2\psi_0(\theta;\alpha)$ and that the latter term is controlled by \cref{lemma:hessian-est-properties}.
		Hence,
		\begin{equation*}
			\sup_{\theta\in \Theta} \|\nabla_\theta^2 V_n(\theta;\alpha)-\nabla_\theta^2V_0(\theta;\alpha) \|=\sup_{\theta\in \Theta} \|\nabla_\theta^2 \psi_n(\theta;\alpha)-\nabla_\theta^2\psi_0(\theta;\alpha) \|= o_p(1)\mathcomma
		\end{equation*}
		and
		\begin{equation*}
			\|\nabla_\theta^2 V_n(\tilde\theta_n;\alpha)- \nabla_\theta^2 V_0(\tilde\theta_n;\alpha)\|=o_p(1)\mathperiod
		\end{equation*}
		The second term is controlled by the Lipschitz condition of \cref{lemma:hessian-est-properties} under \cref{assumption:identification:positivity,assumption:q-smooth,eqn:q-bound}:
		\begin{equation*}
			\|\nabla_\theta^2 V_0(\tilde\theta_n;\alpha)- \nabla_\theta^2 V_0\{\theta_0^\star(\alpha);\alpha\}\| \lesssim \|\tilde\theta_n-\theta_0^\star(\alpha)\|=o_p(1)\mathperiod%
		\end{equation*}
		Thus,
		$$\lVert R_n \rVert \le o_p(1) \lVert \theta_n^\star(\alpha)-\theta_0^\star(\alpha) \rVert\mathperiod$$
		Returning to \cref{eqn:theta-n-taylor-2} and taking norms, we have
		\begin{align*}
			\lVert\theta_n^\star(\alpha)-\theta_0^\star(\alpha) \rVert &\le \lVert H_0(\alpha)^{-1} \rVert \lVert (\P_n-\P_0) \phi_{\nabla \psi}\{O;\eta_0,\theta_0^\star(\alpha),\alpha\}\rVert + \lVert H_0(\alpha)^{-1} \rVert \lVert R_n\rVert \\
			&= O_p(n^{-1/2})+o_p(1) \lVert \theta_n^\star(\alpha)-\theta_0^\star(\alpha) \rVert\mathperiod
		\end{align*}
		Hence, 
		$$\lVert\theta_n^\star(\alpha)-\theta_0^\star(\alpha) \rVert=O_p(n^{-1/2})\mathperiod$$
		Combining $\lVert R_n\rVert \le o_p(1)\lVert\theta_n^\star(\alpha)-\theta_0^\star(\alpha) \rVert$ with the previous bound gives $\lVert R_n\rVert = o_p(n^{-1/2})$.
		Finally, substituting into \cref{eqn:theta-n-taylor-2} we obtain
		\begin{equation*}
			\theta_n^\star(\alpha)-\theta_0^\star(\alpha) 
			= (\P_n-\P_0)\big[{-H_0}(\alpha)^{-1}\phi_{\nabla \psi}\{O;\eta_0,\theta_0^\star(\alpha),\alpha\}\big]+ o_p(n^{-1/2})\mathcomma
		\end{equation*}
		establishing asymptotic linearity.
		Asymptotic normality follows immediately by the Central Limit Theorem with covariance
		$$H_0^{-1}(\alpha)\P_0\big[\phi_{\nabla \psi}^{\otimes2}\{O;\eta_0,\theta_0^\star(\alpha),\alpha\}\big]H_0^{-1}(\alpha)\mathperiod$$
	\end{proof}
	
	\subsection{Corollary 2}
	
	Before proving \cref{corollary:opt-theta-asym-alpha} we provide some auxiliary results analogous to \cref{lemma:uipw-asymp-uniform-alpha}.
	First, we require a stronger uniform version of \cref{assumption:v0-regularity}.
	\begin{assumption}\label{assumption:v0-regularity-uniform}
		Let $\alpha_\epsilon>0$. Suppose that
		\begin{enumerate}[label=(\alph*),ref=\thelemma(\alph*)]
			\item For all $\alpha\in[0,\alpha_\epsilon]$, $\theta_0^\star(\alpha)$ exists uniquely in the interior.
			\item There exists a $m>0$ such that for all $\alpha\in[0,\alpha_\epsilon]$ and all $\theta$ on the line segment $\mathcal L(\alpha)\coloneq \{(1-t)\theta_0^\star(0) + t\theta_0^\star(\alpha):t\in[0,1]\}$,
			$$-\nabla_\theta^2 V_0(\theta;\alpha) \succeq mI\mathperiod$$
			Equivalently, $\sup_{\alpha\in[0,\alpha_\epsilon]}\sup_{\theta\in \mathcal L(\alpha)} \lVert \{\nabla_\theta^2 V_0(\theta;\alpha)\}^{-1}\rVert\le m^{-1}$.
		\end{enumerate}
	\end{assumption}
	
	\begin{lemma}
		\label{lemma:opt-theta-uniform-alpha}
		Fix $\alpha_\epsilon>0$.
		Suppose that
		\cref{assumption:identification:positivity,,assumption:bounded,,assumption:q-smooth,,assumption:v0-regularity-uniform} hold,
		that $\pi_{t,n}$ is sufficiently undersmoothed such that \cref{eqn:hal-undersmooth-condition} holds, and that
		\cref{assumption:cadlag,,assumption:hal-space,,assumption:hal-space-deriv} hold in a neighborhood of $\theta_0^\star(\alpha)$ uniformly over $\alpha\in[0,\alpha_\epsilon]$.
		\begin{enumerate}[label=(\alph*),ref=\thelemma(\alph*)]
			\item  \textbf{Uniform remainder control.} Writing the expansion from \cref{theorem:opt-theta-asym},
			\begin{equation*}
				\theta_n^\star(\alpha)-\theta_0^\star(\alpha) = (\P_n-\P_0)\left[ -H_0^{-1}(\alpha)\phi_{\nabla \psi}\{O;\eta_0,\eta_0',\theta_0^\star(\alpha),\alpha\}\right] + R_n(\alpha)\mathcomma
			\end{equation*}
			we have
			$$\sup_{\alpha\in[0,\alpha_\epsilon]} \lvert R_n(\alpha) \rvert= o_p(n^{-1/2})\mathperiod$$
			\item \textbf{Stochastic equicontinuity at $\alpha=0$.} 
			For any $\alpha_n=o_p(1)$, 
			\begin{equation*}
				(\P_n-\P_0) \big[ H_0(\alpha_n)^{-1}\phi_{\nabla\psi}\{\cdot;\theta_0^\star(\alpha_n),\alpha_n \} - H_0(0)^{-1}\phi_{\nabla\psi}\{\cdot;\theta_0^\star(0),0 \} \big]=o_p(n^{-1/2})\mathperiod
			\end{equation*}
			\item \textbf{Uniform drift bound.} For any $\alpha\in[0,\alpha_\epsilon]$, $\lVert\theta_0^\star(\alpha)-\theta_0^\star(0)\rVert\le C\alpha$ for some finite constant $C$.
		\end{enumerate}
	\end{lemma}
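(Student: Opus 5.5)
I will prove the three parts in the order (c), (a), (b), since the drift bound for $\theta_0^\star(\alpha)$ feeds the equicontinuity argument.

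\textbf{Uniform drift bound (c).} I will use an implicit-function argument on the first-order condition. By \cref{assumption:v0-regularity-uniform}(a), $\nabla_\theta V_0\{\theta_0^\star(\alpha);\alpha\}=0=\nabla_\theta V_0\{\theta_0^\star(0);0\}$. Hence
\[
0=\bigl[\nabla_\theta V_0\{\theta_0^\star(\alpha);\alpha\}-\nabla_\theta V_0\{\theta_0^\star(0);\alpha\}\bigr]+\bigl[\nabla_\theta\psi_0\{\theta_0^\star(0);\alpha\}-\nabla_\theta\psi_0\{\theta_0^\star(0);0\}\bigr].
\]
The penalty does not depend on $\alpha$, so it cancels in the second bracket. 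The first bracket equals $\bar H(\alpha)\{\theta_0^\star(\alpha)-\theta_0^\star(0)\}$, where $\bar H(\alpha)=\int_0^1\nabla^2_\theta V_0\{(1-s)\theta_0^\star(0)+s\theta_0^\star(\alpha);\alpha\}\,ds$ averages the Hessian along the segment $\mathcal L(\alpha)$. By \cref{assumption:v0-regularity-uniform}(b), $-\bar H(\alpha)\succeq mI$, so $\|\bar H(\alpha)^{-1}\|\le m^{-1}$. For the second bracket I will repeat the Lipschitz-in-$\alpha$ argument of \cref{lemma:uipw-asymp-uniform-alpha:drift-bound}, now applied to the gradient representation in \cref{lemma:deriv-gradient}. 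That representation involves $W_0^{\theta,\alpha}$, $(t/T)^\alpha$, and $\nabla_\theta q_t^\theta$. All of these are bounded by \cref{assumption:identification:positivity,assumption:q-smooth}, and each is Lipschitz in $\alpha$ with constant at most $\log T$. Together these give $\|\theta_0^\star(\alpha)-\theta_0^\star(0)\|\le m^{-1}C\alpha$.

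\textbf{Uniform remainder control (a).} I will rerun the proof of \cref{theorem:opt-theta-asym} with suprema over $\alpha\in[0,\alpha_\epsilon]$ in place of a fixed $\alpha$. The steps are as follows.
\begin{enumerate}
\item Establish uniform-in-$\alpha$ versions of \cref{lemma:psi-deriv-asymp} and \cref{lemma:hessian-est-properties}, as in part (a) of \cref{lemma:uipw-asymp-uniform-alpha}. The parameter $\alpha$ enters only through multipliers in $[(1/T)^{\alpha_\epsilon},1]$, so the envelope, Donsker and undersmoothing arguments (including \cref{lemma:d-bounded-uniform}, with the index set enlarged to $(\theta,\alpha)$) go through with uniform constants.
\item Obtain uniform consistency, $\sup_\alpha\|\theta_n^\star(\alpha)-\theta_0^\star(\alpha)\|=o_p(1)$. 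I will derive it from $\sup_{\theta,\alpha}|V_n-V_0|=o_p(1)$ together with a uniform well-separatedness of the maximizer, which follows from the uniform curvature in \cref{assumption:v0-regularity-uniform} and compactness of $\Theta$.
\item Apply the Taylor step. The Hessian bound gives $\|R_n(\alpha)\|\le o_p(1)\|\theta_n^\star(\alpha)-\theta_0^\star(\alpha)\|$ uniformly in $\alpha$. Solving this inequality as in the proof of \cref{theorem:opt-theta-asym} yields a uniform $O_p(n^{-1/2})$ rate, and then $\sup_\alpha|R_n(\alpha)|=o_p(n^{-1/2})$.
\end{enumerate}

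\textbf{Stochastic equicontinuity (b).} Consider the class
\[
\mathcal F_\delta=\bigl\{H_0(\alpha)^{-1}\phi_{\nabla\psi}\{\cdot;\theta_0^\star(\alpha),\alpha\}-H_0(0)^{-1}\phi_{\nabla\psi}\{\cdot;\theta_0^\star(0),0\}:\alpha\in[0,\delta]\bigr\}.
\]
Each member is bounded with finite sectional variation norm, by \cref{lemma:mu-deriv-cadlag-svn} and the arguments in \cref{lemma:psi-deriv-asymp}, so the class is Donsker. Its $L_2(\P_0)$ diameter is $\lesssim\delta$: $H_0(\alpha)^{-1}$ is Lipschitz in $\alpha$ by part (c) and the Lipschitz property of the Hessian, and $\phi_{\nabla\psi}$ is Lipschitz in $(\theta,\alpha)$, using the third-derivative bounds of \cref{assumption:q-smooth}. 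Applying \cref{corollary:maximal-inequality-shortcut} with $\alpha_n=o_p(1)$ then gives the claim.

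\textbf{Main obstacle.} The hardest step is the uniform-in-$\alpha$ control in (a), in particular uniform consistency of $\theta_n^\star(\alpha)$ and uniform Lipschitz continuity in $\alpha$ of the nuisance regressions $\nabla_\theta\mu_{t,0}^{\theta,\alpha}$ that appear in $\phi_{\nabla\psi}$. I will handle the latter by backward induction along the recursion used in \cref{lemma:mu-deriv-cadlag-svn}, differentiating in $\alpha$ at each step and bounding the result.
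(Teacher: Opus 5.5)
Your proposal is correct and follows essentially the same route as the paper. Part (c) uses the first-order conditions, a mean-value step along $\mathcal L(\alpha)$ with the uniform curvature bound, and a Lipschitz-in-$\alpha$ bound on $\nabla_\theta\psi_0$; part (a) reruns the Taylor argument of \cref{theorem:opt-theta-asym} with suprema over $\alpha$; and part (b) uses a Donsker class with $L_2(\P_0)$ diameter $\lesssim\delta$ together with \cref{corollary:maximal-inequality-shortcut}. Your integral-averaged Hessian $\bar H(\alpha)$ and explicit uniform-consistency step are more careful than the paper, which uses a single intermediate point and simply asserts $\sup_\alpha\|\tilde\theta_n(\alpha)-\theta_0^\star(\alpha)\|=o_p(1)$. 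Note, however, that the global part of the uniform well-separatedness comes from uniqueness in \cref{assumption:v0-regularity-uniform}(a), continuity, and compactness of $\Theta\times[0,\alpha_\epsilon]$, not from the curvature in (b), which only holds along $\mathcal L(\alpha)$.
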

	
	\begin{proof}[Proof of \cref{lemma:opt-theta-uniform-alpha}]
		We begin with claim (a) and note that this result largely follows the logic of \cref{theorem:opt-theta-asym}.
		Let $\Delta_n(\alpha)\coloneq\theta_n^\star(\alpha)-\theta_0^\star(\alpha)$.
		From \cref{eqn:theta-n-taylor} we have that
		$$\Delta_n(\alpha)=-H_0^{-1}(\alpha)\big[\nabla_\theta \psi_n\{\theta_0^\star(\alpha);\alpha\} - \nabla_\theta \psi_0\{\theta_0^\star(\alpha);\alpha\} \big] - H_0(\alpha)^{-1}R_n(\alpha)\mathcomma$$
		where $R_n(\alpha)=[\nabla_\theta^2 V_n\{\tilde\theta_n(\alpha);\alpha\}- \nabla_\theta^2 V_0\{\theta_0^\star(\alpha);\alpha\}]\{\theta_n^\star(\alpha)-\theta_0^\star(\alpha) \}$
		and $\tilde\theta_n(\alpha)$ lies on the line segment between $\theta_n^\star(\alpha)$ and $\theta_0^\star(\alpha)$.
		\cref{lemma:psi-deriv-asymp} gives a pointwise result for the leading term:
		$$ \nabla_\theta \psi_n\{\theta_0^\star(\alpha);\alpha\} - \nabla_\theta \psi_0\{\theta_0^\star(\alpha);\alpha\} = (\P_n-\P_0)\{\phi_{\nabla \psi}(O;\eta_0,\theta,\alpha)\} + o_p(n^{-1/2})\mathperiod$$
		Using similar logic to \cref{lemma:uipw-asymp-uniform-alpha} and the strengthened versions of \cref{assumption:cadlag,,assumption:hal-space,,assumption:hal-space-deriv}, which hold uniformly over $\alpha\in[0,\alpha_\epsilon]$ and in a neighborhood of $\theta_0^\star(\alpha)$, we claim that result can be taken uniformly over $\alpha\in[0,\alpha_\epsilon]$.
		Next, we decompose the Hessian difference in $R_n(\alpha)$ as  
		\begin{align*}
			\nabla_\theta^2 V_n\{\tilde\theta_n(\alpha);\alpha\}- \nabla_\theta^2 V_0\{\theta_0^\star(\alpha);\alpha\} &= 
			\big[\nabla_\theta^2 V_n\{\tilde\theta_n(\alpha)\;\alpha\}- \nabla_\theta^2 V_0\{\tilde\theta_n(\alpha);\alpha\}\big]  \\
			&\quad\, + \big[\nabla_\theta^2 V_0\{\tilde\theta_n(\alpha);\alpha\}- \nabla_\theta^2 V_0\{\theta_0^\star(\alpha);\alpha\}\big]\mathperiod
		\end{align*}
		\cref{lemma:hessian-est-properties}(c) applied uniformly over $\alpha\in[0,\alpha_\epsilon]$ implies that the first bracketed term is uniformly $o_p(1)$ in $\alpha$.
		Moreover, \cref{lemma:hessian-est-properties}(b) gives Lipschitz continuity of $\nabla_\theta^2V_0(\cdot;\alpha)$ in $\theta$ and $\sup_{\alpha\in[0,\alpha_\epsilon]}\lVert \tilde\theta_n^\star(\alpha)-\theta_0^\star(\alpha) \rVert=o_p(1)$; hence, the second term is uniformly $o_p(1)$ as well.
		Consequently,
		$$\sup_{\alpha\in[0,\alpha_\epsilon]} \lVert R_n(\alpha)\rVert \le o_p(1) \sup_{\alpha\in[0,\alpha_\epsilon]}\lVert \Delta_n(\alpha)\rVert\mathperiod$$
		Returning to \cref{eqn:theta-n-taylor-2} and taking norms we have
		\begin{align*}
			\sup_{\alpha\in[0,\alpha_\epsilon]}\lVert \Delta_n(\alpha)\rVert &\le \sup_{\alpha\in[0,\alpha_\epsilon]}\lVert H_0(\alpha)^{-1}\rVert O_p(n^{-1/2}) + o_p(1)    \sup_{\alpha\in[0,\alpha_\epsilon]}\lVert \Delta_n(\alpha)\rVert\mathperiod \\
			&= O_p(n^{-1/2}) + o_p(1)\sup_{\alpha\in[0,\alpha_\epsilon]}\lVert \Delta_n(\alpha)\rVert\mathcomma
		\end{align*}
		noting that $\lVert H_0(\alpha)^{-1}\rVert$ is uniformly bounded over $[0,\alpha_\epsilon]$ by \cref{assumption:v0-regularity-uniform}.
		By parallel logic to \cref{theorem:opt-theta-asym}, we obtain $\sup_{\alpha\in[0,\alpha_\epsilon]}\lVert \Delta_n(\alpha)\rVert=O_p(n^{-1/2})$ and $\sup_{\alpha\in[0,\alpha_\epsilon]}\lVert R_n(\alpha)\rVert=o_p(n^{-1/2})$, establishing claim (a).
		
		We continue with claim (b). 
		Let 
		$$g_\alpha(O)\coloneq H_0(\alpha)^{-1}\phi_{\nabla\psi}\{O;\eta_0,\eta_0',\theta_0^\star(\alpha),\alpha\}\mathperiod$$
		We show that for any $\alpha_n=o_p(1)$, $(\P_n-\P_0)\{g_{\alpha_n}-g_0\}=o_p(n^{-1/2})$.
		For any $\delta\in(0,\alpha_\epsilon]$, consider the function class
		$$\mathcal G_\delta \coloneq \{g_\alpha-g_0 : \alpha \in [0,\delta]\}\mathperiod$$
		By the same smoothness-in-$\alpha$ arguments used in \cref{lemma:uipw-asymp-uniform-alpha} together with uniform curvature/invertibility from \cref{assumption:v0-regularity-uniform}, the mapping $\alpha\mapsto\theta_0^\star(\alpha)$ is locally Lipschitz (see the proof of part (c)), $\alpha\mapsto H_0(\alpha)^{-1}$ is uniformly bounded and locally Lipschitz, and $(\theta,\alpha)\mapsto \phi_{\nabla\psi}(\cdot;\theta,\alpha)$ is Lipschitz in $L_2(\P_0)$ under moment/smoothness conditions invoked in \cref{theorem:uipw-asym} and \cref{lemma:uipw-asymp-uniform-alpha}.
		Consequently, there exists some finite $L$ such that $\lVert g_\alpha -g_{\alpha'}\rVert_{2,\P_0}\le L\lvert \alpha-\alpha'\rvert$ for all $\alpha,\alpha'\in[0,\alpha_\epsilon]$, and hence $\sup_{\alpha\in[0,\delta]}\lVert g_\alpha-g_0\rVert_{2,\P_0} \le L\delta$.
		Taking $\delta=\alpha_n=o_p(1)$, \cref{corollary:maximal-inequality-shortcut} yields
		$$\sup_{\alpha\in[0,\alpha_n]}\lVert (\P_n-\P_0)(g_\alpha-g_0)\rVert = o_p(n^{-1/2})\mathcomma$$
		and hence
		$$(\P_n-\P_0)(g_{\alpha_n}-g_0) = o_p(n^{-1/2})\mathperiod$$
		
		We conclude with claim (c). Under the first order conditions we have
		$$\nabla_\theta V_0\{\theta_0^\star(\alpha);\alpha\}-\nabla_\theta V_0\{\theta_0^\star(0);\alpha\}=-\big[\nabla_\theta V_0\{\theta_0^\star(0);\alpha\}-\nabla_\theta V_0\{\theta_0^\star(0);0\} \big]\mathperiod$$
		By the Mean Value Theorem, there exists some $\tilde\theta_\alpha\in\mathcal L(\alpha)$ (i.e., on the line segment between $\theta_0^\star(0)$ and $\theta_0^\star(\alpha)$) such that
		$$\nabla V_0\{\theta_0^\star(\alpha);\alpha\}-\nabla_\theta V_0\{\theta_0^\star(0);\alpha\}=\nabla^2_\theta V_0(\tilde\theta_\alpha;\alpha)\{\theta_0^\star(\alpha)-\theta_0^\star(0)\}\mathperiod$$
		Combining and taking norms,
		$$\lVert \nabla^2_\theta V_0(\tilde\theta_\alpha;\alpha)\{\theta_0^\star(\alpha)-\theta_0^\star(0)\} \rVert = \lVert \nabla_\theta V_0\{\theta_0^\star(0);\alpha\}-\nabla_\theta V_0\{\theta_0^\star(0);0\}\rVert\mathperiod$$
		Then, by \cref{assumption:v0-regularity-uniform}, the Hessian is negative definite with eigenvalues uniformly bounded away from $0$ and there exists some $m>0$ such that 
		$$\sup_{\alpha,\theta\in[0,\alpha_\epsilon]\times \mathcal N_r(\Theta_\star)}\lVert\{\nabla_\theta^2 V_0(\theta;\alpha)\}^{-1} \rVert\le m^{-1}\mathperiod$$
		It follows that
		$$\lVert \theta_0^\star(\alpha)-\theta_0^\star(0)\rVert \le m^{-1} \lVert \nabla_\theta V_0\{\theta_0^\star(0);\alpha\}-\nabla_\theta V_0\{\theta_0^\star(0);0\}\rVert\mathperiod$$
		We proceed by bounding the second term.
		Note that, as the penalty $\mathcal P(\theta)$ does not depend on $\alpha$, the $\alpha$-dependence of $V_0(\theta;\alpha)=\psi_0(\theta;\alpha)-\mathcal P(\theta)$ is inherited entirely from $\psi_0(\theta;\alpha)$.
		By similar arguments used in \cref{theorem:uipw-asym} and \cref{lemma:uipw-asymp-uniform-alpha}, we have
		$$\sup_{\theta\in\Theta}\lVert \nabla_\theta V_0(\theta;\alpha)-\nabla_\theta V_0(\theta;0) \rVert \le L \alpha$$
		for all $\alpha\in[0,\alpha_\epsilon]$, where $L$ is a finite constant.
		Evaluating at $\theta=\theta_0^\star(\alpha)$ and combining with the previous result yields
		$\lVert \theta_0^\star(\alpha)-\theta_0^\star(0)\rVert \le m^{-1}L\alpha$.
		Hence, for $C\coloneq m^{-1}L<\infty$ we have 
		$$\lVert \theta_0^\star(\alpha)-\theta_0^\star(0)\rVert \le C\alpha$$
		for all $\alpha\in[0,\alpha_\epsilon]$, establishing the drift bound.
	\end{proof}

	\begin{proof}[Proof of \cref{corollary:opt-theta-asym-alpha}]
		Consider the decomposition
		\begin{equation*}
			\theta_n^\star(\alpha_n)-\theta_0^\star(0) = \underbrace{\big\{\theta_n^\star(\alpha_n)-\theta_0^\star(\alpha_n)\big\}}_{\text{estimation error}} + \underbrace{\big\{\theta_0^\star(\alpha_n)-\theta_0^\star(0)\big\}}_{\text{estimand drift}}\mathperiod
		\end{equation*}
		
		We begin with the estimation error.
		\cref{theorem:opt-theta-asym} gives, for any fixed $\alpha\ge0$,
		\begin{equation*}
			\theta_n^\star(\alpha)-\theta_0^\star(\alpha) = (\P_n-\P_0) H_0(\alpha)^{-1}\phi_{\nabla\psi}\{\cdot;\theta_0^\star(\alpha),\alpha\} + o_p(n^{-1/2})\mathperiod
		\end{equation*}
		The first result of \ref{lemma:opt-theta-uniform-alpha} gives
		$$ \sup_{\alpha\in[0,\alpha_\epsilon]}\Big\lVert \{\theta_n^\star(\alpha)-\theta_0^\star(\alpha)\} - (\P_n-\P_0)\big[H_0(\alpha)^{-1}\phi_{\nabla\psi}\{\cdot;\theta_0^\star(\alpha),\alpha\}\big] \Big\rVert=o_p(n^{-1/2})\mathperiod$$
		Given that $\alpha_n\to0$, we have $\alpha_n\in [0,\alpha_\epsilon]$ with high probability and hence,
		\begin{equation*}
			\theta_n^\star(\alpha_n)-\theta_0^\star(\alpha_n) = (\P_n-\P_0) H_0(\alpha_n)^{-1}\phi_{\nabla\psi}\{\cdot;\theta_0^\star(\alpha_n),\alpha_n\} + o_p(n^{-1/2})\mathperiod
		\end{equation*}
		Next, \cref{lemma:opt-theta-uniform-alpha}(b) gives stochastic equicontinuity of the leading empirical process term at $\alpha=0$. Thus,
		\begin{equation*}
			(\P_n-\P_0) \big[ H_0(\alpha_n)^{-1}\phi_{\nabla\psi}\{\cdot;\theta_0^\star(\alpha_n),\alpha_n \} - H_0(0)^{-1}\phi_{\nabla\psi}\{\cdot;\theta_0^\star(0),0 \} \big]=o_p(n^{-1/2})\mathperiod
		\end{equation*}
		Combining the two previous displays yields
		\begin{equation*}
			\theta_n^\star(\alpha_n)-\theta_0^\star(\alpha_n) = (\P_n-\P_0)H_0(0)^{-1}\phi_{\nabla\psi}\{\cdot;\theta_0^\star(0),0 \}+o_p(n^{-1/2})\mathperiod
		\end{equation*}
		It remains to show that the estimand drift is negligible.
		\cref{lemma:opt-theta-uniform-alpha}(c) gives the bound
		$$\lVert \theta_0^\star(\alpha)-\theta_0^\star(0)\rVert \le C\alpha$$
		for all $\alpha\in[0,\alpha_\epsilon]$, for $C<\infty$.
		Evaluating at $\alpha=\alpha_n=o_p(n^{-1/2})$ gives 
		$$\lVert \theta_0^\star(\alpha_n)-\theta_0^\star(0)\rVert \le C\alpha_n=o_p(n^{-1/2})\mathperiod$$
		Therefore, the estimand drift does not contribute to the first-order asymptotic expansion.
		Returning to the decomposition, we conclude that 
		\begin{equation*}
			\theta_n^\star(\alpha_n)-\theta_0^\star(\alpha_n) = (\P_n-\P_0) H_0(0)^{-1}\phi_{\nabla\psi}\{\cdot;\theta_0^\star(0),0\} + o_p(n^{-1/2})\mathperiod
		\end{equation*}
		The final result follows from the Central Limit Theorem.
	\end{proof}
	
	\section{Proofs Regarding Data-Adaptive Tilting}
	\label{sec:proof-alpha-tilting}
	
	\begin{proof}[Proof of \cref{lemma:alpha-clip-op}]
		Let $L_n(\alpha)\coloneq \bar V_n(\alpha)-\kappa n^{-1/2}\bar\sigma_n(\alpha)$ so that $\alpha_n=\argmax_{\alpha\ge0} L_n(\alpha)$.
		We first show that profiling over $\theta$ preserves the relevant convergence rate.
		For each $\alpha\in[0,\delta]$, 
		\begin{align*}
			\lvert\bar{V}_n(\alpha) - \bar V_0(\alpha) \rvert = \lvert \sup_{\theta\in\Theta}V_n(\theta;\alpha)-\sup_{\theta\in\Theta} V_0(\theta;\alpha) \rvert \le \sup_{\theta\in\Theta} \lvert V_n(\theta;\alpha)-V_0(\theta;\alpha) \rvert\mathperiod
		\end{align*}
		Moreover, since $V_n(\theta;\alpha)-V_0(\theta;\alpha)=\psi_n(\theta;\alpha)-\psi_0(\theta;\alpha)$, it follows that 
		\begin{equation*}
			\sup_{\alpha\ge 0}\lvert \bar V_n(\alpha) - \bar V_0(\alpha)\rvert \le \sup_{\alpha\ge0}\sup_{\theta\in\Theta}\lvert \psi_n(\theta;\alpha)-\psi_0(\theta;\alpha)\rvert\mathperiod
		\end{equation*}
		By the global uniform consistency of $\psi_n(\theta;\alpha)$ we have that
		\begin{equation}
			\label{eqn:profile-bound-global}
			\sup_{\alpha\ge0}\lvert \bar V_n(\alpha)-\bar V_0(\alpha)\rvert = o_p(1)\mathperiod
		\end{equation}
		Moreover, by \cref{lemma:uipw-asymp-uniform-alpha} we have the sharper rate over $\alpha\in[0,\delta]$:
		\begin{equation}
			\label{eqn:profile-bound-delta}
			\sup_{\alpha\in[0,\delta]}\lvert \bar V_n(\alpha) - \bar V_0(\alpha)\rvert \le \sup_{(\theta,\alpha)\in\Theta\times[0,\delta]}\lvert \psi_n(\theta;\alpha)-\psi_0(\theta;\alpha)\rvert=O_p(n^{-1/2})\mathperiod
		\end{equation}
		
		We next show that the maximizer $\alpha_n$ lies in the interval $[0,\delta]$ with probability tending to $1$.
		By the global separation condition, there exists some $\epsilon_\delta>0$ such that 
		$$\inf_{\alpha>\delta}\{\bar V_0(0)-\bar V_0(\alpha)\} \ge \epsilon_\delta\mathperiod$$
		Then, because $\kappa\ge0$ and $\bar\sigma_n(\alpha)\ge0$, $L_n(\alpha)=\bar V_n(\alpha)-\kappa n^{-1/2}\bar\sigma_n(\alpha)\le \bar V_n(\alpha)$ for every $\alpha\ge0$.
		Hence,
		$$\sup_{\alpha>\delta} L_n(\alpha)\le \sup_{\alpha>\delta} \bar V_n(\alpha)\mathperiod$$
		Using \cref{eqn:profile-bound-global} and global separation,
		\begin{equation}
			\label{eqn:Ln-global}
			\sup_{\alpha>\delta} L_n(\alpha) \le \sup_{\alpha>\delta} \bar V_0(\alpha) + \sup_{\alpha\ge0}\lvert\bar V_n(\alpha) - \bar V_0(\alpha) \rvert \le \bar V_0(0) - \epsilon_\delta + o_p(1)\mathperiod
		\end{equation}
		On the other hand, $L_n(0)=\bar V_n(0) - \kappa n^{-1/2}\bar\sigma_n(0)$.
		By \cref{eqn:profile-bound-delta} taken at $\alpha=0$ and the assumed standard error bound we have $\bar V_n(0)-\bar V_0(0)=O_p(n^{-1/2})$ and $n^{-1/2} \bar\sigma_n(0)=O_p(n^{-1/2})$.
		Therefore
		\begin{equation}
			\label{eqn:Ln-0}
			L_n(0) = \bar V_0(0) + O_p(n^{-1/2})\mathperiod
		\end{equation}
		Combining \cref{eqn:Ln-global,,eqn:Ln-0}, we obtain $\P_0\{ \sup_{\alpha>\delta} L_n(\alpha) < L_n(0)\}\to1$ and hence
		\begin{equation*}
			\P_0(\alpha_n\in[0,\delta])\to 1\mathperiod
		\end{equation*}
		
		Condition on the high probability event, $\{\alpha_n\in[0,\delta]\}$. 
		It remains to show the local rate
		Define $r_n\coloneq \sup_{\alpha\in[0,\delta]} \lvert L_n(\alpha) - \bar V_0(\alpha)\rvert$.
		By \cref{eqn:profile-bound-delta} and $\sup_{\alpha\in[0,\delta]} n^{-1/2}\bar\sigma_n(\alpha)=O_p(n^{-1/2})$, we have $r_n=O_p(n^{-1/2})$.
		On the event $\{\alpha_n\in[0,\delta]\}$, we have $L_n(\alpha)\ge L_n(0)$ and
		\begin{align*}
			\bar V_0(0) - \bar V_0(\alpha_n) &= \{\bar V_0(0) - L_n(0)\} + \{L_n(0)-L_n(\alpha_n)\} + \{L_n(\alpha_n)-\bar V_0(\alpha_n)\} \\
			&\le r_n + 0 + r_n \\
			&= 2r_n\mathperiod
		\end{align*}
		By local linear separation, $\bar V_0(0) - \bar V_0(\alpha)\ge c\alpha$ for all $\alpha\in[0,\delta]$.
		Given that $\alpha_n \in [0,\delta]$, we obtain
		\begin{equation*}
			c \alpha_n\le \bar V_0(0) - \bar V_0(\alpha_n) \le 2r_n\mathperiod
		\end{equation*}
		Hence, $\alpha \le 2c^{-1}r_n=O_p(n^{-1/2})$ on an event with probability tending toward $1$.
		It follows that
		$$\alpha_n=O_p(n^{-1/2})\mathperiod$$
		
		Finally, since $\alpha_n=O_p(n^{-1/2})$, $P(\alpha_n < n^{-1/4})\to 1$.
		On this event, the clipped selector satisfies $\tilde\alpha_n=0$.
		Therefore $P(\tilde\alpha_n=0)\to1$. Because $\tilde\alpha_n\ge0$ this implies $\tilde\alpha_n=o_p(n^{-1/2})$.
	\end{proof}
	
	\section{Potential for Efficiency Gains in Deterministic Versus Stochastic Regimes}
	\label{sec:ipw-aipw-efficiency}
	
	Our proposed UIPW framework targets stochastic policy classes and uses tilting to stabilize longitudinal inverse-probability weights in high–decision-point settings.  
	A recurring empirical pattern in our simulations is that, relative to oracle or correctly specified parametric IPW benchmarks, the incremental efficiency gains from undersmoothing can be modest. 
	A key reason is structural: stochastic intervention targets are typically closer to the observed treatment mechanism, so the corresponding change-of-measure weights are less variable and often bounded, leaving less variance for augmentation to reduce.
	
	To isolate this phenomenon in the simplest possible setting, we conducted a point-exposure simulation comparing a deterministic estimand, $\P_0\{Y(1)\}$, to a continuum of incremental propensity score (IPS; \citealp{kennedy_nonparametric_2019}) stochastic estimands $\P_0[Y\{Q(\delta)\}]$ indexed by an odds-multiplier $\delta>1$.  
	IPS interventions provide a natural bridge between stochastic and deterministic regimes: as $\delta$ increases, the intervention shifts treatment probabilities toward $1$ and approaches an always-treat intervention wherever $\pi(x)>0$, so the IPS targets become increasingly near deterministic.  
	This setup therefore directly probes how the potential efficiency gain changes as we move from stochastic to near-deterministic regimes.
	
	We use the \cite{kang_demystifying_2007} design employed in \cite{kennedy_nonparametric_2019}'s simulation study for IPS interventions:
	$X=(X_1,\ldots,X_4)\sim N(0,I_4)$,
	$\P_0(A=1\mid X) = \expit(-X_1+0.5X_2-0.25X_3-0.1X_4)$,
	$(Y\mid A,X)\sim N\{\mu(X,A),1\}$, and
	$\mu( x,a)=200+a\{10+13.7(2x_1 + x_2 + x_3 + x_4)\}$.
	This design is known to induce variable propensities and highlight instability in deterministic weighting estimands.
	Of note, the original design includes an intercept of $200$, which makes the IPW estimator incredibly variable; we additional perform simulations with an intercept of $-10$ such that $\P_0\{\mu(X,1)\}=0$.
	
	We compare two estimands: $\psi_0^\text{Det} \coloneq \P_0\{Y(1)\}$, and  $\psi_0^{\text{IPS}}(\delta)\coloneq \P_0[Y\{Q(\delta)\}]$, where the IPS intervention replaces $\pi(X)$ with the odds-shifted treatment probability $q_\delta(X)=\delta\pi(X)/\{\delta\pi(X)+1-\pi(X)\}$.
	
	To focus on efficiency, we consider correctly specified parametric models for both the propensity score and outcome regression throughout our simulations. 
	We consider the following estimators. 
	For $\psi_0^{\text{Det}}$ we consider (i) standard IPW using $\pi_n$, and (ii) augmented IPW (AIPW) that additionally uses $\mu_n$ to debias.
	For $\psi_0^{\text{IPW}}(\delta)$ we consider: (i) IPS-IPW using a weighted mean representation: $\P_n[(\delta A + 1-A)/\{\delta\pi_n(X)+1-\pi_n(X)\}Y]$, and (ii) an EIF-based estimator based on \citet[Corollary 2]{kennedy_nonparametric_2019}.
	
	We performed 1000 Monte Carlo simulations for each $\delta$ with $\delta=2^1,2^{1.5},\ldots,2^{5}$.
	We summarized the Monte Carlo variance of each estimator and, for each estimand, reported the relative efficiency of the AIPW estimator to the IPW benchmark (\cref{fig:ips-aipw-ipw-efficiency}).
	Our results suggest that there is a larger potential for efficiency gains when targeting deterministic, rather than stochastic, regimes.
	
	\begin{figure}
		\centering
		\includegraphics[width=\linewidth]{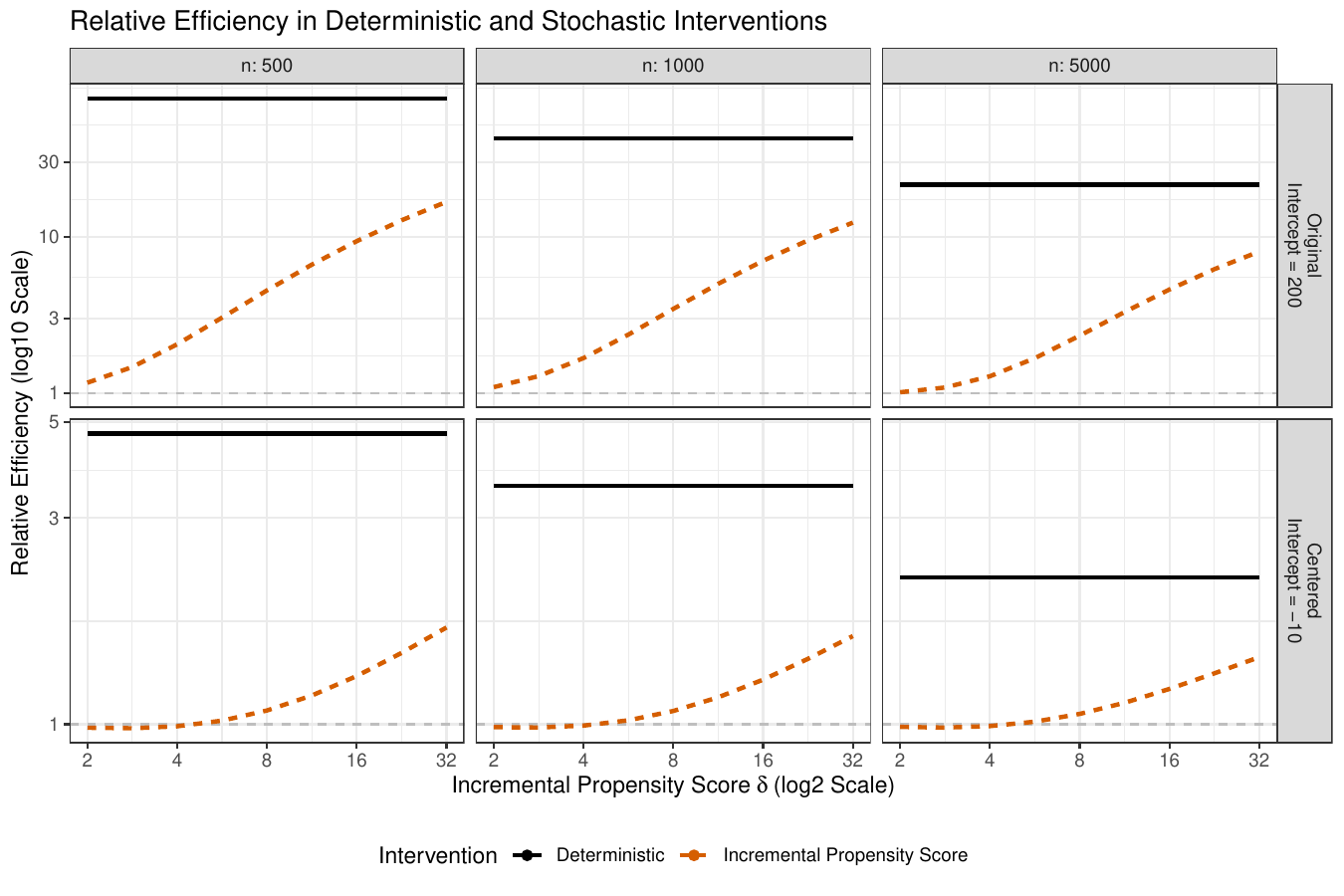}
		\caption{Relative efficiency, defined as $\operatorname{Var}(\psi_n^{\text{IPW}})/\operatorname{Var}(\psi_n^{\text{AIPW}})$, plotted versus the IPS odds-multiplier $\delta$.}
		\label{fig:ips-aipw-ipw-efficiency}
	\end{figure}
	
	At multiple decision points, the same qualitative mechanism typically amplifies: deterministic regime evaluation produces products of inverse propensity terms over time, which can yield highly variable (and often heavy-tailed) weights even when each single-step propensity is reasonable. 
	In contrast, stochastic regimes that remain close to the observed treatment mechanism tend to keep each step’s Radon–Nikodym factor nearer to $1$ (and often bounded), so the longitudinal weight product is substantially more stable. 
	As a result, baseline IPW estimators under stochastic regimes are already closer to their efficient limits, leaving less variance for EIF/AIPW-style corrections (or undersmoothing refinements) to remove.
	
\end{document}